\documentclass{amsart}

\setlength{\textwidth}{\paperwidth}
\addtolength{\textwidth}{-7cm}
\calclayout

\usepackage[utf8]{inputenc}
\usepackage[greek,english]{babel}

\usepackage{hyperref}
\usepackage{graphicx}
\usepackage{calligra, mathrsfs, bbm, bm}
\usepackage[shortlabels]{enumitem}
\usepackage{amsmath, amsthm, amssymb, amsopn, amsbsy, amsfonts}
\usepackage{thmtools}
\usepackage{slashed}
\usepackage{tikz-cd}
\usepackage{mathtools}
\usepackage{accents}
\allowdisplaybreaks

\DeclareMathAlphabet{\mathcalligra}{T1}{calligra}{m}{n}

\theoremstyle{plain}
\newtheorem{theorem}{Theorem}[section]
\newtheorem{proposition}[theorem]{Proposition}
\newtheorem{lemma}[theorem]{Lemma}
\newtheorem{corollary}[theorem]{Corollary}

\theoremstyle{definition}
\newtheorem{definition}[theorem]{Definition}

\theoremstyle{remark}
\newtheorem*{rk}{Remark}

\renewcommand{\d}{\ensuremath{\mathrm{d}}}
\renewcommand{\div}{\mathop{\mathrm{div}}}

\renewcommand{\c}{\mathrm{c}}

\renewcommand{\restriction}{\mathord{\upharpoonright}}
\renewcommand{\sc}{\mathrm{sc}}

\newcommand{\relmiddle}[1]{\mathrel{}\middle#1\mathrel{}}

\newcommand{\pair}[2]{\left\langle #1 \, , #2 \right\rangle}
\newcommand{\norm}[1]{\left\| #1 \right\|}
\newcommand{\nnorm}[1]{{\left\vert\kern-0.25ex\left\vert\kern-0.25ex\left\vert #1
    \right\vert\kern-0.25ex\right\vert\kern-0.25ex\right\vert}}

\newcommand{\R}{\mathbb{R}}

\newcommand{\N}{\mathbb{N}}

\newcommand{\rmG}{\mathrm{G}}

\newcommand{\rmS}{\mathrm{S}}
\newcommand{\rmT}{\mathrm{T}}

\newcommand{\Sol}{\mathsf{Sol}}

\newcommand{\scrC}{\mathscr{C}}
\newcommand{\scrD}{\mathscr{D}}
\newcommand{\scrE}{\mathscr{E}}
\newcommand{\scrF}{\mathscr{F}}

\newcommand{\scrI}{\mathscr{I}}

\newcommand{\scrM}{\mathscr{M}}
\newcommand{\scrN}{\mathscr{N}}

\newcommand{\scrS}{\mathscr{S}}

\newcommand{\calB}{\mathcal{B}}

\newcommand{\calD}{\mathcal{D}}

\newcommand{\calF}{\mathcal{F}}
\newcommand{\calG}{\mathcal{G}}

\newcommand{\calM}{\mathcal{M}}

\newcommand{\calO}{\mathcal{O}}

\newcommand{\calS}{\mathcal{S}}
\newcommand{\calT}{\mathcal{T}}
\newcommand{\calU}{\mathcal{U}}
\newcommand{\calV}{\mathcal{V}}
\newcommand{\calW}{\mathcal{W}}

\newcommand{\calY}{\mathcal{Y}}

\DeclareMathOperator{\grad}{grad}
\DeclareMathOperator{\Tr}{Tr}
\DeclareMathOperator{\supp}{supp}

\DeclareMathOperator{\sgn}{sgn}

\DeclareMathOperator{\id}{id}

\newcommand{\loc}{\mathrm{loc}}

\newcommand{\del}{\partial}

\newcommand{\tc}{\mathrm{tc}}
\newcommand{\fc}{\mathrm{fc}}
\newcommand{\pc}{\mathrm{pc}}

\newcommand{\Secs}{\Gamma}
\newcommand{\Lie}{\pounds}

\newcommand{\deriv}[2]{\ensuremath{\frac{\d #1}{\d #2}}}

\newcommand{\pderiv}[2]{\ensuremath{\frac{\partial #1}{\partial #2}}}

\title[``Two-sided'' characteristic problems for linear wave equations]{On the global ``two-sided'' characteristic Cauchy problem for linear wave equations on manifolds}
\author{Umberto Lupo}
\email{umberto.lupo@gmail.com}
\dedicatory{To \textgreek{Γεωργία}}
\date{\today}
\address{Formerly at the Albert Einstein Center for Fundamental Physics, Institute for Theoretical Physics, University of Bern, Sidlerstrasse 5, 3012 Bern, Switzerland}
\subjclass[2010]{58J45, 58J47, 35L15, 58Z05, 53C50, 81T20}

\begin{document}

\begin{abstract} The global characteristic initial value problem for linear wave equations on globally hyperbolic Lorentzian manifolds is examined, for a class of smooth initial value hypersurfaces satisfying favourable global properties.  First it is shown that, if geometrically well-motivated restrictions are placed on the supports of the (smooth) initial datum and of the (smooth) inhomogeneous term, then there exists a continuous global solution which is smooth ``on each side'' of the initial value hypersurface.  A uniqueness result in Sobolev regularity $H^{1/2+\varepsilon}_\loc$ is proved among solutions supported in the union of the causal past and future of the initial value hypersurface, and whose product with the indicator function of the causal future (resp.\ past) of the hypersurface is past compact (resp.\ future compact).  An explicit representation formula for solutions is obtained, which prominently features an invariantly defined, densitised version of the null expansion of the hypersurface.  Finally, applications to quantum field theory on curved spacetimes are briefly discussed.
\end{abstract}

\maketitle

\section{Introduction}

This paper streamlines, extends and supersedes the author's previous analysis \cite[Ch.\ 2]{lupo2015aspects} of the global characteristic Cauchy problem for (scalar) linear wave equations on globally hyperbolic Lorentzian manifolds, when the problem is posed on  hypersurfaces with favourable global properties.  Existence and uniqueness will be proved in the framework of what will be referred to as the \emph{two-sided characteristic Cauchy problem}.  

To explain the meaning of the latter, consider first the usual non-characteristic Cauchy problem for such equations.  If $M$ is a manifold, $\scrC \subset M$ is a hypersurface, and $P$ is a linear second-order differential operator on $M$ whose principal symbol yields a Lorentzian metric $g$ relative to which $\scrC$ is spacelike, then it is well known that, under standard global assumptions on $\scrC$ and on the Lorentzian manifold $\scrM = (M,g)$, the Cauchy problem for $P$ is well-posed on $\scrC$.  In particular, the solution given arbitrary Cauchy data and an arbitrary right-hand side $F$ may be obtained by ``merging'' a forward solution $\phi^+$, defined to the future of $\scrC$, with a backward solution $\phi^-$, defined to the past of $\scrC$.  Owing to the finite speed of propagation property for $P$, $\phi^+$ (resp.\ $\phi^-$) is zero outside the union of the future (resp.\ past) domains of influence of the following two sets: (a) the joint support of the initial data; (b) the support of $F$ multiplied with the indicator function $\bm{1}^+_\scrC$ (resp.\ $\bm{1}^-_\scrC$) of the future (resp.\ past) domain of influence of $\scrC$.\footnote{In keeping with standard Lorentzian geometric terminology, the latter will henceforth be referred to as the causal future (resp.\ past) of $\scrC$.}  It is in the sense just explained that the standard spacelike Cauchy problem is naturally ``two-sided''.

The situation is more intricate if the initial value hypersurface $\scrC$ is characteristic for $P$---i.e.\ null relative to $g$.  For, on the one hand, one cannot hope to freely prescribe the full first-order (i.e.\ transverse) Cauchy datum once the desired zeroth-order datum (the \emph{characteristic datum}) and right-hand side have been specified.  And, on the other hand, for a generic such $\scrC$ neither the existence nor the uniqueness of global or even semi-global solutions are guaranteed for arbitrary right-hand sides and characteristic data.  A well-known framework in which \emph{semi-global} existence and uniqueness are guaranteed, for arbitrary (modulo regularity assumptions) characteristic data and right-hand sides, is the \emph{Goursat problem}.  There, one restricts attention to a particular subclass of initial value hypersurfaces $\scrC$, the typical example being the one in which $\scrC$ is the boundary of the causal future (i.e.\ the ``light cone'') of a point.  Then, existence and uniqueness are guaranteed in the entire causal future of this point (i.e.\ ``inside the light cone'').  A generalisation of this classical Goursat problem was studied by H{\"o}rmander \cite{hormander1990remark}, whose results were recently extended, as well as reformulated in a global geometric language, by B{\"a}r and Wafo \cite{baer2015initial}.  What these works show is the semi-global existence and uniqueness of solutions of characteristic Cauchy problems posed on hypersurfaces which are achronal and whose causal future is ``compact in the past''.  Semi-global here means to the future of the hypersurface; that is, to ``one side'' of it.

\subsection{Main results}\label{mainres}

In this work, a new framework is described in which to investigate the \emph{global} existence and uniqueness of solutions to characteristic Cauchy problems.  The framework will be defined chiefly by: (a) a choice of a class of (smooth) characteristic hypersurfaces; (b) restrictions on the supports of the allowed right-hand sides and (perhaps more importantly) of the allowed characteristic data; (c) a new natural notion of well-posedness,\footnote{Strictly speaking, issues of continuous dependence of solutions will not be investigated in great detail here and the only result to this effect which shall be presented is Corollary \ref{corcontdependence}.} which might be referred to as ``causal well-posedness'' and which is in effect a generalisation of the usual notion valid for spacelike Cauchy surfaces.  Then, the main set of results in this paper (Theorems \ref{existencethmnodistrsol}, \ref{distrsolnthm} and \ref{uniquenessconcrete}) may be summarised by the statement that the global characteristic Cauchy problem on globally hyperbolic Lorentzian manifolds and for hypersurfaces in the class referred to in (a) is causally well-posed whenever the characteristic data and the right-hand sides satisfy the support conditions referred to in (b).

Points (a), (b) and (c) above will now be individually explained.  The conditions singling out the class of initial value hypersurfaces referred to in (a) are, essentially, \emph{achronality} together with a certain completeness condition on the null geodesic generators; it is important to note that the latter condition is \emph{never} satisfied in the scenarios considered in \cite{baer2015green}.  With $\bm{1}^{+}_\scrC$ (resp.\ $\bm{1}^-_\scrC$) denoting, as before, the indicator function of the causal future (resp.\ past) of $\scrC$, the support restrictions referred to in (b) can be broken down as follows: (b1) the support of the characteristic datum will be ``compact in time'' (i.e.\ both compact along future directions and along past directions) when regarded as a subset of the ambient Lorentzian manifold $\scrM$; (b2) the support of the right-hand side $F$ will be contained in the union of the causal future and causal past of $\scrC$ in $\scrM$; (b3) $\supp{(\bm{1}^+_\scrC F)}$ will be ``compact in the past'' (i.e.\ compact along past directions), and $\supp{(\bm{1}^-_\scrC F)}$ will be ``compact in the future'' (i.e.\ compact along future directions).  Finally, and modulo precise statements about regularity for the time being, ``causal well-posedness'' is the statement that solutions exist and are unique \emph{among} those distributional solutions which themselves satisfy support conditions analogous to (b2) and (b3).

Notice that if $\scrC$ were a smooth \emph{spacelike Cauchy surface}, then the support conditions (b1), (b2) and (b3) would be trivially satisfied by any initial data (which, in this case, would also include some independent normal derivative data) and right-hand side.  Similarly, in this case the notion of causal well-posedness just introduced is identical to the usual notion of well-posedness.  On the other hand, if $\scrC$ is not a spacelike Cauchy surface, conditions (b1), (b2) and (b3) will generally select proper subspaces of (choosing spaces of smooth functions for definiteness) $C^\infty(\scrC)$ and $C^\infty(M)$.  An example is shown in Figure \ref{setupfigure}, and note that there and henceforth the characteristic initial value hypersurface of interest will be denoted by $\scrN$ instead of $\scrC$ in order to highlight its null character. 

\begin{figure}\label{setupfigure}
\includegraphics[scale=0.58]{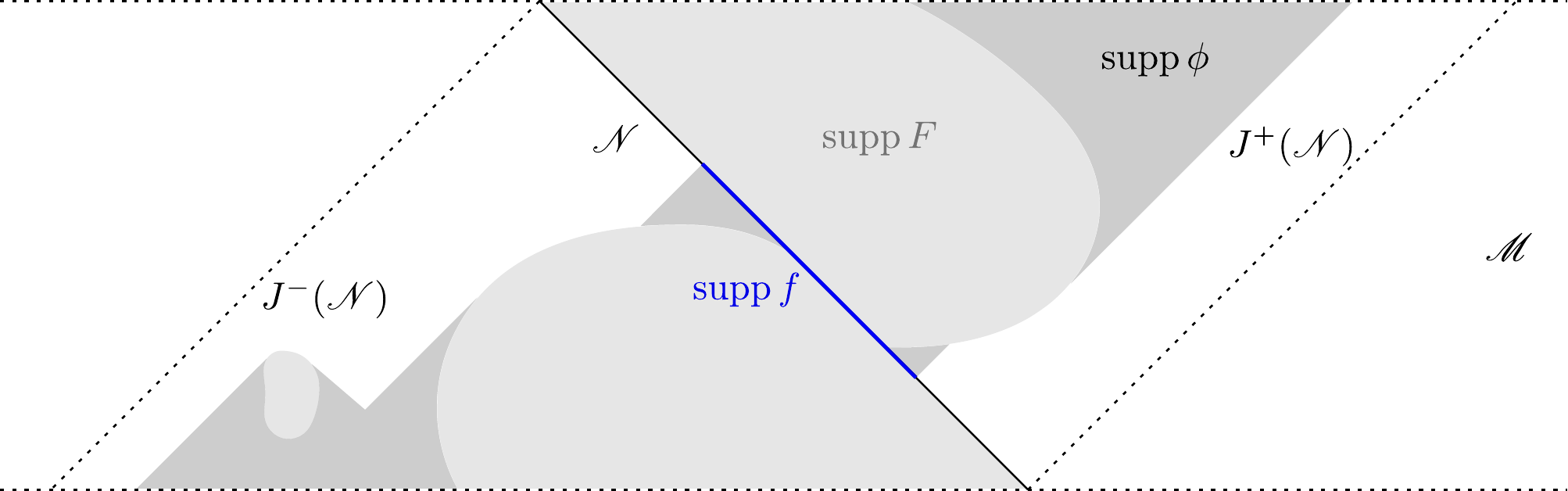}
\begin{caption}{\small{Setup for the two-sided characteristic Cauchy problem considered in this paper.  In this case, $\scrM$ is a ``temporal slab'' in (1+1)-dimensional Minkowski spacetime.  $\scrN$ is the characteristic initial value surface.  The thick blue portion of $\scrN$ and the light grey areas in $\scrM$ represent the supports of a smooth initial datum $f$ and of a smooth inhomogeneous term $F$ (respectively) to which the results in this paper apply.  The union of the light grey and dark grey areas contains the support of the resulting solution $\phi$, uniquely determined by the support properties described in the text.  Notice that $\scrN$ is not a Cauchy surface for $\scrM$ and $J^+(\scrN) \cup J^-(\scrN)$ is a proper subset of $M$.}}\end{caption}
\end{figure}

As in the standard spacelike Cauchy problem, here too the global solution will be obtained by merging a ``forward solution'' defined in the causal future of $\scrN$, with a ``backward solution'' defined in the causal past of $\scrN$.    This justifies the expression ``two-sided characteristic Cauchy problem'' to indicate the problem formulated and solved in this paper.  And in effect, the forward (resp.\ backward) solution $\phi^+$ (resp.\ $\phi^-$) may be seen as arising from semi-globally solving an appropriate forward (resp.\ backward) Goursat problem in the following way---illustrated, for simplicity, in the case of a vanishing right-hand side $F$: for a given characteristic datum $f$, let $\scrN^+$ (resp.\ $\scrN^-$) be \emph{any} achronal characteristic hypersurface which is compact in the past (resp.\ future) and which includes all points in the intersection between $\scrN$ and the causal future (resp.\ past) of $\supp{f}$.\footnote{Equivalently, these are the points of $\scrN$ which may be reached from $\supp{f}$ by travelling in the future (resp.\ past) along the null generators of $\scrN$.}  Then $\phi^\pm$ will be the extension by zero to the rest of $J^{\pm}(\scrN)$ of the unique solution to the forward/backward Goursat problem on $\scrN^\pm$ with characteristic datum equal to $f$ on $\scrN^\pm \cap \scrN$, and zero elsewhere on $\scrN^\pm$.

At the analytical level, the existence statements in this paper (Theorems \ref{existencethmnodistrsol} and \ref{distrsolnthm}) will be restricted for simplicity to smooth characteristic data and smooth right-hand sides.  The construction will show that both the forward and the backward solution can be extended to global smooth functions, making the ``merge'' (continuous and) smooth ``on each side'' of the characteristic hypersurface separately.  However, discontinuities in transverse derivatives across the hypersurface will arise in general. Consequently, uniqueness will have to be established in a setting of sufficiently low regularity.  Indeed, Theorem \ref{uniquenessconcrete} will be a uniqueness result valid among all distributions in the kernel of $P$ with the aforementioned support properties, with Sobolev regularity $H^{1/2+\varepsilon}_\loc$ for some $\varepsilon > 0$, and with vanishing trace on the initial value hypersurface.

The arguments for existence given here and in \cite[Ch.\ 2]{lupo2015aspects} have in common the heavy reliance on a circle of ideas used in an influential 1990 paper by Rendall \cite{rendall1990reduction} (see also \cite{chrusciel2012many} for a review).  Rendall was able to provide a local existence result applicable to characteristic Cauchy problems for \emph{quasi-}linear hyperbolic PDEs posed on a portion of the union of two transversely intersecting null hypersurfaces $\scrN_1$ and $\scrN_2$, namely the portion which lies to the causal future of the submanifold $\scrN_1 \cap \scrN_2$ (with similar statements of course with ``future'' replaced by ``past'').  In effect, the \emph{geometric} setup of Rendall's paper may be regarded as a special case of the one in the (generalised) Goursat problem treated in \cite{hormander1990remark,baer2015initial}.  Furthermore, both Rendall's approach and the one in \cite{baer2015initial} are based on a reduction to the inhomogeneous spacelike Cauchy problem with vanishing data in the far past.   However, in the case of \emph{smooth} data, Rendall's method immediately yields smooth (in the one-sided sense) solutions with no difficulty, whereas to establish the same using the energy-based methods in \cite{hormander1990remark,baer2015initial} would seemingly require more work.  Some quick comments on this issue and on the relation of the present work with other literature on characteristic problems will be given in Section \ref{exlit}.

The main technical tool in \cite{rendall1990reduction} was the celebrated extension theorem of Whitney's \cite{whitney1934analytic}.  The latter allows to construct, in the first step, a function solving the characteristic Cauchy problem to infinite order on the initial value hypersurface; then, the error given by the failure of this ``approximate solution'' to be a true solution away from the initial value hypersurface can be subtracted away by means of a reduction to the standard spacelike Cauchy problem.  In fact, as brought to the author's attention by A.\ D.\ Rendall after the work in \cite{lupo2015aspects} was completed, a considerably streamlined version of the arguments in \cite{rendall1990reduction} had later been presented in \cite{rendall1992stability}, in which it was shown that \emph{Borel's lemma} could equally well be used to construct the approximate solution.\footnote{And notice that perhaps a precursor to Rendall's method in the linear case may be found in the local existence proof for the classical Goursat problem sketched in \cite[Thm.\ 5.4.2]{friedlander1975wave}.}  Correspondingly, a global version of Borel's lemma is shown here to be sufficient to obtain the main existence result, Theorem \ref{existencethmnodistrsol}.

Typical arguments for uniqueness in characteristic Cauchy problems rely on adjoint formulae for $P$ which are simple to establish when the arguments are smooth functions.  Such formulae can be recast as identities---referred to in this paper as \emph{jump formulae}---showing that the action on a smooth function of the commutator of $P$ with multiplication by the indicator function of a regular domain $D$ yields a certain distribution concentrated on the topological boundary of $D$.  If such a commutator identity can be extended to less regular arguments, it may be used to prove low-regularity uniqueness for characteristic Cauchy problems.  This is the approach pursued in \cite{baer2015initial}.\footnote{Jump formulae for characteristic domains in hyperbolic problems have also recently been studied in \cite{lerner2017unique}, in the context of unique continuation for ill-posed characteristic problems.}  Here, it will be argued that it is sufficient to simply extend to rough solutions of $P$ the special case of this identity which is valid on smooth solutions of $P$.  In this way it will be possible to improve on what could be directly obtained using the methods in \cite{baer2015initial}.

Aside from being used to prove uniqueness, the jump formulae mentioned in the previous paragraph will eventually lead to a representation formula [Equation (\ref{repformulafinal})] neatly expressing the unique solution of a two-sided characteristic Cauchy problem in terms of the application of the retarded-minus-advanced Green operator to a sum of ``single-layer'' distributions (see Appendix \ref{gencommidentityapp}) supported on $\scrN$.  One of these single layers is defined in terms a certain density on $\scrN$, introduced here and called the \emph{expansion density} of $\scrN$ by analogy with the well-known (but \emph{not} invariantly defined) notion of null expansion in Lorentzian geometry.  In the case in which the tangential component of the vector field representing possible first-order terms in $P$ vanishes, the argument of the retarded-minus-advanced Green operator in the representation formula enjoys a remarkably simple scaling behaviour when the ambient Lorentzian metric is multiplied by a conformal factor which is constant on null generators (Theorem \ref{equivariancethm}).  Finally if, in addition, the expansion density of the hypersurface is identically zero, the representation formula becomes particularly simple.  Possible implications for quantum field theory on curved spacetimes will be discussed in Section \ref{expdensityhadamard}.

As a final note, it is worth mentioning that the careful treatment of the aforementioned lack of smoothness for the solutions to two-sided characteristic Cauchy problems has been a serious (and interesting) technical hurdle in certain literature on the rigorous quantisation of linear field theories on curved spacetimes, and in particular on spacetimes possessing null (event or Killing) horizons.  Indeed, existence, ``two-sided smoothness'', and uniqueness of solutions ``falling/propagating entirely through'' characteristic hypersurfaces have been assumed and used in the literature on the Hawking or Unruh effect; see, e.g., \cite[Ch.\ 7]{wald1994quantum} and references therein.  In particular, a precise understanding of what global degree of regularity can be generically expected from two-sided characteristic Cauchy problems was of crucial importance in the seminal work by Kay and Wald \cite{kay1991theorems} on the interplay between bifurcate Killing horizon structures and the so-called Hadamard condition on quantum states for linear Klein--Gordon fields.  This issue will be revisited in Section \ref{SASBareOK}.

\subsection{Plan of the paper}

The paper is organised as follows: Section \ref{generalitieswaveeqns} summarises known background on second-order hyperbolic partial differential equations on manifolds; Section \ref{waveeqnsdiffconsequencessec} is a self-contained exposition of (essentially known) facts on the \emph{propagation equations} obtained by evaluating linear wave equations and their differential consequences on characteristic hypersurfaces; the construction of global solutions to two-sided characteristic Cauchy problem is carried out in Section \ref{existencesec}; the regularity of the solutions is briefly examined in Section \ref{regularitythm}; two fundamental jump formulae are derived in Section \ref{commidentitiessec} (Theorem \ref{thmcommidentity} and Corollary \ref{corsecondcommutator}), the second of which is used in Section \ref{uniquenesslindipsec} to establish uniqueness; in Section \ref{repformulaeunivsec}, a first representation formula for solutions is derived (Proposition \ref{corprotrepformula}) which immediately yields a statement on continuous dependence (Corollary \ref{corcontdependence}) and which, upon using the other jump formula in Section \ref{commidentitiessec}, yields in Section \ref{repformulasubsec} the final representation formula in the homogeneous case; in Section \ref{expdensitysubsec}, the expansion density of a null hypersurface in a Lorentzian manifold is shown to be invariantly defined and its behaviour under certain changes in the ambient metric is studied, leading in Section \ref{universalitysubsec} to some remarks on universality in the two-sided characteristic Cauchy problem; the relation with the existing literature, together with some applications to quantum field theory on curved spacetimes, is discussed in Section \ref{applicationssec}.  There are four appendices: Appendix \ref{causcomplapp} covers material on Lorentzian causality theory needed in the main body of text; Appendix \ref{nullhypappx} collects properties of null hypersurfaces in Lorentzian manifolds; Appendix \ref{borelslemmaapp} contains a self-contained statement and proof of a global version of the Borel Lemma; a version of the divergence theorem applicable to densities on manifolds is recalled in Appendix \ref{divthmappendix}, together with its consequences on formal adjoints of differential operators.

\subsection{Notation and terminology}\label{notationsec}

Throughout, a ``manifold'' will always be a \emph{smooth}, second-countable, Hausdorff manifold.  The expression ``smooth hypersurface'' will be used to indicate a smoothly \emph{embedded} codimension-$1$ submanifold (without boundary).  A ``topological hypersurface'' will be an embedded topological submanifold of codimension $1$.  The topological boundary, closure and interior of a set $A$ in a topological space $X$ will be denoted by $\del{A}$, $\overline{A}$ and $A^{\circ}$ respectively.  Given a smoothly embedded submanifold $S$ of $M$, $N^*S \subseteq T^*M$ will denote the (total space of the) conormal bundle of $S$.

If $M$ is a manifold, $\mathfrak{X}(M)$ will denote the set of global $C^\infty$ vector fields on $M$.  The density bundle of $M$ will be denoted by $\calD M$; a (rough, $C^k$) global section of this bundle will be called a (rough, $C^k$) density on $M$.  For any $k \in \N_0 \cup \{ \infty\}$, $\Gamma^k_{(\c)}(E)$ will denote the vector space of $C^k$-regular (and compactly supported) sections of $E$.  The continuous dual of the LF-space $\Gamma_{\c}^\infty(\calD M \otimes E^*)$ will be called the space of \emph{distributional} sections of $E$ and denoted by $\Gamma^{-\infty}(E)$.  $\Gamma^{-\infty}(M \times \R)$ will be denoted by $\scrD'(M)$, and its elements simply \emph{distributions} on $M$.  $\scrE'(M)$ will denote the subspace of $\scrD'(M)$ consisting of compactly supported distributions.  When convenient, the notation $\pair{\phi}{\nu}$ will be used to indicate the evaluation of $\phi \in \scrD'(M)$ on an element $\nu \in \Secs^\infty(\calD M)$ such that $\supp{\nu} \cap \supp{\phi}$ is compact.  Square brackets will be used around scalar-valued functions to define multiplication operators on sections: namely, for any section $\phi$ and scalar function $v$, $[v]\phi$ is the section $x \mapsto v(x) \cdot \phi(x)$.

The signature convention $(+,-,\ldots,-)$ for Lorentzian metrics is adopted here.  If $(M,g)$ is a Lorentzian manifold, $\nabla$ will denote the Levi-Civita connection arising from $g$; ${}^\sharp$ and ${}^\flat$ will indicate index raising and lowering (respectively) using $g$.  The \emph{gradient}, \emph{divergence} and \emph{d'Alembert} operators associated with $g$ are (respectively) 
\begin{align*} &\grad : C^\infty(M) \ni \phi \mapsto [\d \phi]^\sharp \in \mathfrak{X}(M),\\
	&\div : \mathfrak{X}(M) \ni X \mapsto \Tr[\nabla X] = \nabla_a X^a \in C^\infty(M), \\
	&\square_g : C^\infty(M) \ni \phi \mapsto \div{\grad{\phi}} = \nabla_a \nabla^a \phi \in C^\infty(M).
\end{align*}
A Lorentzian manifold $(M,g)$ is said to be \emph{time orientable} if there exists a global, smooth, timelike vector field.  In this case, two global and continuous notions of ``future-directedness'' of tangent vector are possible on $(M,g)$, and for one such choice denoted by $\mathfrak{t}$ the triple $(M,g,\mathfrak{t})$ will be called a \emph{time-oriented} Lorentzian manifold.  As usual (see, e.g., \cite{oneill1983semi-riemannian}), for a subset $A$ of a time-oriented Lorentzian manifold, the chronological future/past of $A$ will be denoted by $I^{+/-}(A)$, the causal future/past of $A$ by $J^{+/-}(A)$, and the causal shadow of $A$ by $J(A) \doteq J^+(A) \cup J^-(A)$.  A time-oriented Lorentzian manifold $(M,g,\mathfrak{t})$ is said to be \emph{globally hyperbolic} if it does not allow for closed causal curves and if the set $J^+(p) \cap J^-(q)$ are compact for any two $p,q \in M$ \cite{bernal2007globally}.  Further results and notation are provided in Appendix \ref{causcomplapp} and in Appendix \ref{nullhypappx}.

A final note concerns the use of the letter $n$ throughout the text: while it will stand for a \emph{vector} field in Section \ref{waveeqnsdiffconsequencessec}, in most of Section \ref{existencesec} and in Appendix \ref{nullhypappx}, it will stand for a \emph{covector} field in the proof of Theorem \ref{distrsolnthm}, in Sections \ref{commidentitiessec}, \ref{uniquenesslindipsec}, \ref{repformulaeunivsec} and \ref{expdensunivsec}, and in Appendix \ref{divthmappendix}.

\section{Generalities on normally hyperbolic differential operators}\label{generalitieswaveeqns}

Whenever a manifold is equipped with a Lorentzian metric tensor, there arises a distinguished class of second-order differential operators. 

\begin{definition}[Normally hyperbolic operators \cite{baer2007wave}]\label{normhypdefn}
	Let $(M,g)$ be a Lorentzian manifold and $E \to M$ be a vector bundle.  A second-order linear differential operator $P : \Gamma^\infty(E) \to \Gamma^\infty(E)$ is said to be \emph{normally hyperbolic} if
	\begin{equation*}
	\sigma_P(\xi) = g^{-1}(\xi,\xi) \, \id_E \quad \text{for all } \xi \in T^*M,
	\end{equation*}
	where $\sigma_P :T^*M \to \mathrm{End}(E)$ denotes the principal symbol of $P$ \cite[Ch.\ 10]{nicolaescu2007lectures}, and $g^{-1} \in \Secs^\infty(TM \otimes TM)$ is the inverse metric tensor.
\end{definition}

As effectively first noticed by Leray \cite{leray1953hyperbolic} (for modern treatments, see \cite{baer2007wave, ginoux2009linear, ringstrom2009cauchy, baer2017lectures}), whenever $(M,g)$ is (time orientable and) globally hyperbolic such operators admit a well-posed initial value formulation on (smooth) spacelike Cauchy surfaces.  To wit: If $P$ is as in Definition \ref{normhypdefn} then, given (i) a smooth spacelike Cauchy surface $\scrC$, (ii) a smooth vector field $\nu$ along $\scrC$ and transverse to $\scrC$, (iii) a pair $(\varphi, \pi)$ of smooth (resp.\ distributional) sections of $E|_\mathscr{C}$ (the \emph{Cauchy data}), and (iv) a smooth (resp.\ distributional) section $F$ of $E$ (the \emph{inhomogeneity}), the system
\begin{equation}\label{Puequalsf} \begin{cases} P\phi = F \\ \phi\restriction_{\scrC} = \varphi \\ [\nabla_\nu \phi]\restriction_{\scrC} = \pi \end{cases} \end{equation}
admits a unique smooth (resp.\ distributional) and global solution.  Furthermore, if certain suitably topologized spaces of Cauchy data and of inhomogeneities are chosen together with a suitably topologized codomain for the solution operator, it is possible to prove that the latter is continuous.  This state of affairs is referred to as the \emph{well-posedness} of the Cauchy problem.

Although normally hyperbolic operators form a wide class, attention in this paper will be restricted for simplicity to the case of normally hyperbolic operators acting on smooth real-valued functions, i.e.\ on sections of the trivial line bundle $E = M \times \R$.  The most general such operator takes the form
\begin{equation}\label{normallyhyperbolicscalar} P = \square_g + X + [q],
\end{equation}
where $\square_g$ is the d'Alembert operator associated with $g$, $X$ is a smooth vector field, and $q$ is a smooth (real-valued) function.

\subsection{Formal adjoints}

The following definition is standard and is recalled here to set notation.

\begin{definition}\label{formaladjointdefn}
	Let $M$ be a manifold and $\mu$ be a nowhere-vanishing smooth volume density.  Let $E_1, \, E_2$ be vector bundles over $M$ and over a common field $\mathbb{K}$, and $E_1^*, \, E_2^*$ be the respective dual bundles.  Let $P : \Gamma^\infty(E_1) \to \Gamma^\infty(E_2)$ be a differential operator.  The \emph{formal adjoint} $P^\dagger$ of $P$ with respect to $\mu$ is the differential operator $P^\dagger : \Gamma^\infty(E_2^*) \to \Gamma^\infty(E_1^*)$ uniquely defined by
	\begin{equation}\label{geneqnadjoint} \int_M \beta (P\alpha) \, \mu = \int_M [P^\dagger\beta] (\alpha) \, \mu \end{equation}
	for all $\beta \in \Gamma^\infty(E_2^*)$ and $\alpha \in \Gamma^\infty(E_1)$ such that $\supp{\beta} \cap \supp{\alpha}$ is compact.
\end{definition}
\begin{rk} Distributionally, Equation (\ref{geneqnadjoint}) reads simply as $[P\alpha](\beta \mu) = \alpha( [P^\dagger \beta]\mu)$, and this formulation allows to extend $P$ by (weak-*) continuity to distributional sections of $E_1$. The resulting extension will still be denoted by $P$ and elements in its kernel will be referred to as \emph{distributional solutions} of $P$.
\end{rk}

If $E_1=E_2=E$ and there is a preferred vector bundle isomorphism $E \to E^*$ covering the identity, then $P^\dagger$ can be regarded as an operator $\Gamma^\infty(E) \to \Gamma^\infty(E)$, and $P$ is said to be \emph{formally self-adjoint} relative to this identification if $P^\dagger=P$.

Infinitesimally, at least whenever $M$ is orientable, one has the following equivalent characterisation of formal adjoints, which is a version of the \emph{Green--Vinogradov} formula---see \cite{alonso-blanco2002green, alonso-blanco2004green} and references therein.  $P^\dagger$ is the unique differential operator for which there exists a vector-field--valued bilinear and bidifferential operator $\calG : \Gamma^\infty(E_2^*) \times \Gamma^\infty(E_1) \to \mathfrak{X}(M)$, called here a \emph{Green vector field} for $P$, such that
\begin{equation}\label{greenvinogradov} \beta (P\alpha) -  [P^\dagger \beta] (\alpha) = {\div}_\mu \calG[\beta,\alpha], \end{equation}
where ${\div}_\mu$---the divergence operator on vector fields relative to a fixed density---is defined by Equation (\ref{defnofdivden}) in Appendix \ref{divthmappendix}.\footnote{Indeed, if Equation (\ref{greenvinogradov}) holds then Equation (\ref{geneqnadjoint}) follows upon applying the divergence theorem for densities (see Appendix \ref{divthmappendix}).}
 
Henceforth, whenever a background Lorentzian metric $g$ is introduced, all formal adjoints will be automatically understood to be so relative to the induced volume density $\mu_g$.  If $P$ is of the form given in Equation (\ref{normallyhyperbolicscalar}), its formal adjoint is  
	\begin{equation}\label{adjointnormhypeq} P^\dagger = \square_g - X + [q - \div X] = P - 2X - [\div X].
	\end{equation}
Indeed, this follows from the formal self-adjointness of $\square_g$ together with Corollary \ref{coradjointvecfield}.  Since $2X + [\div X] = 0$ if and only if $X = 0$, $P$ is formally self-adjoint if and only if $X=0$.  Equation (\ref{adjointnormhypeq}) then also allows to find a Green vector field for $P$, namely
\begin{equation}\label{greenformscalarnormhyp}
	j[\chi, \phi] \doteq \chi \,\grad{\phi} - \phi \, \grad{\chi} + \chi \phi X, \quad \chi, \phi \in C^\infty(M).
\end{equation}
For an analogous general result the reader may consult Lem.\ 3.2.1 in \cite{baer2007wave}.  Note that $j$ is antisymmetric if and only if $X=0$.

In Appendix \ref{gencommidentityapp}, some formulae are presented for the commutator between a generic partial differential operator and the operator of multiplication by the indicator function of a suitable domain.  When specialised to the case of (scalar) normally hyperbolic operators, such formulae will play a crucial role at several points in this paper.  Accordingly, let $(M,g)$ be a Lorentzian manifold, $\mu = \mu_g$ be the metric volume density, and $P$ be of the form given in Equation (\ref{normallyhyperbolicscalar}).  The fundamental identity is Equation (\ref{commidentity1}) in Corollary \ref{corcommidentitygen} when applied to this case.  Namely:
\begin{equation}\label{commidentity2}
\pair{[[\bm{1}_D], P]\phi}{\chi \mu_g} = \int_D \div j[\chi,\phi] \, \mu_g =  \int_{\del{D}} \big[\chi n^\sharp \phi - \phi n^\sharp \chi + \phi \chi \, n(X)\big] \, \iota_n \mu_g
\end{equation}
whenever $D$ is a domain with locally Lipschitz boundary in $N$, $n$ is a (possibly rough and/or almost everywhere defined) field of outward-pointing conormals to $\del D$, and $\phi, \chi \in C^\infty(M)$ are such that $\supp{\chi} \cap \supp{\phi} \cap \overline{D}$ is compact.

\subsection{Green hyperbolicity of normally hyperbolic operators}\label{greenhypsubsec}

As well as providing well-posedness for the spacelike Cauchy problem, normally hyperbolic operators on globally hyperbolic manifolds belong to the (wider) class of \emph{Green hyperbolic} operators studied in detail in \cite{baer2012classical,khavkine2014covariant,baer2015green}. Namely, it holds for any globally hyperbolic Lorentzian manifold $\scrM$ and normally hyperbolic operator $P : \Gamma^\infty(E) \to \Gamma^\infty(E)$ that there exist a \emph{retarded} ($+$) and an \emph{advanced} ($-$) \emph{Green operator} $\widehat{\rmG_\pm}$ for $P$; by definition, $\widehat{\rmG_\pm}$ is a linear map $\Gamma_{\c}^\infty(E) \to \Gamma^\infty(E)$ such that
	\begin{enumerate}[label=\textbf{(\roman*)}]
		\item $P \circ \widehat{\rmG_\pm} = \widehat{\rmG_\pm} \circ P\restriction_{\Gamma_{\c}^\infty(E)} = \id_{\Gamma_{\c}^\infty(E)}$;
		\item $\supp{\widehat{\rmG_\pm} \psi} \subseteq J^\pm(\supp{\psi})$ for all $\psi \in \Gamma_{\c}^\infty(E)$.
	\end{enumerate}
It can be shown that the retarded and advanced Green operators are uniquely determined by the requirements \textbf{(i)} and \textbf{(ii)}.  Note also that since $P^\dagger$ is also normally hyperbolic, it possesses operators with analogous properties.  $\widehat{\rmG_\pm}$ may be used to solve a particular Cauchy problem in which, given an inhomogeneity $F \in \Gamma_{\c}^\infty(E)$, a solution $\phi$ to $P\phi = F$ is sought which vanishes on any spacelike Cauchy surface $\mathscr{C}^\mp$ such that $\supp{F} \subset J^\pm(\mathscr{C}^\mp)$---equivalently, which vanishes outside $J^\pm(\supp{F})$.  Namely, $\phi_\pm \doteq \widehat{\rmG_\pm} F$ does the job.  Actually, similar statements may be made for larger classes of inhomogeneities as will now be recalled.  The reader is referred to Appendix \ref{causcomplapp} for notation on distinguished spaces of sections, and to \cite{baer2015green} for proofs.

\begin{theorem}\label{baerthm}
	If $P$ is a normally hyperbolic operator acting on sections of a vector bundle $E$ over a globally hyperbolic Lorentzian manifold, then it is bijective as a map $\Secs^{-\infty}_\pc(E) \to \Secs^{-\infty}_\pc(E)$ and as a map $\Secs^{-\infty}_\fc(E) \to \Secs^{-\infty}_\fc(E)$.  The resulting inverses are extensions 
	\begin{equation*} \rmG_+ : \Secs^{-\infty}_\pc(E) \to \Gamma^{-\infty}_\pc(E) \quad \text{and} \quad \rmG_- : \Gamma^{-\infty}_\fc(E) \to \Gamma^{-\infty}_\fc(E) \end{equation*}
	of $\widehat{\rmG_+}$ and $\widehat{\rmG_-}$ (respectively) which enjoy support properties analogous to \textbf{\emph{(ii)}} above.  Furthermore, $\rmG_+[\Secs^\infty_\pc(E)] = \Secs^\infty_\pc(E)$ and $\rmG_-[\Secs^\infty_\fc(E)] = \Secs^\infty_\fc(E)$. \qed
\end{theorem}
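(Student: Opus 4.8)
The plan is to construct the inverses $\rmG_\pm$ explicitly by duality against the Green operators of the formal adjoint $P^\dagger$, exploiting the support conditions to keep every pairing finite. Since $P^\dagger$ is again normally hyperbolic, it admits retarded and advanced Green operators $\widehat{\rmG^\dagger_\pm}$, and the key algebraic input is the adjunction $\pair{\widehat{\rmG_+}\psi}{\beta\mu_g} = \pair{\psi}{(\widehat{\rmG^\dagger_-}\beta)\,\mu_g}$, valid for $\psi\in\Gamma^\infty_\c(E)$ and $\beta\in\Gamma^\infty_\c(E^*)$; this follows from the defining relation (\ref{geneqnadjoint}) for formal adjoints together with property \textbf{(ii)}, because $J^+(\supp\psi)$ and $J^-(\supp\beta)$ meet in a compact set, so the boundary contributions vanish. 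Motivated by this, for $\psi\in\Gamma^{-\infty}_\pc(E)$ I define $\rmG_+\psi$ by $\pair{\rmG_+\psi}{\beta\mu_g}\doteq\pair{\psi}{(\widehat{\rmG^\dagger_-}\beta)\,\mu_g}$. This is well posed because $\supp(\widehat{\rmG^\dagger_-}\beta)\subseteq J^-(\supp\beta)$ meets the past-compact set $\supp\psi$ compactly, and continuity of $\widehat{\rmG^\dagger_-}$ on test sections shows $\rmG_+\psi$ is a genuine distribution. The construction of $\rmG_-$ on $\Gamma^{-\infty}_\fc(E)$ is entirely parallel, using $\widehat{\rmG^\dagger_+}$, so I focus on the retarded case.

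Next I would verify that $\rmG_+$ is a two-sided inverse of $P$ with the advertised support and that it maps into $\Gamma^{-\infty}_\pc(E)$. Both $P\rmG_+=\id$ and $\rmG_+P=\id$ follow by unwinding the pairing and using $P^\dagger\widehat{\rmG^\dagger_-}=\widehat{\rmG^\dagger_-}P^\dagger=\id$ on $\Gamma^\infty_\c(E^*)$; the only subtlety is that the distributional action of $P$ must be evaluated against the non-compactly-supported section $\widehat{\rmG^\dagger_-}\beta$, which is legitimate precisely because its support meets $\supp\psi$ compactly, so one may insert a cutoff equal to $1$ there. For the support statement, if $\supp\beta\cap J^+(\supp\psi)=\emptyset$ then by causality $\supp\psi\cap J^-(\supp\beta)=\emptyset$, whence the pairing vanishes and $\supp\rmG_+\psi\subseteq J^+(\supp\psi)$. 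The accompanying causality fact that $J^+$ of a past-compact set is again past-compact — which reduces, via $J^+(A)\cap J^-(p)\subseteq J^+(A\cap J^-(p))\cap J^-(p)$ and global hyperbolicity, to compactness of $J^+(K)\cap J^-(p)$ for $K$ compact (see Appendix \ref{causcomplapp}) — then shows $\rmG_+\psi$ is past-compact. Injectivity is the dual observation: if $P\phi=0$ with $\phi$ past-compact, then for every test $\beta$ one has $\pair{\phi}{\beta\mu_g}=\pair{\phi}{(P^\dagger\widehat{\rmG^\dagger_-}\beta)\mu_g}=\pair{P\phi}{(\widehat{\rmG^\dagger_-}\beta)\mu_g}=0$, so $\phi=0$.

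There remains the regularity statement $\rmG_+[\Gamma^\infty_\pc(E)]=\Gamma^\infty_\pc(E)$, which I expect to be the main obstacle, as it is invisible at the level of the duality pairing and requires propagating the smoothness of $\widehat{\rmG_+}$. The strategy is to localise: given $\psi\in\Gamma^\infty_\pc(E)$ and a point $p$, choose a relatively compact neighbourhood $U\ni p$; then $\supp\psi\cap J^-(\overline U)$ is compact by past-compactness, so one may pick $\theta\in C^\infty_\c(M)$ equal to $1$ on a neighbourhood of it. Splitting $\psi=\theta\psi+(1-\theta)\psi$, the first piece is compactly supported, on which $\rmG_+$ reduces to $\widehat{\rmG_+}$ by the adjunction and is therefore smooth; meanwhile $\supp((1-\theta)\psi)$ is disjoint from $J^-(U)$, so $J^+(\supp((1-\theta)\psi))\cap U=\emptyset$ and $\rmG_+((1-\theta)\psi)$ vanishes on $U$. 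Hence $\rmG_+\psi$ coincides with a smooth section near $p$, and past-compactness was already established; surjectivity onto $\Gamma^\infty_\pc(E)$ is then immediate, since any $\phi\in\Gamma^\infty_\pc(E)$ equals $\rmG_+(P\phi)$ with $P\phi\in\Gamma^\infty_\pc(E)$. Finally, that $\rmG_+$ extends $\widehat{\rmG_+}$ is contained in the adjunction, as the two definitions agree on $\Gamma^\infty_\c(E)$, and the future-compact statements for $\rmG_-$ follow by interchanging the roles of future and past throughout.
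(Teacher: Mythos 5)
The paper gives no proof of Theorem \ref{baerthm}: it is quoted from the literature, with the reader referred to \cite{baer2015green} for the argument. Your proposal is correct and is essentially the construction used there --- defining $\rmG_\pm$ on $\Gamma^{-\infty}_{\pc/\fc}(E)$ as transposes of the compactly supported Green operators of the (again normally hyperbolic) formal adjoint $P^\dagger$, checking the inverse and support properties through the pairing with the usual cutoff insertions, and recovering smoothness on $\Gamma^\infty_{\pc}(E)$ by the localisation $\psi=\theta\psi+(1-\theta)\psi$ --- so there is nothing to add.
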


If $P$ and the underlying Lorentzian manifold are as in Theorem \ref{baerthm}, one may use the maximally extended advanced and retarded Green operators for $P$ to define the (maximally extended) \emph{causal Green operator} for $P$.  This is
\begin{equation*}
\rmG \doteq \rmG_+ - \rmG_- : \Gamma_{\tc}^{-\infty}(E) \to \Gamma^{-\infty}(E),
\end{equation*}
extending $\widehat{\rmG} \doteq \widehat{\rmG_+} - \widehat{\rmG_-} : \Gamma_\c^{-\infty}(E) \to \Gamma^\infty(E)$, and it holds that $\supp{\rmG \tau} \subseteq J(\supp{\tau})$ for any $\tau \in \Gamma^{-\infty}_\tc(E)$.   It is also evident that $\rmG$ maps to distributional solutions.  In fact, $\rmG$ is surjective on the space of all distributional solutions of $P$ (the argument for this being essentially the same as the one given in Footnote \ref{footnoteinverseop}).

\section{Existence and uniqueness in the two-sided characteristic Cauchy problem for linear wave equations}\label{CIVPrendall}

Let $P$ be a normally hyperbolic differential operator acting on sections of a vector bundle $E$ over a Lorentzian manifold $(M,g)$.  By the very definition of normal  hyperbolicity, the characteristic set of $P$ consists of all non-zero covectors $\xi$ which are null for $g$ in the sense that $g^{-1}(\xi, \xi)=0$---equivalently, such that $\xi^\sharp$ is a ($g$-)null vector.  Hence, the characteristic Cauchy problem for $P$ is
\begin{equation*} \begin{cases} P\phi=F \\ \phi\restriction_{\scrN} = f, \end{cases} \end{equation*}
where $\scrN \subset M$ is a null hypersurface, and $F$ (resp.\ $f$) is a prescribed sections of $E$ (resp.\ of $E|_\scrN$).  In this paper, $E$ will be taken to be the trivial line bundle.

This section contains the main existence and uniqueness results in this paper.  As already discussed in the introduction, the existence result uses ideas presented in influential work of Rendall's \cite{rendall1990reduction,rendall1992stability}.  Rendall's arguments were local, but valid for a class of nonlinear hyperbolic equations.  Here, a geometric framework will be provided in which Rendall's arguments can be globalised in the linear case.

\subsection{Scalar wave equations and their differential consequences on characteristic hypersurfaces}\label{waveeqnsdiffconsequencessec}

It is known for generic partial differential operators of any order $k$ that, in local coordinates adapted to a characteristic hypersurface, the equation $P\phi=F$ cannot be rewritten in such a way as to express a $k$th order transverse derivative of $\phi$ at $\scrN$ in terms of the values at $\scrN$ of $\phi$ and of its other derivatives up to order $k$.  Instead, when evaluated on $\scrN$ the equation $P\phi=F$ reduces to compatibility conditions between $\phi\restriction_\scrN$, its derivatives along $\scrN$, and derivatives along $\scrN$ of the transverse derivatives of $\phi$ up to order $k-1$.  This is completely unlike the situation in the Cauchy problem posed on non-characteristic hypersurfaces.

This situation will now be studied in detail in the case of interest for the present paper, namely that of scalar normally hyperbolic operators on Lorentzian manifolds.  The final result of this subsection, Corollary \ref{characteristicdecompositioncor}, will later allow to view the evaluation on a null (initial value) hypersurface $\scrN$ of any normally hyperbolic equation (resp.\ of all its differential consequences) as a parametrised family of ordinary differential equations (ODEs) for the first (resp.\ higher-order) transverse derivative of the putative solution along the null generators of $\scrN$.

Let $(M,g)$ be a Lorentzian manifold of dimension $d+1$, and let $\{ e_0, \ldots, e_{d} \}$ be an arbitrary local frame for $TM$ around a point $p$.  Further let $\{ \epsilon^0, \ldots, \epsilon^d \}$ be its dual coframe, uniquely defined by the requirement that $\epsilon^\mu(e_\nu) \equiv \delta^\mu_\nu$ for all $\mu,\nu=0,\ldots,d$.  Then
\begin{equation*}
\nabla \grad \phi = \nabla[\epsilon^\mu(\grad \phi) e_\mu] = \nabla[{(\epsilon^\mu}^\sharp \phi) e_\mu] = \nabla[{\epsilon^\mu}^\sharp \phi] \otimes e_\mu + [{\epsilon^\mu}^\sharp \phi] \nabla e_\mu.
\end{equation*}
The following simple representation of the d'Alembert operator follows:
\begin{equation}\label{boxopbasisdualbasis}
\square_g \phi = \Tr{[\nabla \grad \phi]} = (e_\mu + [\div e_\mu])({\epsilon^\mu}^\sharp \phi).
\end{equation}

The interested reader might find it instructive, before considering hypersurfaces, to first examine the form of the d'Alembert operator along a single null \emph{curve}, in terms of an arbitrary local frame adapted to the curve.\footnote{Alternatively, the reader may proceed to the paragraph preceding Equation (\ref{DscrNt}) and refer back only to those calculations which are explicitly needed there.}  Suppose that $p \in M$, that $\gamma$ is a smoothly embedded null curve through $p$ with image $\Gamma \subset M$, and that a smooth frame field $\{n,t,m_1,\ldots,m_{d-1}\}$ with dual coframe field $\{ \nu, \tau, \mu^1, \ldots, \mu^{d-1} \}$ is defined around $p$ with the following properties: (a) $n$ is tangent to $\Gamma$ (and hence null there); (b) on $\Gamma$, $m_i$ ($i=1, \ldots, d-1$) is orthogonal to $n$.  It is clear that:
\begin{itemize}
\item on $\Gamma$, $\tau^\sharp$ is proportional to $n$, i.e.\ $\tau^\sharp = \nu(\tau^\sharp) n + \xi = g(\tau^\sharp, \nu^\sharp) n + \xi$, where $\xi$ is a local vector field vanishing on $\Gamma$;
\item on $\Gamma$, $\nu^\sharp$ does not belong to the null hyperplanes spanned by $n$ and the $m_i$'s, i.e.\ $\nu^\sharp = \nu(\nu^\sharp)n + \tau(\nu^\sharp)t + \mu^i(\nu^\sharp)m_i = g(\nu^\sharp,\nu^\sharp)n + g(\tau^\sharp,\nu^\sharp)t + g({\mu^i}^\sharp,\nu^\sharp)m_i$ with $g(\tau^\sharp,\nu^\sharp) \neq 0$ on $\Gamma$;
\item on $\Gamma$, each ${\mu^i}^\sharp$ belongs to those hyperplanes but is not proportional to $n$.
\end{itemize}
In this frame, and after a minor rearrangement, Equation (\ref{boxopbasisdualbasis}) reads
\begin{equation}\label{boxopbasisdualbasisnull}
\square_g \phi = \underbrace{n(\nu^\sharp \phi) + t(\tau^\sharp \phi)}_{\textrm{(I)}} + \underbrace{\div n \cdot \nu^\sharp \phi}_{\textrm{(II)}} + \underbrace{m_i\big({\mu^i}^\sharp \phi\big) + \big(\div t \cdot \tau^\sharp + \div m_i \cdot {\mu^i}^\sharp\big) \phi}_{\textrm{(III)}}.
\end{equation}
The properties specific to the chosen local frame may now be used to further polish the expression denoted by \textrm{(I)}.  To wit,
\begin{equation*} n(\nu^\sharp \phi) = g(\tau^\sharp,\nu^\sharp) \cdot n(t\phi) + n(g(\tau^\sharp,\nu^\sharp)) \cdot t\phi + \cdots \end{equation*}
where the ellipses indicate here and below that the remaining terms are smooth linear combinations (with coefficients independent of $\phi$) of terms of the type $X \phi$ or $Y(X \phi)$ with $X$ and $Y$ vector fields in the collection $\{n, m_1, \ldots, m_{d-1}\}$ and $n$ always to the left of any of the $m_i$'s---in particular, no ``$t$ derivatives'' are present in those terms.  It will be convenient to introduce the notation $\beta \doteq g(\tau^\sharp, \nu^\sharp)$.  Recalling that $\xi \doteq \tau^\sharp - \beta n$ vanishes on $\Gamma$, for any function $\phi$ it follows that $V \xi \phi = [V,\xi] \phi$ on $\Gamma$.  Therefore, on $\Gamma$ all the following equalities hold:
\begin{multline*} t(\tau^\sharp \phi) = t(\beta \cdot n\phi + \xi\phi) = \beta \cdot t(n\phi) + t(\xi \phi) + \cdots = \beta \cdot t(n\phi) + [t, \xi] \phi + \cdots \\
= \beta \cdot t(n\phi) + \tau([t, \xi]) \cdot t\phi + \cdots  = \beta \cdot t(n\phi) + g(\tau^\sharp,[t, \xi]) \cdot t\phi + \cdots \\
= \beta \cdot  t(n\phi) + \beta g(n,[t, \xi]) \cdot t\phi + \cdots \ .
\end{multline*}
Putting together the results so far, and commuting vectors once more,
\begin{align*} \textrm{(I)} &= \beta \big\{n(t\phi) + t(n\phi)\big\} + \big\{\beta g(n,[t, \xi]) + n\beta\big\} \cdot t\phi + \cdots \\
&= 2\beta \cdot n(t\phi) + \beta \cdot [t,n]\phi +  \big\{\beta g(n,[t, \xi]) + n\beta\big\} \cdot t\phi + \cdots \\
&= 2\beta \cdot n(t\phi) + \Big\{\beta^2g(n,[t,n]) + \beta g(n,[t, \xi]) + n\beta\Big\} \cdot t\phi + \cdots
\end{align*}
on $\Gamma$.  Finally, one can deal with expressions \textrm{(II)} and \textrm{(III)} in a similar way.  In particular (and again with all equalities valid on $\Gamma$),
\begin{align*}
    \textrm{(II)} &= \beta \div{n} \cdot t \phi + \cdots \\
    \text{and} \qquad m_i\big({\mu^i}^\sharp \phi\big) &=
    m_i\left(\nu\big({\mu^i}^\sharp\big)n\phi + \mu^j\big({\mu^i}^\sharp\big)m_j\phi\right) \\
    &= g\big({\mu^i}^\sharp, \nu^\sharp\big)[m_i, n] \phi + \cdots \\
    &= \beta g\big({\mu^i}^\sharp, \nu^\sharp\big)g(n,[m_i, n]) \cdot t \phi + \cdots \ .
\end{align*}
As a result Equation (\ref{boxopbasisdualbasisnull}), when both sides are evaluated on $\Gamma$, becomes
\begin{equation}\label{boxopbasisdualbasisode} \square_g\phi =  2\beta \cdot n(t\phi) + \zeta \cdot t\phi + D \phi,
\end{equation}
where
\begin{equation}\label{zeta}\zeta \doteq \beta\Big\{\beta g(n,[t,n]) + g(n,[t, \xi]) + \div n + g\big({\mu^i}^\sharp, \nu^\sharp\big)g(n,[m_i, n])\Big\} + n\beta,\end{equation}
while $D$ is a second-order differential operator involving compositions of the vector fields $n$, $m_1$, \ldots, $m_{d-1}$ but no $m_i (n \phi)$ terms.

Suppose now that: $\gamma$ is parametrised as an integral curve of $n$; $P = \square_g + X + [q]$ with $X$ a smooth vector field and $q$ a smooth function; $F$ is a smooth function defined in a neighbourhood of $\Gamma$.  Let the following information be known about a $C^2$ function $\phi$ defined in a neighbourhood of $\Gamma$:
\begin{itemize}
\item its values on $\Gamma$;
\item the values of $m_i \phi$ and $m_i(m_j \phi)$ on $\Gamma$ ($i,j=1, \ldots, d-1$);
\item $P \phi = F$ on $\Gamma$ (but not necessarily elsewhere).
\end{itemize}
[Notice that the above data also uniquely determine $n \phi$, $n(n\phi)$ and $n(m_i \phi)$ on $\Gamma$.] Then, denoting the parameter along $\gamma$ by $s$ so that $n$ is just $\deriv{}{s}$ there, and decomposing the vector field $X$ in the chosen frame as $X = \tau(X)t + \tilde{X} = \beta g(n,X)t + \tilde{X}$, Equation (\ref{boxopbasisdualbasisode}) immediately yields
\begin{equation} 0 = (P \phi - F) \circ \gamma = 2\beta \deriv{(t\phi \circ \gamma)}{s} + \Big\{[\zeta + \beta g(n, X)] \cdot t\phi + D' \phi - F\Big\} \circ \gamma, \label{prototypepropagationeq}\end{equation}
where $D' \doteq D + \tilde{X} + [q]$ and $\zeta$ is as defined above. Equation (\ref{prototypepropagationeq}) is simply a first-order ODE determining the evolution of the only unknown $t \phi$ along $\gamma$.  This ODE is the prototype ``propagation equation''---to use the same terminology as in \cite{rendall1990reduction} and in related literature---for a linear, second-order, normally hyperbolic PDE along an arbitrary null curve and in a given choice of local frame.  Conversely, of course, any $\phi$ satisfying Equation (\ref{prototypepropagationeq}) will also solve $P \phi = F$ on $\Gamma$.

The situation of principal interest, namely that involving a smooth null \emph{hypersurface} $\scrN \subset M$, will now be examined.  Assume there exists a non-zero vector field $t$, defined and smooth in an open set $\calU$ with $\scrN \cap \calU \neq \emptyset$, and transverse to $\scrN$.\footnote{No further restrictions will be placed on the causal character of $t$ in this subsection.}  Analogous arguments to the ones used in the proof of Proposition \ref{nullhypglobalnullfield} show that $t$ uniquely singles out a smooth nowhere-vanishing section $n$ of the null line bundle of $\scrN \cap \calU$, such that $g(n,t) = 1$.  Lie transporting $n$ off $\scrN$ along the flow of $t$ further defines an extension of $n$---still denoted by $n$ in what follows---to a vector field defined and smooth in a sufficiently small open neighbourhood of $\scrN \cap \calU$ in $\calU$; by construction, $[t,n]=\Lie_t n=0$ there.  By shrinking $\calU$ if necessary, this neighbourhood may be assumed to be the whole of $\calU$.  In a similar way (again shrinking $\calU$ if necessary), $t$ also singles out a smooth one-form $\tau$ on $\calU$ by the requirements that $\tau(t) = 1$ and that $\ker \tau$ be invariant under the pushforward by the flow of $t$.  Let $m_1, \ldots, m_{d-1}$ be smooth vector fields on $\scrN \cap \calU$ commuting with the restriction of $n$ to $\scrN$, and such that $\{n, m_1, \ldots, m_{d-1}\}$ is a local frame field for $T \scrN$.  Once more (and possibly after shrinking $\calU)$, Lie transporting $m_1, \ldots, m_{d-1}$ off $\scrN$ along the flow of $t$ promotes them to smooth vector fields on $\calU$ (denoted by the same letters), and $\calF \doteq \{n, t, m_1, \ldots, m_{d-1}\}$ is a frame field for $\calU$.  $\tau$ was chosen so that $\tau(t)=1$, $\tau(n) = \tau(m_i)= 0$ for each $i=1,\ldots,d-1$.  Hence, the dual coframe field to $\calF$ is $\calF^* \doteq \{\nu, \tau, \mu^1, \ldots, \mu^{d-1}\}$ for some $\nu, \mu^1, \ldots, \mu^{d-1}$.  Moreover, $\tau^\sharp = n$ on $\scrN \cap \calU$ and hence also $\beta = g(\tau^\sharp, \nu^\sharp) = 1$ there.  It is then not hard to see that, on $\scrN \cap \calU$, the function $\zeta$ defined in (\ref{zeta}) simply equals $g(n,[t,\tau^\sharp]) + \div n$.  Accordingly, a statement analogous to Equation (\ref{boxopbasisdualbasisode}) holds: defining the second-order differential operator
\begin{equation} D_{(\scrN ,t)} \doteq \square_g - 2nt - [g(n,[t,\tau^\sharp]) + \div n] \cdot t,\label{DscrNt}\end{equation}
and letting $i : \scrN \cap \calU \to \calU$ be the inclusion, one sees that $i^*[D_{(\scrN ,t)}\phi]=0$ whenever $i^*\phi = 0$.  It follows---from the Peetre theorem \cite{peetre1959characterisation} if one wishes to adopt such an abstract viewpoint---that there exists a unique linear and at most second-order differential operator $\Delta_{(\scrN ,t)}$ on $\scrN$ such that $i^* \circ D_{(\scrN ,t)} = \Delta_{(\scrN,t)} \circ i^*$.  More generally, if $L$ is a differential operator on $M$, $i : S \to M$ is a smoothly embedded submanifold, and $i^*[L \chi] = 0$ whenever $i^*\chi = 0$, there exists a unique differential operator $\Lambda$ on $S$, of at most the same order as $L$, such that $i^* \circ L = \Lambda \circ i^*$.  The phrase ``$L$ pulls back to $\Lambda$'' or, symbolically, $i^*L = \Lambda$, will be used to describe this situation compactly.  The following theorem, which is a ``null hypersurface version'' of Equation (\ref{prototypepropagationeq}), follows.
\begin{theorem}\label{characteristicdecompositionthm}
    Let $(M,g)$ be a Lorentzian manifold, $\scrN$ be a smooth null hypersurface, and $t$ be a smooth non-zero vector field everywhere transverse to $\scrN$.  Let $P \doteq \square_g + X + [q]$, where $X$ is a smooth vector field and $q$ is a smooth function. All objects apart from the metric $g$ need only be defined in a suitable open set $\calU$ in $M$ intersecting $\scrN$, and the embedding of $\scrN \cap \calU$ into $\calU$ will be denoted by $i$.  Then
     \begin{equation*} P \phi = 2n(t\phi) + K_{(\scrN,t,X)} \cdot t\phi + D_{(\scrN,t,X,q)}\phi,\end{equation*}
    where $K_{(\scrN,t,X)} \doteq g(n,[t,\tau^\sharp]) + \div n + g(n, X)$, $D_{(\scrN,t,X,q)} \doteq D_{(\scrN,t)} + X - g(n,X)t + [q]$, $D_{(\scrN,t)}$ is defined in Equation (\ref{DscrNt}), and the vector fields $n$ and $\tau^\sharp$ are obtained in a canonical way from the pair $(\scrN,t)$ and the metric $g$ in the manner described above.  $D_{(\scrN,t,X,q)}$ is a second-order differential operator which pulls back to a differential operator on $\scrN \cap \calU$.
      More generally, for any $r \in \N_0$,
      \begin{equation}\label{highorder} t^rP\phi = 2n(t^{r+1}\phi) + K^{(r)} \cdot t^{r+1}\phi + \sum_{k=0}^r D^{(r)}_k t^k \phi, \end{equation}
      where each $K^{(r)}$ is a smooth function independent of $\phi$ and each $D^{(r)}_k$ is a linear differential operator of order at most $2$, independent of $\phi$, which pulls back to a differential operator on $\scrN \cap \calU$.
\end{theorem}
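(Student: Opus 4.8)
The plan is to induct on $r$, taking the first assertion of the theorem as the base case $r=0$, with $K^{(0)}=K_{(\scrN,t,X)}$ and $D^{(0)}_0=D_{(\scrN,t,X,q)}$; the pull-back property of the latter is what was verified immediately before the statement, together with the elementary observations that $X-g(n,X)t$ is tangential along $\scrN$ and that multiplication operators always pull back. The structural input that drives the induction is the Lie-transport construction of the adapted frame: since $n$ and the $m_i$ were propagated off $\scrN$ along the flow of $t$, one has $[t,n]=0$ and $[t,m_i]=0$, so $t$ commutes with every tangential frame field. Hence, for a tangential operator $A$ (one built from $n,m_1,\ldots,m_{d-1}$ with smooth coefficients), $[t,A]$ is again tangential and of the same order, being obtained by differentiating the coefficients of $A$ along $t$. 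I also record the normal-form criterion that an order-$\le 2$ operator $L=A+Bt+Ct^2$, with $A,B$ tangential and $C$ a function, pulls back to $\scrN\cap\calU$ if and only if $B\restriction_{\scrN}=0$ and $C\restriction_{\scrN}=0$; in particular, every purely tangential operator pulls back.

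For the inductive step I apply $t$ to the order-$r$ identity. The leading term behaves perfectly: $tn=nt$ gives $t\cdot 2n(t^{r+1}\phi)=2n(t^{r+2}\phi)$. The term $K^{(r)}t^{r+1}\phi$ produces $K^{(r)}t^{r+2}\phi+(tK^{(r)})t^{r+1}\phi$, the first a candidate new top coefficient and the second a multiplication operator dropping into slot $r+1$. Each remaining term splits as $t\cdot D^{(r)}_k t^k\phi=D^{(r)}_k t^{k+1}\phi+[t,D^{(r)}_k]t^k\phi$; the summand $D^{(r)}_k t^{k+1}\phi$ is already in the desired form, since $D^{(r)}_k$ pulls back by hypothesis and may be assigned whole to slot $k+1$.

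The one delicate point is the commutator $[t,D^{(r)}_k]$, and this is where I expect the main obstacle to lie. It is still of order $\le 2$, because commuting with a vector field never raises order, but it does \emph{not} itself pull back: writing $D^{(r)}_k=A_k+B_k t+C_k t^2$ gives $[t,D^{(r)}_k]=[t,A_k]+[t,B_k]\,t+(tC_k)\,t^2$, and although $B_k\restriction_{\scrN}=0$ and $C_k\restriction_{\scrN}=0$, there is no reason for $[t,B_k]$ or $tC_k$ to vanish on $\scrN$. The resolution is to apply the commutator to $t^k\phi$ and push every factor of $t$ to the right --- legitimate precisely because $t$ commutes with the tangential frame --- obtaining
\begin{equation*}
[t,D^{(r)}_k]\,t^k\phi=[t,A_k]\,t^k\phi+[t,B_k]\,t^{k+1}\phi+(tC_k)\,t^{k+2}\phi .
\end{equation*}
Now every coefficient is tangential or a function, hence pulls back automatically, and the three pieces sit at the shifted slots $k$, $k+1$, $k+2$: the transverse derivatives that spoiled the pull-back property of $[t,D^{(r)}_k]$ have been harmlessly converted into higher powers of $t$ acting on $\phi$. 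Collecting all contributions at slots $j\le r+1$ therefore yields operators $D^{(r+1)}_j$ that are sums of pulling-back operators of order $\le 2$, as required.

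Finally one checks that the top structure separates cleanly. Tracking which contributions land at slot $r+2$ applied directly to $t^{r+2}\phi$, one finds exactly three: the leading $2n$, the order-$0$ term $K^{(r)}$, and the order-$0$ term $tC_r$ coming from the commutator with $k=r$ (all shift terms and all other commutator pieces land at slots $\le r+1$). Since the sole order-$1$ contribution there is the leading $2n$, it may be kept explicit, while $K^{(r+1)}\doteq K^{(r)}+tC_r$ absorbs the two functions; no order-$1$ remainder survives to obstruct the split. This closes the induction, the whole argument resting on the single observation that commuting $t$ past a pulling-back operator trades the offending transverse derivatives for higher powers of $t$ carrying tangential coefficients.
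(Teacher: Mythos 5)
Your proof is correct and follows the same inductive skeleton as the paper's: base case $r=0$ from the frame computation, then apply $t$, use $[t,n]=0$ for the leading term, shift $D^{(r)}_k$ to slot $k+1$, and reduce everything to controlling the commutators $[t,D^{(r)}_k]$. Where you genuinely diverge is in how those commutators are handled. The paper asserts that each $[t,D^{(r)}_k]$ has order at most $1$ and splits it as $h_k\cdot t+R_k$ with $h_k$ a function and $R_k$ first-order tangential, so that nothing lands at slot $r+2$ and $K^{(r+1)}$ is implicitly taken equal to $K^{(r)}$. Your normal form $D^{(r)}_k=A_k+B_kt+C_kt^2$ (legitimate on $\calU$ because $\{n,t,m_1,\ldots,m_{d-1}\}$ is a global frame there and $t$ commutes with the tangential frame fields) shows that $[t,D^{(r)}_k]=[t,A_k]+[t,B_k]\,t+(tC_k)\,t^2$ is generically of order $2$, and that the piece $(tC_r)\,t^{r+2}$ does reach the top slot, forcing $K^{(r+1)}=K^{(r)}+tC_r$; this is consistent with the theorem as stated, since $K^{(r)}$ is only required to be some smooth function. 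Indeed $C_0=g^{-1}(\tau,\tau)$ vanishes on $\scrN$ but not off it in general, so $tC_0$ need not vanish on $\scrN$ and the correction is genuinely present already at $r=1$. Your bookkeeping is therefore the more careful of the two: it buys an argument that does not rely on any drop in the order of the commutator, at the mild cost of fixing a frame-adapted normal form and verifying the pull-back criterion for operators of the form $A+Bt+Ct^2$.
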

\begin{proof}
Equation (\ref{highorder}) will be proved by induction on $r$.  The base case $r=0$ has already been proven---with $K^{(0)} \doteq K_{(\scrN,t,X)}$ and $D^{(0)}_0 \doteq D_{(\scrN,t,X,q)}$.  Suppose the statement holds true for $r$. Then, since $[t,n] = 0$,
\begin{align*}t^{r+1}P &= 2nt^{r+2} + K^{(r)} \cdot t^{r+2} + tK^{(r)} \cdot t^{r+1} + \sum_{k=0}^r t D^{(r)}_k t^k.
\end{align*}
The last two terms can be rewritten as
\begin{align*}
t(K^{(r)}) \cdot t^{r+1} + \sum_{k=0}^r t D^{(r)}_k t^k &= tK^{(r)} \cdot t^{r+1} + \sum_{k=1}^{r+1} D^{(r)}_{k-1} t^{k} + \sum_{k=0}^r [t, D^{(r)}_k] t^k \\
&= \{D^{(r)}_r + [tK^{(r)}]\} t^{r+1} + \sum_{k=1}^{r} D^{(r)}_{k-1} t^k + \sum_{k=0}^{r}[t, D^{(r)}_k] t^k.
\end{align*}
Clearly, one need only show that the last term is of the required form, i.e.\ that $\sum_{k=0}^r[t, D^{(r)}_k] t^k = \sum_{k=0}^{r+1} L_k t^k$ where each $L_k$ is a differential operator of order at most $2$ and which pulls back to a differential operator on $\scrN \cap \calU$.  Each $[t, D^{(r)}_k]$ is of order at most $1$.  Therefore, shrinking the neighbourhood $\calU$ further if necessary, $[t, D^{(r)}_k]$ admits a ``decomposition'' adapted to $t$ and $\scrN \cap \calU$ in the sense that $[t, D^{(r)}_k] = h_k \cdot t + R_k$ where each $h_k$ is a smooth function and each $R_k$ is a first-order differential operator on $\calU$ which pulls back to a differential operator on $\scrN \cap \calU$. [To prove this, one may first cover $\scrN \cap \calU$ by open neighbourhoods equipped with coordinates adapted to $\scrN \cap \calU$ and to the flow of $t$; expressing the differential operator locally in each coordinate neighbourhood, and running a simple partition of unity argument, then yields the desired global form.]  Thus,
\begin{equation*} \sum_{k=0}^r[t, D^{(r)}_k] t^k = \sum_{k=1}^{r+1} h_{k-1} \cdot t^{k} + \sum_{k=0}^r R_k t^k, \end{equation*}
and the claim immediately follows.
\end{proof}

\begin{corollary}\label{characteristicdecompositioncor}
Let $(M,g)$, $\scrN$, $t$, $n$, $P$, $\calU$ and $i: \scrN \cap \calU \to \calU$ be as in the statement of Theorem \ref{characteristicdecompositionthm}.  Let also $F$ be a given smooth function on $\calU$.  Then, for any $r \in \N_0$, there exist smooth functions $\kappa^{(r)}$ and a collection $\{\Delta^{(r)}_k\}_{k = 0, \ldots, r}$ of differential operators on $\scrN \cap \calU$ of order at most $2$, such that
\begin{equation}\label{arbitraryorderpdeonN}i^*t^r[P \phi - F] = 2\mathfrak{n}\phi_t^{(r+1)} + \kappa^{(r)} \cdot \phi_t^{(r+1)} + \sum_{k=0}^r\Delta^{(r)}_k\phi_t^{(k)} - i^*[t^rF] \quad \forall \ \phi \in C^\infty(\calU),\end{equation}
    where $\mathfrak{n} \doteq i^*n \in \mathfrak{X}(\scrN \cap \calU)$ and $\phi_t^{(r)} \doteq i^*[t^r\phi]$.  Consequently, $\phi$ solves $P\phi = F$ on $\scrN \cap \calU$ to order $r$ if and only if, for every $\ell = 0, \ldots, r$, its transverse derivative $\phi_t^{\ell + 1}$ restricted to $\scrN \cap \calU$ is compatible with the collection $\{\phi_t^j\}_{j=0,\ldots,\ell}$ of all lower-order transverse derivatives restricted to $\scrN \cap \calU$, in the sense that the right-hand side of Equation (\ref{arbitraryorderpdeonN}) equals zero. \qed
\end{corollary}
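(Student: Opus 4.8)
The plan is to derive Equation (\ref{arbitraryorderpdeonN}) by simply applying the pullback $i^*$ to both sides of Equation (\ref{highorder}) in Theorem \ref{characteristicdecompositionthm}, handling the resulting three types of term separately, and then to read off the ``consequently'' clause directly from the formula.

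First I would treat the leading term $2n(t^{r+1}\phi)$. The vector field $n$ is a section of the null line bundle of $\scrN \cap \calU$ and is therefore tangent to $\scrN$ at every point of $\scrN \cap \calU$; consequently it restricts to a genuine vector field $\mathfrak{n} \doteq i^* n \in \mathfrak{X}(\scrN \cap \calU)$ satisfying $i^* \circ n = \mathfrak{n} \circ i^*$ as operators on $C^\infty(\calU)$. Applying this with $\psi = t^{r+1}\phi$ yields $i^*[n(t^{r+1}\phi)] = \mathfrak{n}\,\phi_t^{(r+1)}$. Next, since multiplication operators commute with $i^*$, setting $\kappa^{(r)} \doteq i^* K^{(r)}$ gives $i^*[K^{(r)} t^{r+1}\phi] = \kappa^{(r)} \cdot \phi_t^{(r+1)}$. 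Finally, each $D^{(r)}_k$ was shown in Theorem \ref{characteristicdecompositionthm} to pull back to a differential operator on $\scrN \cap \calU$ of order at most $2$; calling this operator $\Delta^{(r)}_k$, the defining property $i^* \circ D^{(r)}_k = \Delta^{(r)}_k \circ i^*$ gives $i^*[D^{(r)}_k t^k\phi] = \Delta^{(r)}_k\phi_t^{(k)}$. Summing these contributions and subtracting $i^*[t^r F]$ reproduces Equation (\ref{arbitraryorderpdeonN}) exactly.

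For the ``consequently'' clause I would fix the meaning of ``$\phi$ solves $P\phi = F$ on $\scrN \cap \calU$ to order $r$'' as the condition that all transverse derivatives of $P\phi - F$ up to order $r$ vanish on the hypersurface, i.e.\ $i^*[t^\ell(P\phi - F)] = 0$ for every $\ell = 0, \ldots, r$. By the formula just established, applied at each $\ell$ in place of $r$, this system of conditions is equivalent to the vanishing, for each $\ell$, of the right-hand side of Equation (\ref{arbitraryorderpdeonN}); that vanishing expresses $\phi_t^{(\ell+1)}$ as being constrained by (``compatible with'') the lower-order jets $\{\phi_t^{j}\}_{j=0,\ldots,\ell}$, which is precisely the asserted compatibility statement.

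There is no serious analytical obstacle here, as the result is a direct term-by-term evaluation of the already-proved Theorem \ref{characteristicdecompositionthm}; the only points demanding care are bookkeeping ones. The first is confirming that $n$ is truly tangent to $\scrN$ so that $i^* \circ n = \mathfrak{n} \circ i^*$ holds---this is where the null (characteristic) nature of $\scrN$ is essential, since for a non-characteristic hypersurface the analogous transverse field would not restrict to the surface. The second is ensuring the operators $\Delta^{(r)}_k$ are well defined of order at most $2$, which is guaranteed by the ``pulls back'' terminology established just before Theorem \ref{characteristicdecompositionthm} together with the order bound recorded there.
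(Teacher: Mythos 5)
Your proposal is correct and coincides with what the paper intends: Corollary \ref{characteristicdecompositioncor} is stated without proof precisely because it follows by applying $i^*$ termwise to Equation (\ref{highorder}), using the tangency of $n$ to $\scrN$ and the fact that each $D^{(r)}_k$ pulls back to an operator of order at most $2$, exactly as you do. Your bookkeeping remarks (in particular the verification that $i^*\circ n=\mathfrak{n}\circ i^*$, which rests on $n$ being a section of the null line bundle $K_{\scrN}\subseteq T\scrN$) are the right points to check and are handled correctly.
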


\subsection{Existence of solutions to two-sided characteristic Cauchy problems}\label{existencesec}

Throughout this subsection, $\scrM = (M,g,\mathfrak{t})$ is a given globally hyperbolic Lorentzian manifold of dimension $d+1$, and $\scrN$ is an achronal, closed, smooth null hypersurface whose null generators, when reparametrised as null geodesics entirely contained in $\scrN$, are future and past inextensible in $\scrM$.  The reader is referred to Proposition \ref{intersectJN} for facts relevant to this setup, and for notation introduced there and adopted in this section.  Let $P$ be a normally hyperbolic, scalar differential operator of the form given in Equation (\ref{normallyhyperbolicscalar}).  Let the (extended) Green operators $\rmG_+, \rmG_-$, and the (extended) causal Green operator $\rmG = \rmG_+ - \rmG_-$, be as in Section \ref{greenhypsubsec}.

Building on the preparatory work in the previous subsection, the existence and regularity properties of solutions to ``two-sided'' characteristic Cauchy problems will now be established.  Let $t$ be a global, smooth, timelike vector field---clearly, $t$ is transverse to $\scrN$.  Corollary \ref{characteristicdecompositioncor} gives the form, relative to $t$, of the sequence of restrictions to $\scrN$ of the differential consequences of $P \phi = F$ obtained by taking arbitrarily many $t$-derivatives of the equation.  Now let $\scrS$ be an arbitrary cross-section $\scrN$, which exists under the current assumptions by Proposition \ref{crosssecglobhyp}.  Using $\scrS$ together with the $\scrN$-vector field $\mathfrak{n}$ introduced in the statement of Corollary \ref{characteristicdecompositioncor}, a diffeomorphism $\chi$ may be obtained from $\scrN$ to an open subset $\calW$ of $\R_s \times \scrS$ containing $\{ 0 \} \times \scrS$.\footnote{This diffeomorphism amounts to choosing the zero value of the parameter on each integral curve of $\mathfrak{n}$ to be at the curve's unique intersection with $\scrS$.}  The condition of vanishing of $t^r[P \phi - F]$ on $\scrN$ becomes, by Equation (\ref{arbitraryorderpdeonN}),
\begin{equation}\label{arbitraryorderodeonN}
 2\pderiv{\widetilde{\phi_t^{(r+1)}}}{s} + \widetilde{\kappa^{(r)}} \cdot \widetilde{\phi_t^{(r+1)}} + \sum_{k=0}^r\widetilde{\Delta^{(r)}_k \phi_t^{(k)}} - \widetilde{i^*[t^rF]} = 0 \qquad \text{on $\calW$},
\end{equation}
where tildes on functions denote pre-compositions with the inverse of $\chi$.  If the last two terms on the left-hand side of Equation (\ref{arbitraryorderodeonN}) are regarded as \emph{known} quantities and denoted collectively by $\widetilde{V^{(r)}} \in C^\infty(\calW)$, then one is simply looking for solutions $u$ to
\begin{equation}\label{arbitraryorderodeonN2}
 2\pderiv{u}{s} + \widetilde{\kappa^{(r)}} \cdot u + \widetilde{V^{(r)}} = 0 \qquad \text{on $\calW$}.
\end{equation}
Letting $I_x \subseteq \R$, for each $x \in \scrS$, denote the projection onto the first factor of $\calW \cap (\R \times \{ x \})$, a single smooth solution $u$ to the above equation on $\calW$ is equivalently characterised by a ``smooth family'' $\{ u_x \in C^\infty(I_x)\}_{x \in \scrS}$ of functions, each solving the ODE (on $I_x$) obtained by replacing $u$, $\widetilde{\kappa^{(r)}}$ and $\widetilde{V^{(r)}}$ in Equation (\ref{arbitraryorderodeonN2}) by $u_x$, $\widetilde{\kappa^{(r)}}(\cdot,x)$ and $\widetilde{V^{(r)}}(\cdot,x)$ respectively.  It is in this sense that the vanishing of $t^r[P \phi - F]$ on $\scrN$---given prior knowledge of the (smooth) restrictions of $t^\ell \phi$ to $\scrN$ for each $\ell=0,\ldots,r$---can be regarded as equivalent to the existence of a ``smooth family'' of solutions to a  family of ODEs with coefficient functions smoothly dependent on parameters in $\scrS$.  Since these ODEs are linear (up to the presence of a inhomogeneity), standard results on the smooth dependence of solutions of ODEs on parameters and initial conditions (see, e.g., \cite[Sec.\ 1.7]{coddington1955theory}) guarantee that, if smooth initial conditions are imposed on $\{0\} \times \scrS$, a solution $u$ in $\calW$ exists, is unique, and is everywhere smooth.\footnote{These results are only local in the case of nonlinear ODEs, but become global if the ODEs are linear.  Note also that the potential dependence of the ODE domain $I_x$ on the parameter $x$ does not cause trouble here.}

The above analysis, together with the last statement in Corollary \ref{characteristicdecompositioncor}, suggests the following procedure for obtaining ``transverse derivative data up to order $r+1$'' on $\scrN$ which may arise from a function $\phi$, defined and $C^{r+2}$ in an open neighbourhood of $\scrN$ and which solves $P\phi = F$ to order $r$ on $\scrN$ (but not necessarily elsewhere):
\begin{enumerate}[I.]
\item Set $\phi_t^{(0)} \doteq f$ for some $f \in C^\infty(\scrN)$;
\item by imposing the initial conditions $\phi_t^{(\ell+1)}\restriction_\scrS=g^{(\ell+1)}$ for given $g^{(\ell)} \in C^\infty(\scrS)$, recursively solve each ``$\ell$th order propagation equation'', i.e.\ Equation (\ref{arbitraryorderpdeonN}) with $r$ replaced by $\ell$ and the latter ranging from $0$ to $r$, to uniquely obtain the order $\ell+1$ transverse derivative datum $\phi_t^{(\ell+1)} \in C^\infty(\scrN)$.
\end{enumerate}
Clearly, the procedure may accommodate $r=\infty$, and indeed this is the case of interest in the remainder of this paper. While, in the setting of a general characteristic Cauchy problem, the initial value $f$ is precisely what is assumed given as part of the problem, the cross-section $\scrS$ and sequence $( g^{(r)} )_{r=0}^\infty$ of initial conditions on $\scrS$ are not.  However, in the present context of wave-like equations---which are well known to enjoy finite speed of propagation properties---natural choices for these do arise if one also puts some restrictions on the supports of the inhomogeneity $F$ and of the initial value $f$.

Namely, under the assumptions and notation of this subsection, let 
\begin{equation}\label{scrFdef}
\scrF \doteq \overline{\bigcup_{n=0}^\infty \supp{[i^*(t^n F)]}};
\end{equation}
since $\scrN$ is assumed closed, the closure above may equally well be taken in $M$ or in $\scrN$ with the relative topology, and $\scrF \subseteq \scrN$.  In what follows, $\supp{f}$ and $\scrF$ will be freely regarded as subsets of $M$.  If both these sets are past [resp.\ future] compact, then it is possible to find a smooth spacelike Cauchy surface $\mathscr{C}^-$ [resp.\ $\mathscr{C}^+$] such that 
\begin{equation}\label{cauchysurfchardata}
\supp{f} \cup \scrF \subset I^+(\mathscr{C}^-)\quad \text{[resp. } \supp{f} \cup \scrF \subset I^-(\mathscr{C}^+)\text{].}
\end{equation}
The set $\scrS^+ \doteq \mathscr{C}^+ \cap \scrN$ [resp.\ $\scrS^- \doteq \mathscr{C}^- \cap \scrN$] is a cross-section of $\scrN$ by Lemma \ref{lemintersectonce}.  By construction and by the recursive structure of the tower of parametrised ODEs in question, the following is clearly true of any sequence $(\phi_t^{(r)})_{r=0}^\infty$ of transverse derivative data on $\scrN$ compatible with the characteristic Cauchy problem and with its differential consequences.

\begin{lemma}\label{lemmaodes} Let $r \in \N_0$, and suppose that $\supp{f}$ and $\scrF$ are past compact subsets of $M$.  Then, using the notation introduced in Proposition \ref{intersectJN} and letting the cross-section $\scrS^-$ of $\scrN$ be as just described, $\supp{\phi_t^{(r+1)}} \subseteq \supp{f}^+ \cup \scrF^+$ whenever $\supp{\phi_t^{(r)}} \subseteq \supp{f}^+ \cup \scrF^+$ and $\phi_t^{(r+1)} = 0$ on $\scrS^-$.  An analogous result (given by replacing $+$ with $-$) holds if instead $\supp{f}$ and $\scrF$ are future compact subsets of $M$. \qed
\end{lemma}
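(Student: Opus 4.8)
The plan is to analyse the propagation equation (\ref{arbitraryorderodeonN2}) one null generator at a time. Each generator of $\scrN$ is an integral curve $\gamma$ of $\mathfrak{n}$, parametrised by $s$, and along it Equation (\ref{arbitraryorderodeonN2}) is nothing but the linear first-order ODE $2u' + \widetilde{\kappa^{(r)}} u + \widetilde{V^{(r)}} = 0$ in the single unknown $u = \widetilde{\phi_t^{(r+1)}}$, where the source $\widetilde{V^{(r)}} = \sum_{k=0}^r \widetilde{\Delta^{(r)}_k \phi_t^{(k)}} - \widetilde{i^*[t^r F]}$ is treated as given. Since $t$ may be taken future-directed and $g(n,t)=1$ forces $n$, hence $\mathfrak{n}$, to be future-directed, increasing $s$ moves towards the future. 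I would solve this ODE with the initial condition $u = 0$ imposed at the unique parameter value $s_0 = s_0(\gamma)$ at which $\gamma$ meets the cross-section $\scrS^- = \mathscr{C}^- \cap \scrN$ (uniqueness of the crossing being Lemma \ref{lemintersectonce}), and read off the support of the solution from the integrating-factor representation.

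The first step is to bound the support of the source. Because each $\Delta^{(r)}_k$ is a differential operator on $\scrN \cap \calU$ it does not enlarge supports, so $\supp{\Delta^{(r)}_k \phi_t^{(k)}} \subseteq \supp{\phi_t^{(k)}}$; by the recursive structure of the tower of ODEs the lower-order data $\phi_t^{(0)}, \ldots, \phi_t^{(r)}$ have all been produced at earlier stages and satisfy $\supp{\phi_t^{(k)}} \subseteq \supp{f}^+ \cup \scrF^+$, the hypothesis of the lemma being precisely the top instance $k=r$. Since moreover $\supp{i^*[t^r F]} \subseteq \scrF \subseteq \scrF^+$ by the definition (\ref{scrFdef}) of $\scrF$, one obtains $\supp{\widetilde{V^{(r)}}} \subseteq \supp{f}^+ \cup \scrF^+$.

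The geometric heart of the argument is two properties of $\supp{f}^+ \cup \scrF^+$, both drawn from the notation and results of Proposition \ref{intersectJN}. First, $A^+$ is by construction the set of points of $\scrN$ reachable from $A$ by travelling to the future along null generators, so it is \emph{forward-invariant}: if $\gamma(s') \in \supp{f}^+ \cup \scrF^+$ and $s > s'$ then $\gamma(s) \in \supp{f}^+ \cup \scrF^+$; equivalently, if $\gamma(s) \notin \supp{f}^+ \cup \scrF^+$ then $\gamma(s') \notin \supp{f}^+ \cup \scrF^+$ for every $s' \le s$. Second, since $\supp{f} \cup \scrF \subset I^+(\mathscr{C}^-)$, the push-up property of chronological futures gives $\supp{f}^+ \cup \scrF^+ \subset I^+(\mathscr{C}^-)$, so that every point of the source support lies strictly to the future of the $\scrS^-$-crossing of its generator; in other words $s_0(\gamma) \le \inf\{ s : \gamma(s) \in \supp{f}^+ \cup \scrF^+ \}$.

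Finally I would combine these. By uniqueness for the linear ODE, the solution with zero data at $s_0$ vanishes on any subinterval containing $s_0$ on which the source vanishes; equivalently, the integrating-factor representation expresses $u(s)$ as an integral of the source over the segment between $s_0$ and $s$, so $u(s)$ can be non-zero only if that segment meets the source support. If $\gamma(s) \notin \supp{f}^+ \cup \scrF^+$, then this segment avoids the source support—by forward-invariance when $s \ge s_0$, and because the whole source lies to the future of $s_0$ when $s < s_0$—whence $u(s) = 0$. Therefore $\supp{\phi_t^{(r+1)}} \cap \gamma \subseteq (\supp{f}^+ \cup \scrF^+) \cap \gamma$ for every generator $\gamma$, and a union over generators yields $\supp{\phi_t^{(r+1)}} \subseteq \supp{f}^+ \cup \scrF^+$. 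The future-compact case follows by the time-reversed argument. I expect no genuine analytic obstacle, the ODE being elementary; the only real care lies in the geometric bookkeeping of the previous paragraph, namely the forward-invariance of $\supp{f}^+ \cup \scrF^+$ and the placement of $\scrS^-$ to the past of the source along each generator.
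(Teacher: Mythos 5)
Your proof is correct and follows exactly the route the paper intends: the paper states Lemma \ref{lemmaodes} without proof, asserting it is ``clearly true'' by the recursive structure of the tower of ODEs, and your argument is precisely the elaboration of that claim (support of the source controlled by the inductive hypothesis and the definition of $\scrF$, forward-invariance of $\supp{f}^+ \cup \scrF^+$ along generators, and the placement of $\scrS^-$ strictly to the past of the source so that the zero initial condition propagates). The only detail worth adding is that the final passage from the non-vanishing set to its closure is harmless because $\supp{f}^+ \cup \scrF^+$ is closed by item \emph{(f)} of Proposition \ref{intersectJN}.
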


It will henceforth be assumed that $\supp{f}$ and $\scrF$ are simultaneously past and future compact subsets of $M$; that is, both sets are \emph{temporally} compact.  Then, with $\scrS^\pm$ as above, the sequence of $\scrS^-$-parametrised initial conditions given by $\phi_t^{(r)}\restriction_{\scrS^-} = 0$ for all $r \geq 1$ will be said to be \emph{of future type}; similarly, the sequence of $\scrS^+$-parametrised initial conditions, given by $\phi_t^{(t)}\restriction_{\scrS^+} = 0$ for all $r \geq 1$, will be referred to as being \emph{of past type}.  It is clear that any two distinct Cauchy surfaces $\mathscr{C}_1^-, \, \mathscr{C}_2^-$ for which (\ref{cauchysurfchardata}) holds yield the same infinite sequence of compatible transverse derivative data if initial conditions of future type are imposed in both cases; a similar statement holds for two distinct $\mathscr{C}_1^+, \, \mathscr{C}_2^+$ and initial conditions of past type.  Therefore, the notion just introduced is actually independent of all choices made.  The two sequences arising in this way from the pair $(f,F)$ will henceforth be denoted by $(\phi_{t, +}^{(r)})_{r=0}^\infty$ (future type) and $(\phi_{t, -}^{(r)})_{r=0}^\infty$ (past type).  By virtue of Lemma \ref{lemmaodes} one may characterise the two types in more succinct and geometrical terms as follows:
\begin{enumerate}[leftmargin=0.5in]
	\item[\textbf{future}] $\phi_{t,+}^{(0)} = f$ and $\supp{\phi_{t,+}^{(r)}}  \subseteq \supp{f}^+ \cup \scrF^+$ for all $r \geq 1$;
	\item[\textbf{past}] $\phi_{t,-}^{(0)} = f$ and $\supp{\phi_{t,-}^{(r)}}  \subseteq \supp{f}^- \cup \scrF^-$ for all $r \geq 1$.
\end{enumerate}
By the global Borel Lemma (Theorem \ref{globalborel}) there exists a globally defined function $\phi_{\mathrm{app}}^+ \in C^\infty(M)$ whose $t$-derivative to order $r$, restricted to $\scrN$, equals the $r$th term in the sequence $(\phi_{t, +}^{(r)})_{r=0}^\infty$.  Similarly, there exists a globally defined function $\phi_{\mathrm{app}}^- \in C^\infty(M)$ whose $t$-derivative to each order $r \in \N_0$, restricted to $\scrN$, equals the $r$th term in the sequence $(\phi_{t, -}^{(r)})_{r=0}^\infty$.  As a result, $\phi^\pm_{\mathrm{app}}\restriction_\scrN = f$ and $P\phi_{\mathrm{app}}^\pm - F$ vanishes to infinite order on $\scrN$---that is, $\phi^\pm_{\mathrm{app}}$ is an ``approximate'' solution to the characteristic Cauchy problem.  What's more, the explicit procedure given in the proof of Theorem \ref{globalborel}---with the globally timelike vector field $t$ playing the role of $V$ there---clearly shows that one can ensure that $\supp{\phi^\pm_{\mathrm{app}}} \subseteq J(\scrN)$ and that
\begin{equation*}
\supp{\phi^\pm_{\mathrm{app}}} \cap J^\pm(\scrN) \subseteq J^\pm\left( \bigcup_{r \in \N_0} \supp{\phi_{t, \pm}^{(r)}} \right) \subseteq J^\pm\left(\supp{f}^\pm \cup \scrF^\pm\right) \subseteq J^\pm\left(\supp{f} \cup \scrF \right).
\end{equation*}
In particular, by Lemma \ref{causcompletelem2} and the assumptions on $\supp{f}$ and $\supp{F}$ made so far, and letting $\bm{1}^\pm$ denote the indicator function of $J^\pm(\scrN)$, $\supp{(\bm{1}^+ P\phi^+_{\mathrm{app}})}$ is past compact and $\supp{(\bm{1}^- P\phi^-_{\mathrm{app}})}$ is future compact.  If it is also the case that $\supp{F} \subseteq J(\scrN)$, then in turn $\supp{(P\phi_{\mathrm{app}}^\pm - F)} \subseteq J(\scrN)$.  In particular, $P\phi_{\mathrm{app}}^\pm - F$ vanishes to infinite order on $\partial J(\scrN)$---as well as on $\scrN$ as already mentioned.  By item \ref{JN} in Proposition \ref{intersectJN}, this means that $P \phi^\pm_\mathrm{app} - F$ vanishes to infinite order on $\partial J^\pm(\scrN)$.  It follows that
\begin{equation*}
e^\pm \doteq \bm{1}^\pm [P \phi^\pm_\mathrm{app} - F] \end{equation*}
which, on $J^\pm(\scrN)$, describes the failure of $\phi_{\mathrm{app}}^\pm$ to be a true solution to the characteristic Cauchy problem, is smooth on $M$.  Furthermore, the support property
\begin{equation*}
\supp{e^\pm} \subseteq \supp{(\bm{1}^\pm P \phi^\pm_{\mathrm{app}}}) \cup \supp{(\bm{1}^\pm F}),
\end{equation*}
in conjunction with the observations made above, implies that $e^+ \in C^\infty_\pc(M)$ if $\supp{(\bm{1}^+ F})$ is past compact and $e^- \in C^\infty_\fc(M)$ if $\supp{(\bm{1}^- F})$ is future compact. 
\begin{lemma}\label{scrFlem}
Under the geometric assumptions listed in the first paragraph of this section, let $F \in C^\infty(M)$.  Let $\bm{1}^\pm$ denote the indicator function of $J^\pm(\scrN)$.  If $\supp{( \bm{1}^+ F )}$ is past compact then the set $\scrF$ defined by Equation (\ref{scrFdef}) is past compact.  Similarly, if $\supp{( \bm{1}^- F )}$ is future compact then so is $\scrF$.
\end{lemma}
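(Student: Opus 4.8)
The set $\scrF = \overline{\bigcup_{n=0}^\infty \supp{[i^*(t^n F)]}}$ is a subset of $\scrN$. The claim is that past compactness of $\supp{(\bm{1}^+ F)}$ forces $\scrF$ to be past compact (and symmetrically for the future). The key observation is that $\scrF \sse \scrN$, and by achronality $\scrN \sse J^+(\scrN)$, so on $\scrN$ one has $\bm{1}^+ \equiv 1$; hence $i^*(t^n F)$ only ``sees'' the restriction to $\scrN$ of $F$ and its transverse derivatives, and these restrictions are determined by the germ of $F$ along $\scrN$, which in turn is the germ of $\bm{1}^+ F$ along $\scrN$ from the future side. So I expect $\scrF$ to be controlled by $\supp{(\bm{1}^+ F)}$.

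**The plan.** The plan is to show $\scrF \sse \supp{(\bm{1}^+ F)}$ (or at worst $\scrF \sse \overline{\supp{(\bm{1}^+ F)}}$) and then invoke the fact, available from Appendix \ref{causcomplapp} (cf.\ Lemma \ref{causcompletelem2}), that a closed subset of a past-compact set is itself past compact. The crux is the support inclusion. First I would fix a point $p \in \scrN$ and argue: if $p \notin \supp{(\bm{1}^+ F)}$, then there is an open neighbourhood $\calV$ of $p$ in $M$ on which $\bm{1}^+ F \equiv 0$. Since $p \in \scrN \sse J^+(\scrN)$ and $\scrN$ is a smooth hypersurface sitting inside the closed set $J^+(\scrN)$, I need to see that $F$ itself vanishes to infinite order along $\scrN$ near $p$. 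The one-sided vanishing of $\bm{1}^+ F$ only directly gives vanishing of $F$ on the open set $\calV \cap \Int J^+(\scrN)$, i.e.\ strictly to the future of $\scrN$; so the transverse ($t$-directional, with $t$ future-timelike) derivatives $i^*(t^n F)$ at $p$ must be recovered as one-sided derivatives from the future.

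**Key steps in order.**
\begin{enumerate}[(1)]
\item Record that $\scrN \sse J^+(\scrN)$ by achronality (reflexivity of $J^+$), so $i^*(\bm{1}^+ F) = i^* F$; thus the germ of $F$ along $\scrN$ agrees with the germ of $\bm{1}^+ F$ along $\scrN$ approached from $J^+(\scrN)$.
\item Fix $p \in \scrN \setminus \supp{(\bm{1}^+ F)}$ and choose an open $\calV \ni p$ with $\bm{1}^+ F \restriction_{\calV} = 0$, i.e.\ $F \restriction_{\calV \cap J^+(\scrN)} = 0$.
\item Using that $t$ is future-directed timelike and transverse to $\scrN$, its integral curves through points of $\scrN \cap \calV$ immediately enter $\Int J^+(\scrN)$; hence along such a curve $F$ vanishes for all small positive parameter values. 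Therefore every one-sided $t$-derivative of $F$ from the future vanishes at $p$: $[t^n F](p) = 0$ for all $n$, where the derivative is computed as the (smooth, since $F \in C^\infty(M)$) limit from the future side.
\item Conclude $i^*(t^n F)(p) = 0$ for all $n$, so $p \notin \supp{[i^*(t^n F)]}$ for every $n$; since this holds on the whole neighbourhood $\scrN \cap \calV$, we get $\scrN \cap \calV$ disjoint from $\bigcup_n \supp{[i^*(t^n F)]}$, whence $p \notin \scrF$.
\item This proves $\scrF \sse \supp{(\bm{1}^+ F)}$. Since $\scrF$ is closed (it is defined as a closure) and contained in the past-compact set $\supp{(\bm{1}^+ F)}$, past compactness of $\scrF$ follows from the hereditary property of past compactness for closed subsets. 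The future case is identical with $t$ replaced by a past-directed timelike field and $+$ by $-$.
\end{enumerate}

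**Where the difficulty lies.** The main obstacle is step (3): justifying that the one-sided (future) vanishing of the smooth function $F$ forces all its $t$-derivatives to vanish at $p \in \scrN$. Because $F$ is globally smooth and $t$ is transverse, the derivatives $t^n F$ are continuous across $\scrN$, so their values at $p$ equal the limits taken from the future side where $F \equiv 0$; this is where transversality of $t$ (so that the $t$-flow genuinely penetrates $\Int J^+(\scrN)$) and the identification of the interior of $J^+(\scrN)$ near the smooth piece $\scrN$ must be used carefully. I would lean on the description of $J^\pm(\scrN)$ and its boundary from Proposition \ref{intersectJN} and on smoothness of $\scrN$ to guarantee that a future-timelike transverse curve leaves $\scrN$ into $\Int J^+(\scrN)$ rather than skimming the boundary. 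Everything else is bookkeeping about supports and the standard hereditary property of temporal compactness.
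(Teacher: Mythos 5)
Your proof is correct and rests on the same key observation as the paper's: the one-sided vanishing of the smooth function $F$ on $J^+(\scrN)$ near a point of $\scrN$ forces every $t$-derivative of $F$ to vanish there, because a future-directed timelike integral curve of $t$ immediately enters $I^+(\scrN)\subseteq J^+(\scrN)$ where $F\equiv 0$, and smoothness of $F$ makes these one-sided limits the actual derivatives. The only (cosmetic) difference is that the paper localises by first choosing a Cauchy surface $\scrC$ with $\supp{(\bm{1}^+ F)}\subseteq J^+(\scrC)$ via Lemma \ref{causcompletelem3} and concludes $\scrF\subseteq\scrN\cap J^+(\scrC)$, whereas your pointwise argument on the complement of the support yields the slightly sharper inclusion $\scrF\subseteq\supp{(\bm{1}^+ F)}$ before invoking the same hereditary property of past compactness for closed subsets.
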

\begin{proof}
Suppose that $\supp{( \bm{1}^+ F )}$ is past compact (the proof of the time-reversed statement is of course completely analogous).  Then, by Lemma \ref{causcompletelem3}, there exists a Cauchy surface $\scrC$ such that $\supp{( \bm{1}^+ F )} \subseteq J^+(\scrC)$.  Hence, $\bm{1}^+ F = 0$ on the open set $I^-(\scrC)$, and this in turn can only hold if $F$ vanishes on $J^+(\scrN) \cap I^-(\scrC)$.  In particular, $F$ vanishes to infinite order on $\scrN \cap I^-(\scrC)$, and the restriction of any partial derivative of $F$ to $\scrN$ has support contained in the fixed closed and past compact set $\scrN \cap J^+(\scrC)$.
\end{proof}

In view of the analysis so far, some compact notation will now be introduced.
\begin{definition}\label{defncharacteristicdata} Under the geometric assumptions listed in the first paragraph of this section, and letting $\bm{1}^\pm$ denote the indicator function of $J^\pm(\scrN)$, the following linear function spaces will be defined for any $k \in \N_0 \cup \{ \infty\}$:
\begin{align*}
C^k_\tc(\scrN) \doteq \left\{ f \in C^k(\scrN) \relmiddle| \right.& \supp{f} \text{ is temporally compact in } \scrM \big\}; \\
C^k_\scrN(\scrM) \doteq \left\{ F \in C^k(M) \relmiddle| \right. & \supp{F} \subseteq J(\scrN), \ \supp{(\bm{1}^+ F )} \text{ is past compact in } \scrM,  \\ 
& \supp{(\bm{1}^- F )} \text{ is future compact in } \scrM \big\}.
\end{align*}
\end{definition}
Returning to the discussion preceding the statement of Lemma \ref{scrFlem}, assume that $f \in C^\infty_\tc(\scrN)$ and $F \in C^\infty_\scrN(\scrM)$, so that $e^+$ has past compact support and $e^-$ has future compact support.  Then, upon defining
	\begin{equation}\label{defnofphipm} \phi_{(f,F)}^\pm \doteq \phi^\pm_\mathrm{app} - \rmG_\pm e^\pm, \end{equation}
	the following hold: 
	\begin{enumerate}[(a)]
	\item $\phi^\pm_{(f,F)} \in C^\infty(M)$;
	\item $\phi^\pm_{(f,F)}\restriction_\scrN = f$;
	\item $P \phi_{(f,F)}^\pm = P\phi^\pm_\mathrm{app} - e^\pm = P\phi^\pm_\mathrm{app} - \bm{1}^\pm [P \phi^\pm_\mathrm{app} - F] = F$ on $J^\pm(\scrN)$;
	\item $\supp{\phi^\pm_{(f,F)}} \subseteq J(\scrN)$, $\supp{\big(\bm{1}^+ \phi^+_{(f,F)}\big)}$ is past compact and $\supp{\big(\bm{1}^- \phi^-_{(f,F)}\big)}$ is future compact.
	\end{enumerate}
Now consider the function $\phi_{(f,F)} : M \to \R$ defined as follows:
\begin{equation}\label{defnofphif}
	\phi_{(f,F)}(x) = \begin{cases} \phi_{(f,F)}^+(x) &\text{if } x \in J^+(\scrN) \\ \phi_{(f,F)}^-(x) &\text{if } x \in J^-(\scrN) \\
0 &\text{if } x \notin J(\scrN).
	\end{cases}
\end{equation}
Since $ J^+(\scrN) \cap J^-(\scrN) = \scrN$ [see item \ref{J+capJ-} in Proposition \ref{intersectJN}] and both $\phi_{(f,F)}^+$ and $\phi_{(f,F)}^-$ restrict to $f$ on $\scrN$, $\phi_{(f,F)}$ is well-defined.  Together with the fact that $\phi_{(f,F)}^\pm$ vanishes (to infinite order) on $\del{J}^\pm(\scrN) {\setminus } \scrN$, this guarantees that $\phi_{(f,F)}$ is globally continuous.  In addition, the construction ensures that $\phi_{(f,F)}$ is smooth on the open set $M {\setminus } \scrN$, and solves $P \phi_{(f,F)} = F$ there.  In fact, more is true: the partial derivatives to all orders of the restriction of $\phi_{(f,F)}$ to either $I^+(\scrN)$ or $I^-(\scrN)$ can be continuously extended to $\scrN$.  Equivalently, if one regards $J^+(\scrN)$ and $J^-(\scrN)$ as manifolds with boundary, the smooth boundary being $\scrN$ in both cases, then the restriction of $\phi_{(f,F)}$ to $J^\pm(\scrN)$ is a smooth function in the sense appropriate to manifolds with boundaries.  However, the two resulting sets of partial derivative extensions to $\scrN$ will differ in general, preventing $\phi_{(f,F)}$ from being globally $C^1$ and thus a fortiori a global classical (i.e.\ $C^2$) solution.  On the other hand, by standard results \cite[Thm.\ 3.5.1]{agranovich2015sobolev} $\phi_{(f,F)} \in H^s_\loc(M)$ (the space of locally Sobolev functions of order $s$ on $M$) for all $s < 3/2$.  To summarise:

\begin{theorem}\label{existencethmnodistrsol}
Let $\scrM = (M,g,\mathfrak{t})$ be a globally hyperbolic Lorentzian manifold and $\scrN$ be an achronal, closed, smooth null hypersurface whose null generators, when reparametrised as null geodesics entirely contained in $\scrN$, are future and past inextensible in $\scrM$.  Let $P$ be a normally hyperbolic scalar operator.  Then, for any $f \in C^\infty_\tc(\scrN)$ and $F \in C^\infty_\scrN(\scrM)$, there exists a function $\phi_{(f,F)} \in C^0_\scrN(\scrM)$ with the following properties: (a) it equals $f$ on $\scrN$; (b) it is smooth on $M {\setminus } \scrN$ and solves $P\phi_{(f,F)} = F$ there; (c) its restriction to either $I^+(\scrN)$ or $I^-(\scrN)$ has partial derivatives to all orders which can be continuously extended to $\scrN$. \qed
\end{theorem}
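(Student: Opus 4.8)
The plan is to assemble into a single statement the construction developed in the discussion preceding the theorem; the overall strategy is to solve the problem separately ``on each side'' of $\scrN$ and then merge the two one-sided solutions across $\scrN$. First I would reduce the characteristic equation, together with \emph{all} of its differential consequences on $\scrN$, to a recursively coupled tower of ordinary differential equations. By Corollary \ref{characteristicdecompositioncor}, for each $r$ the vanishing of $i^*t^r[P\phi - F]$ on $\scrN$ is equivalent to a first-order linear ODE along the null generators of $\scrN$ for the transverse datum $\phi_t^{(r+1)}$, whose coefficients and inhomogeneity are determined by the lower-order data $\phi_t^{(0)}, \ldots, \phi_t^{(r)}$ and by $F$. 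After fixing a cross-section $\scrS$ and the induced identification of $\scrN$ with an open subset of $\R_s \times \scrS$, solving the tower amounts to solving, for each $r$, a family of linear ODEs depending smoothly on the parameter $x \in \scrS$; the classical theory of smooth dependence of solutions of linear ODEs on parameters and initial conditions then furnishes, from $\phi_t^{(0)} = f$ together with prescribed initial conditions on $\scrS$, a unique smooth sequence $(\phi_t^{(r)})_{r=0}^\infty$ of compatible transverse derivative data.

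Next I would fix these initial conditions geometrically so as to control supports. Under the hypotheses $f \in C^\infty_\tc(\scrN)$ and $F \in C^\infty_\scrN(\scrM)$, Lemma \ref{scrFlem} guarantees that the set $\scrF$ of (\ref{scrFdef}) is both past and future compact, so $\supp{f} \cup \scrF$ is temporally compact; this licenses the choice of Cauchy surfaces $\mathscr{C}^\pm$ and cross-sections $\scrS^\pm$ satisfying (\ref{cauchysurfchardata}). Imposing initial conditions of future (resp.\ past) type then produces, via Lemma \ref{lemmaodes}, the two sequences $(\phi_{t, +}^{(r)})_{r}$ and $(\phi_{t, -}^{(r)})_{r}$ with supports localised in $\supp{f}^\pm \cup \scrF^\pm$. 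Feeding these into the global Borel Lemma (Theorem \ref{globalborel}), with the globally timelike $t$ playing the role of the transverse field, yields $\phi^\pm_{\mathrm{app}} \in C^\infty(M)$ realising each sequence as its tower of $t$-derivatives on $\scrN$, with $\supp{\phi^\pm_{\mathrm{app}}} \subseteq J(\scrN)$ and the displayed containment of $\supp{\phi^\pm_{\mathrm{app}}} \cap J^\pm(\scrN)$.

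The third and final step is the error correction and the merge. The error $e^\pm \doteq \bm{1}^\pm[P\phi^\pm_{\mathrm{app}} - F]$ is smooth because $P\phi^\pm_{\mathrm{app}} - F$ vanishes to infinite order on $\scrN$ by construction and, since $\supp{F} \subseteq J(\scrN)$, also on $\del J^\pm(\scrN) \setminus \scrN$ (invoking item \ref{JN} of Proposition \ref{intersectJN}); the support analysis simultaneously shows that $e^+$ is past compact and $e^-$ future compact, so that Theorem \ref{baerthm} applies and $\phi^\pm_{(f,F)}$ as in (\ref{defnofphipm}) is a global smooth function with properties (a)--(d). I would then define $\phi_{(f,F)}$ by (\ref{defnofphif}): it is well defined because $J^+(\scrN) \cap J^-(\scrN) = \scrN$ by Proposition \ref{intersectJN} and both one-sided solutions restrict to $f$ there, and it is continuous because each $\phi^\pm_{(f,F)}$ vanishes to infinite order on $\del J^\pm(\scrN) \setminus \scrN$. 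By construction $\phi_{(f,F)}$ is smooth on $M \setminus \scrN$ and solves $P\phi_{(f,F)} = F$ there, and its one-sided restrictions to $I^\pm(\scrN)$ extend smoothly up to $\scrN$, yielding (a)--(c); finally the support properties place it in $C^0_\scrN(\scrM)$.

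The main obstacle I anticipate is establishing the smoothness of the error $e^\pm$, i.e.\ controlling the effect of multiplying by the discontinuous indicator $\bm{1}^\pm$. This is exactly where the support hypotheses encoded in $C^\infty_\scrN(\scrM)$ and the infinite-order vanishing delivered by the Borel construction become indispensable: one must verify that $P\phi^\pm_{\mathrm{app}} - F$ vanishes to infinite order along the \emph{entire} boundary $\del J^\pm(\scrN)$, not merely along $\scrN$, which in turn relies on the identification of $\del J(\scrN)$ with $\del J^+(\scrN) \cup \del J^-(\scrN)$ from Proposition \ref{intersectJN}. A secondary technical point is that the parametrised ODEs may have $x$-dependent domains $I_x$; one must check that the classical smooth-dependence theory still applies globally, which it does precisely because the equations are linear.
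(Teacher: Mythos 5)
Your proposal is correct and follows essentially the same route as the paper, whose proof of this theorem is precisely the construction laid out in the discussion preceding its statement: the ODE tower from Corollary \ref{characteristicdecompositioncor}, initial conditions of future/past type justified by Lemma \ref{scrFlem} and Lemma \ref{lemmaodes}, the global Borel Lemma to produce $\phi^\pm_{\mathrm{app}}$, error correction via the extended Green operators of Theorem \ref{baerthm}, and the merge across $\scrN$ using Proposition \ref{intersectJN}. You also correctly single out the two genuine technical points --- the smoothness of $e^\pm$ via infinite-order vanishing on all of $\del J^\pm(\scrN)$, and the harmlessness of the $x$-dependent ODE domains $I_x$ owing to linearity --- both of which the paper addresses in the same way.
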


Theorem \ref{existencethmnodistrsol} does not yet establish that $P\phi_{(f,F)} = F$ everywhere in the distributional sense.  Equation (\ref{commidentity2}) will now be used to establish this.\footnote{The author would like to thank A.\ Strohmaier for suggesting the plan of attack used here.}  Recall the well-known fact \cite[Prop.\ 6.3.1]{hawking1973large} that, in any time-oriented Lorentzian manifold, the topological boundary of the causal (equivalently, chronological) future or past of an arbitrary subset is a (closed, achronal and) locally Lipschitz topological hypersurface.

\begin{theorem}\label{distrsolnthm}
	Under the geometric assumptions of Theorem \ref{existencethmnodistrsol}, let $f \in C^\infty(\scrN)$ and $F \in C^\infty(M)$ with $\supp{F} \subseteq \overline{J(\scrN)}$.  Further let $\phi^+$ and $\phi^-$ be functions in $C^\infty(M)$ and with the following properties: (a) $\phi^+\restriction_\scrN = \phi^-\restriction_\scrN = f$; (b) $P \phi^+ = F$ on $J^+(\scrN)$ and $P \phi^- = F$ on $J^-(\scrN)$; (c) $\phi^+$ vanishes to first order on $\del{J}^+(\scrN) {\setminus } \scrN$ and $\phi^-$ vanishes to first order on $\del{J}^-(\scrN) {\setminus } \scrN$.  Define $\phi \in C^0(M)$ by
	\begin{equation*}
	\phi(x) = \begin{cases} \phi^+(x) &\text{if } x \in J^+(\scrN) \\ \phi^-(x) &\text{if } x \in J^-(\scrN) \\
0 &\text{if } x \notin J(\scrN)
	\end{cases}
	\end{equation*}
[cf.\ (\ref{defnofphif})].  Then $P \phi = F$ on $M$ in the distributional sense.
\end{theorem}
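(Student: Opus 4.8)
The plan is to verify the distributional identity $P\phi = F$ by pairing against an arbitrary test function $\chi \in C^\infty_\c(M)$, i.e.\ to prove $\pair{\phi}{P^\dagger\chi\,\mu_g} = \pair{F}{\chi\,\mu_g}$. Writing $D^\pm \doteq J^\pm(\scrN)$ and recalling that $D^+ \cap D^- = \scrN$ (item \ref{J+capJ-} of Proposition \ref{intersectJN}) has $\mu_g$-measure zero while $\phi$ vanishes off $J(\scrN)$, I would first split $\pair{\phi}{P^\dagger\chi\,\mu_g} = \int_{D^+}\phi^+(P^\dagger\chi)\,\mu_g + \int_{D^-}\phi^-(P^\dagger\chi)\,\mu_g$.

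Next I would apply the jump formula (\ref{commidentity2}) to each of $D^\pm$. These are admissible domains: their topological boundaries are locally Lipschitz by the fact recalled just above the statement, and the compactness hypothesis on supports is automatic because $\chi$ is compactly supported. Expanding $[[\bm{1}^\pm],P]\phi^\pm = \bm{1}^\pm P\phi^\pm - P(\bm{1}^\pm\phi^\pm)$ and using $P\phi^\pm = F$ on $D^\pm$ from hypothesis (b), the identity reads $\int_{D^\pm}\phi^\pm(P^\dagger\chi)\,\mu_g = \int_{D^\pm}F\chi\,\mu_g - \int_{\del D^\pm}n^\pm\big(j[\chi,\phi^\pm]\big)\,\iota_{n^\pm}\mu_g$, where $n^\pm$ is the outward conormal. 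Summing the two, the $F$-integrals combine to $\int_M F\chi\,\mu_g = \pair{F}{\chi\,\mu_g}$, since $\supp F \subseteq \overline{J(\scrN)}$ and $\del J(\scrN)$ is $\mu_g$-null. The whole problem therefore reduces to showing that the total boundary contribution $B \doteq \int_{\del D^+}n^+(j[\chi,\phi^+])\,\iota_{n^+}\mu_g + \int_{\del D^-}n^-(j[\chi,\phi^-])\,\iota_{n^-}\mu_g$ vanishes.

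To evaluate $B$, I would first discard the parts of $\del D^\pm$ away from $\scrN$: on $\del D^\pm \setminus \scrN$, hypothesis (c) forces $\phi^\pm$ and $d\phi^\pm$ to vanish, so the integrand $\chi\,n^{\pm\sharp}\phi^\pm - \phi^\pm n^{\pm\sharp}\chi + \phi^\pm\chi\,n^\pm(X)$ is identically zero there. On the common face $\scrN$, the outward conormals of the future and of the past point in opposite transverse directions, so $n^- = -n^+$ up to an immaterial positive factor. Since $\iota_n\mu_g$ is a genuine density (insensitive to the sign of $n$) whereas $n(j)$ is linear in $n$, the two contributions over $\scrN$ assemble with a relative minus sign, giving $B = \int_\scrN n\big(j[\chi,\phi^+-\phi^-]\big)\,\iota_n\mu_g$ with $n \doteq n^+$.

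The crux is then the null character of $\scrN$. On $\scrN$ one has $\phi^+-\phi^- = 0$ by hypothesis (a), which annihilates the two terms of $n(j[\chi,\phi^+-\phi^-])$ carrying an undifferentiated factor $\phi^+-\phi^-$; the only potentially surviving term is $\chi\,n^\sharp(\phi^+-\phi^-)$. But for a null hypersurface the raised conormal $n^\sharp$ is \emph{tangent} to $\scrN$, so $n^\sharp(\phi^+-\phi^-)$ is a derivative along $\scrN$ of a function that vanishes identically on $\scrN$, hence is zero. Thus $B = 0$ and the theorem follows. I expect the main obstacle to be precisely this sign bookkeeping at the shared null face---reconciling the opposite outward conormals with the sign-insensitivity of $\iota_n\mu_g$---together with the observation that makes everything work: because $\scrN$ is null, $n^\sharp$ is tangential, so the transverse-derivative term in the jump is governed by $f$ alone and is continuous across $\scrN$, even though $\phi$ itself fails to be $C^1$ there.
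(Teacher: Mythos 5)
Your proposal is correct and follows essentially the same route as the paper's own proof: split the pairing over $J^\pm(\scrN)$, apply the jump formula (\ref{commidentity2}) to each causal half, kill the boundary contribution away from $\scrN$ via hypothesis (c), and observe that on $\scrN$ the opposite outward conormals combine into $\int_\scrN n\big(j[\chi,\phi^+-\phi^-]\big)\,\iota_n\mu_g$, which vanishes because $\phi^+-\phi^-=0$ on $\scrN$ and $n^\sharp$ is tangent to the null hypersurface. The only cosmetic difference is that you phrase the key identity through the commutator $[[\bm{1}^\pm],P]$ directly, whereas the paper first invokes the Green--Vinogradov formula pointwise and then the divergence theorem, which is exactly what (\ref{commidentity2}) packages together.
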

\begin{proof}
	The equality to prove is 
	\begin{equation*}\int_M \left(\phi P^\dagger\chi - F \chi\right) \mu_g = 0 \quad \forall \ \chi \in C_{\c}^\infty(M).\end{equation*}
	By assumption, $\supp{F} \cup \supp{\phi} \subseteq \overline{J(\scrN)}$.  Using item \ref{JN} in Proposition \ref{intersectJN} since $\scrN$ is assumed achronal, $\overline{J(\scrN)} {\setminus } J(\scrN) = \del{J}(\scrN) \subseteq \del{J}^+(\scrN) \cup \del{J}^+(\scrN)$; hence, this set has zero measure.  Similarly, $J^+(\scrN) \cap J^-(\scrN) = \scrN$ has zero measure.  Together with Equation (\ref{greenformscalarnormhyp}), these considerations imply that
	\begin{multline*}
		\int_M \phi P^\dagger \chi \, \mu_g = \sum_{\pm}\int_{J^\pm(\scrN)} \phi P^\dagger \chi \, \mu_g = \sum_{\pm}\int_{J^\pm(\scrN)} \phi^\pm P^\dagger \chi \, \mu_g \\
= \sum_{\pm}\int_{J^\pm(\scrN)} \left([P \phi^\pm] \chi - \div{j[\chi, \phi^\pm]} \right) \mu_g.
\end{multline*}
On $J^\pm(\scrN)$, $P \phi^\pm = F$ by assumption.  Hence,
\begin{equation}\label{divterm}
\int_M \left(\phi P^\dagger \chi - F \chi \right) \mu_g = - \sum_{\pm}\int_{J^\pm(\scrN)} \div{j[\chi, \phi^\pm]} \, \mu_g,
\end{equation}
and it remains to show that the right-hand side of Equation (\ref{divterm}) vanishes.  The vector field $j[\chi, \phi^\pm]$ is smooth and has compact support, and $\del{J}^\pm(\scrN)$ is a locally Lipschitz topological hypersurface.  Hence, by Equation (\ref{commidentity2}),
\begin{equation}\label{divterm2} \int_{J^\pm(\scrN)} \div j[\chi, \phi^\pm] \, \mu_g = \int_{\del{J}^\pm(\scrN)} n^\pm\big(j[\chi, \phi^\pm]\big) \, \iota_{n^\pm} \mu_g, \end{equation}
where $n^+$ and $n^-$ are any fields of (almost everywhere defined) outward-pointing conormals to $\del{J}^\pm(\scrN)$.  By Equation (\ref{greenformscalarnormhyp}) and assumption \emph{(c)} in the statement of this theorem, no contribution to the integral on the right-hand side of Equation (\ref{divterm2}) comes from $\del{J}^\pm(\scrN) {\setminus } \scrN$.  On $\scrN$ one may take $n^- = -n^+$ and, letting $n \doteq n^+\restriction_{\scrN}$:
\begin{equation*} \sum_{\pm} \int_{\scrN} n^\pm\big(j[\chi, \phi^\pm]\big) \, \iota_{n^\pm} \mu_g = \int_{\scrN} n\big(j[\chi, \phi^+ - \phi^-]\big) \, \iota_{n} \mu_g.
\end{equation*}
Since $\scrN$ is null, $n^\sharp$ is everywhere tangent to it.  Since $\phi^+\restriction_{\scrN} = \phi^-\restriction_{\scrN}= f$ by construction, both $v \doteq \phi^+ - \phi^-$ and $n^\sharp v$ vanish identically on $\scrN$.  Therefore,
\begin{equation*} n\big(j[\chi, v]\big) = \chi n^\sharp v - v n^\sharp \chi + v \chi \, n(X) = 0 \quad \text{on }\scrN, \end{equation*}
and the proof is complete.
\end{proof}

\subsection{Regularity of solutions to two-sided characteristic Cauchy problems}\label{regularitythm}

It is worthwhile to further comment on the regularity of the solutions $\phi_{(f,F)}$ constructed in the proof of Theorem \ref{existencethmnodistrsol}.  As already stated, for any $(f,F) \in C^\infty_\tc(\scrN) \times C^\infty_\scrN(\scrM)$ it is automatic that $\phi_{(f,F)} \in C^0(M) \cap H^s_\loc(M)$ for all $s < 3/2$.  Obstructions to higher regularity come from the fact that $\phi_{(f,F)}$ is obtained by ``merging'' together two functions whose derivatives in directions transverse to the characteristic hypersurface $\scrN$ are obtained by recursively solving the aforementioned propagation equations on $\scrN$ with two strictly \emph{different} sets of initial conditions---referred to as conditions of past and future type in the previous subsection.  Clearly, if $k \geq 1$ then $\phi_{(f,F)}$ is in $C^k(M)$ \footnote{In which case, again by standard results \cite[Thm.\ 3.5.1]{agranovich2015sobolev}, it is also in $H^{k+s}_\loc(M)$ for all $s < 3/2$.} if and only if, for the pair $(f,F)$, the resulting sequences $(\phi^{(r)}_{t,+})_{r=0}^\infty$ and $(\phi^{(r)}_{t,-})_{r=0}^\infty$---of transverse derivative data of future and past type (respectively) on $\scrN$---agree up to and including the $r=k$ term.  Since the propagation equations are ODEs along the null generators of $\scrN$, this in turn holds if and only if, for each $0 < r \leq k$ and for each null generator $\Gamma$ of $\scrN$, there exists a single point $x_{r, \Gamma} \in \Gamma$ such that $\phi^{(r)}_{t,+}(x_{r, \Gamma}) = \phi^{(r)}_{t,-}(x_{r, \Gamma})$.  Yet another equivalent condition is the following: for each $0 < r \leq k$, it holds that $\phi^{(\ell)}_t \doteq \phi^{(\ell)}_{t,+} = \phi^{(\ell)}_{t,-}$ for all $\ell = 0,\ldots, r-1$ and that, for each null generator $\Gamma$ of $\scrN$, the two-parameter flow solving the $r$th order propagation equation along $\Gamma$---itself determined uniquely by the collection $(\phi^{(\ell)}_t)_{\ell=0}^{r-1}$ together with $F$, the coefficients of $P$ and the geometry of $\scrN$---takes zero in the far past (according to the parameter along the null generator) to zero in the far future.

\subsection{Jump formulae with null boundaries}\label{commidentitiessec}

The fundamental jump formula, Equation (\ref{commidentity2}), already played a pivotal role in establishing the distributional solution property in Theorem \ref{distrsolnthm}.  In the remainder of this paper, it will be further leveraged to prove uniqueness and to obtain compact representation formulae for solutions.  In what follows, notation and results from Appendix \ref{gencommidentityapp} will be used.  Note that the letter ``$N$'' will be temporarily used in favour of ``$M$'' to denote a generic Lorentzian manifold; this is to avoid later confusion, since the results of this subsection will eventually be applied to a (possibly proper) open submanifold of the ambient manifold $M$ seen so far. 

\begin{theorem}\label{thmcommidentity} Let $P = \square_g + X  +[q]$ be a scalar normally hyperbolic operator on a Lorentzian manifold $(N, g)$.  Let $D \subseteq N$ be a domain with regular and everywhere null boundary, and assume that there exists a smooth, everywhere outward-pointing conormal field $n$ along $\del{D}$.\footnote{The latter condition would automatically be satisfied if $(N,g)$ were known to be time orientable.}  Denote $n^\sharp \in \mathfrak{X}(\del{D})$ by $\mathfrak{n}$, and the vector field $X|_{\del{D}}$ by $\tilde{X}$.  Then, for any $\Phi \in C^\infty(N)$:
\begin{equation}
[[\bm{1}_D], P]\Phi = (\calS_{\del{D},\mu_g} \circ \rmT_{\del{D}, n} )(\Phi\restriction_{\del{D}}), \label{commutator1} 
\end{equation}
where $\rmT_{\del{D}, n} : C^\infty(\del{D}) \to \Gamma^\infty(\calD[\del{D}])$ is the first-order differential operator 
\begin{equation}\label{defnofTop}
\rmT_{\del{D}, n} \varphi = \left\{2 \mathfrak{n}\varphi + [n(\tilde{X}) + {\div}_{\iota_n \mu_g} \mathfrak{n}]\varphi\right\} \iota_n \mu_g = \left\{2 \mathfrak{n}\varphi + n(\tilde{X}) \varphi\right\} \iota_n \mu_g + \varphi \Lie_\mathfrak{n}\iota_n \mu_g.
\end{equation}
\end{theorem}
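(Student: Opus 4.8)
The plan is to deduce the identity from the general jump formula of Appendix~\ref{gencommidentityapp}, specialised to the present $P$ and $D$, and then to exploit the null character of $\del D$ to remove the transverse derivative of the test function. Concretely, I would first apply Equation~(\ref{commidentity2}) (i.e.\ Corollary~\ref{corcommidentitygen} in this case) with $\phi = \Phi$ and an arbitrary test function $\chi \in C^\infty_\c(N)$, obtaining
\begin{equation*}
\pair{[[\bm{1}_D], P]\Phi}{\chi\mu_g} = \int_{\del D}\left[\chi\, n^\sharp\Phi - \Phi\, n^\sharp\chi + \Phi\chi\, n(X)\right]\iota_n\mu_g.
\end{equation*}
The objective is then to rewrite the right-hand side as $\int_{\del D}(\chi\restriction_{\del D})\,\rmT_{\del D, n}(\Phi\restriction_{\del D})$, since by the defining property of the single-layer operator (Appendix~\ref{gencommidentityapp}) the latter is exactly $\pair{\calS_{\del D, \mu_g}(\rmT_{\del D, n}(\Phi\restriction_{\del D}))}{\chi\mu_g}$; as $\chi$ is arbitrary, the distributional identity~(\ref{commutator1}) then follows.

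The crucial geometric observation is that, since $\del D$ is null, the raised conormal $\mathfrak{n} = n^\sharp$ is tangent to $\del D$. Hence the only term in the integrand carrying a derivative of $\chi$, namely $-\Phi\, n^\sharp\chi = -\Phi\,\mathfrak{n}\chi$, involves solely an intrinsic (tangential) derivative of $\chi$ along the hypersurface, and can therefore be integrated by parts \emph{within} $\del D$. I would carry this out using the divergence theorem for densities (Appendix~\ref{divthmappendix}) applied to the compactly supported density $\Phi\chi\,\iota_n\mu_g$ on $\del D$ and to the tangent field $\mathfrak{n}$; compact support is guaranteed because $\chi$ has compact support and $\Phi$ is smooth. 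Using $\Lie_\mathfrak{n}(\iota_n\mu_g) = ({\div}_{\iota_n\mu_g}\mathfrak{n})\,\iota_n\mu_g$ one obtains
\begin{equation*}
-\int_{\del D}\Phi\,\mathfrak{n}\chi\,\iota_n\mu_g = \int_{\del D}\chi\left[\mathfrak{n}\Phi + \Phi\,{\div}_{\iota_n\mu_g}\mathfrak{n}\right]\iota_n\mu_g.
\end{equation*}

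Substituting this back and noting that $n^\sharp\Phi = \mathfrak{n}\Phi$ and $n(X) = n(\tilde X)$ on $\del D$, the two $\mathfrak{n}\Phi$ contributions combine to produce the factor $2$, while the zeroth-order terms assemble into $[{\div}_{\iota_n\mu_g}\mathfrak{n} + n(\tilde X)]\Phi$. The integrand thus becomes $\chi\cdot\rmT_{\del D, n}(\Phi\restriction_{\del D})$ with $\rmT_{\del D, n}$ exactly as in~(\ref{defnofTop}), the equality of the two displayed forms of $\rmT_{\del D, n}$ being again just the identity $\Lie_\mathfrak{n}\iota_n\mu_g = ({\div}_{\iota_n\mu_g}\mathfrak{n})\iota_n\mu_g$. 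The main point requiring care---and the only real obstacle---is the justification of this integration by parts: one must confirm that the induced density $\iota_n\mu_g$ and the field $\mathfrak{n}$ are genuinely intrinsic to $\del D$ (which hinges precisely on nullity, so that $\mathfrak{n}$ does not point out of the hypersurface) and that the density-valued divergence theorem applies on $\del D$ with no orientability assumption, $\del D$ being possibly non-orientable. Everything else is bookkeeping.
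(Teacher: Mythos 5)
Your proposal is correct and follows essentially the same route as the paper's proof: starting from the jump formula of Corollary \ref{corcommidentitygen} with the Green vector field $j$, using the tangency of $\mathfrak{n}=n^\sharp$ to the null boundary to integrate the $\Phi\,\mathfrak{n}\chi$ term by parts within $\del D$, and absorbing the resulting divergence term into $\rmT_{\del D,n}$. The paper packages this integration by parts as the vanishing of $\int_{\del D}\Lie_\mathfrak{n}(\tilde\chi\tilde\Phi\,\iota_n\mu_g)$ (the remark after Definition \ref{singlayerdistr}), which also settles your worry about orientability, since the divergence theorem for densities requires none.
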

\begin{proof}
Let $\chi \in C_\c^\infty(N)$ and denote the restrictions of $\Phi$ and of $\chi$ to $\del{D}$ by $\tilde{\Phi}$ and $\tilde{\chi}$, respectively.  By Corollary \ref{cordivthmdelta} and since the Green vector field here is $\calG = j$ given by Equation (\ref{greenformscalarnormhyp}),
\begin{equation}\label{thmcommidentityprf1}
	\pair{[[\bm{1}_D], P]\Phi}{\chi \mu_g} = \pair{\calS_{\del{D},\mu_g}([\tilde{\chi} \Lie_\mathfrak{n} \tilde{\Phi} - \tilde{\Phi} \Lie_\mathfrak{n}\tilde{\chi} + \tilde{\Phi} \tilde{\chi} n(\tilde{X})] \, \iota_n \mu_g)}{\mu_g}.
\end{equation}
By the Leibniz rule, the argument of $\calS_{\del{D},\mu_g}$ can be rewritten as
\begin{align*}
[\tilde{\chi} \Lie_\mathfrak{n} \tilde{\Phi} - \tilde{\Phi} \Lie_\mathfrak{n}\tilde{\chi} &+ \tilde{\Phi} \tilde{\chi} n(\tilde{X})] \, \iota_n \mu_g \\
&=  [2\tilde{\chi} \Lie_\mathfrak{n} \tilde{\Phi} + \tilde{\Phi} \tilde{\chi} n(\tilde{X})] \, \iota_n \mu_g + \tilde{\chi} \tilde{\Phi} \Lie_\mathfrak{n} \iota_n \mu_g - \Lie_\mathfrak{n} (\tilde{\chi}\tilde{\Phi} \, \iota_n \mu_g)  \\
&= \tilde{\chi}[2 \Lie_\mathfrak{n} \tilde{\Phi} + \tilde{\Phi} n(\tilde{X}) + \tilde{\Phi} \, {\div}_{\iota_n \mu_g} \mathfrak{n}] \, \iota_n \mu_g - \Lie_\mathfrak{n} (\tilde{\chi}\tilde{\Phi} \, \iota_n \mu_g) \\
&= \tilde{\chi} \rmT_{\del{D}, n} \tilde{\Phi} - \Lie_\mathfrak{n} (\tilde{\chi}\tilde{\Phi} \, \iota_n \mu_g).
\end{align*}
By the remarks following Definition \ref{singlayerdistr}, the second term in the last line gives no contribution when the expression is reinserted into Equation (\ref{thmcommidentityprf1}).  Hence,
\begin{equation*}
\pair{[[\bm{1}_D], P]\Phi}{\chi \mu_g} = \pair{\calS_{\del{D},\mu_g}(\tilde{\chi} \rmT_{\del{D}, n} \tilde{\Phi})}{\mu_g} = \pair{\calS_{\del{D},\mu_g}(\rmT_{\del{D}, n} \tilde{\Phi})}{\chi \mu_g},
\end{equation*}
which completes the proof.
\end{proof}

In the setting of Theorem \ref{thmcommidentity}, the right-hand side of Equation (\ref{commutator1}) may be rewritten as the composition of a differential operator on $N$ with multiplication by a delta distribution supported on $\del{D}$, as follows: let $\hat{n}$ denote an extension of $n$ to a smooth covector field on $N$, and let $\Theta \in C^\infty(N)$ denote an extension of ${\div}_{\iota_n \mu_g} \mathfrak{n}$.\footnote{Since $\del{D}$ is properly embedded into $N$, such extensions can always be found.}  Then, with $\calY \doteq 2\hat{n}^\sharp + \left[\Theta + \hat{n}(X)\right]$,
\begin{equation}
[[\bm{1}_D], P]\Phi = \calY \Phi \cdot \delta_{\del{D}, n} \quad \forall \ \Phi \in C^\infty(N). \label{commutator1null}
\end{equation}
This equation will now be massaged further.  By the Leibniz rule, $\calY \Phi \cdot \delta_{\del{D}, n} = \calY(\Phi \cdot \delta_{\del{D}, n}) - 2\Phi \cdot \hat{n}^\sharp \delta_{\del{D}, n}$.  A simple calculation based on Corollary \ref{coradjointvecfield} and on the fact that $\hat{n}^\sharp$ is tangent to $\del{D}$ shows that, for any $\chi \in C_\c^\infty(N)$,
\begin{align*}
\pair{\hat{n}^\sharp \delta_{\del{D}, n}}{\chi \mu_g} &= \pair{\delta_{\del{D}, n}}{[(\hat{n}^\sharp)^\dagger\chi] \mu_g} \\
&= -\pair{\delta_{\del{D}, n}}{[\hat{n}^\sharp\chi] \mu_g} - \pair{(\div{\hat{n}^\sharp}) \cdot \delta_{\del{D},n}}{\chi \mu_g} \\
&= - \int_{\del{D}} [\mathfrak{n}\tilde{\chi}] \iota_n \mu_g - \pair{(\div{\hat{n}^\sharp}) \cdot \delta_{\del{D},n}}{\chi \mu_g} \\
&= \int_{\del{D}} \tilde{\chi} \Lie_\mathfrak{n} \iota_n \mu_g - \pair{(\div{\hat{n}^\sharp}) \cdot \delta_{\del{D},n}}{\chi \mu_g} \\
&= \pair{[\Theta - (\div{\hat{n}^\sharp})]\cdot \delta_{\del{D},n}}{\chi \mu_g}.
\end{align*}
Putting together these considerations, one can immediately record the following result, which will be of pivotal importance in the arguments for uniqueness used in the next subsection.
\begin{corollary}\label{corsecondcommutator} Under the same assumptions as in Theorem \ref{thmcommidentity}, let $\hat{n} \in \Gamma^{\infty}(T^*N)$ be an arbitrary extension of $n$ and $\Theta \in C^\infty(N)$ be an arbitrary extension of ${\div}_{\iota_n \mu_g} \mathfrak{n}$.  Then, defining the first-order differential operator $\calT \doteq 2\hat{n}^\sharp + [ 2 {\div}{(\hat{n}^\sharp)} - \Theta + \hat{n}(X)]$, for any $\Phi \in C^\infty(N)$ one has
\begin{equation} 
[[\bm{1}_D], P] = \calT (\Phi \cdot \delta_{\del{D}, n}). \label{commutator2null}
\end{equation}
In particular,
	\begin{equation}
	P(\Phi \cdot \bm{1}_D) = - \calT (\Phi \cdot \delta_{\del{D}, n}) \label{commidentityhom}
	\end{equation}
	whenever $\Phi \in C^\infty(N) \cap \ker{P}$.
\end{corollary}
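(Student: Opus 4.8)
The plan is to obtain the corollary by a direct collection of the computations already recorded in the paragraphs preceding its statement, with no genuinely new input required. I would begin from Equation~(\ref{commutator1null}), which already expresses the commutator in the form $[[\bm{1}_D], P]\Phi = \calY\Phi \cdot \delta_{\del{D}, n}$ with $\calY = 2\hat{n}^\sharp + [\Theta + \hat{n}(X)]$. The goal is to transfer the first-order part $2\hat{n}^\sharp$ of $\calY$ from acting on the smooth factor $\Phi$ to acting on the product $\Phi \cdot \delta_{\del{D}, n}$, thereby exhibiting the whole expression as a single differential operator applied to $\Phi \cdot \delta_{\del{D}, n}$.

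First I would invoke the Leibniz rule for the vector field $\hat{n}^\sharp$ to write $\calY\Phi \cdot \delta_{\del{D}, n} = \calY(\Phi \cdot \delta_{\del{D}, n}) - 2\Phi \cdot \hat{n}^\sharp \delta_{\del{D}, n}$, exactly as displayed in the text. The correction term $\hat{n}^\sharp \delta_{\del{D}, n}$ then has to be re-expressed as a multiple of $\delta_{\del{D}, n}$. This is precisely the content of the adjoint computation displayed immediately above the statement, whose upshot is the single-layer identity $\hat{n}^\sharp \delta_{\del{D}, n} = [\Theta - {\div}(\hat{n}^\sharp)]\,\delta_{\del{D}, n}$.

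Substituting this identity back and collecting the zeroth-order coefficients, the surviving multiplication operator becomes $[\Theta + \hat{n}(X)] - 2[\Theta - {\div}(\hat{n}^\sharp)] = 2{\div}(\hat{n}^\sharp) - \Theta + \hat{n}(X)$, while the first-order part is the single term $2\hat{n}^\sharp$ now acting on $\Phi \cdot \delta_{\del{D}, n}$. Together these are exactly $\calT(\Phi \cdot \delta_{\del{D}, n})$ with $\calT = 2\hat{n}^\sharp + [2{\div}(\hat{n}^\sharp) - \Theta + \hat{n}(X)]$, which is Equation~(\ref{commutator2null}). For the homogeneous specialisation I would expand the commutator as $[[\bm{1}_D], P]\Phi = \bm{1}_D\,(P\Phi) - P(\bm{1}_D\,\Phi)$; when $\Phi \in \ker P$ the first term drops out, leaving $P(\Phi \cdot \bm{1}_D) = -\calT(\Phi \cdot \delta_{\del{D}, n})$, which is Equation~(\ref{commidentityhom}).

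The one genuinely substantive point—more a subtlety than an obstacle—is the justification of the single-layer identity $\hat{n}^\sharp \delta_{\del{D}, n} = [\Theta - {\div}(\hat{n}^\sharp)]\,\delta_{\del{D}, n}$, and in particular the fact that $\hat{n}^\sharp$ acts on $\delta_{\del{D}, n}$ through tangential derivatives along $\del{D}$ alone. This is where the null character of the boundary is essential: because $\del{D}$ is null, $\hat{n}^\sharp$ restricts to the vector field $\mathfrak{n}$ tangent to $\del{D}$, so pairing $\hat{n}^\sharp \delta_{\del{D}, n}$ against a test density reduces, via Corollary~\ref{coradjointvecfield} and integration by parts on $\del{D}$, to the intrinsic divergence ${\div}_{\iota_n \mu_g}\mathfrak{n}$ of $\mathfrak{n}$ with respect to the induced volume, whose chosen extension is $\Theta$. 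Everything else is a routine rearrangement of terms, and independence of the final operator $\calT$ from the particular extensions $\hat{n}$ and $\Theta$ follows automatically once these two displayed identities are in hand.
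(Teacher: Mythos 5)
Your proposal is correct and follows essentially the same route as the paper: the corollary is recorded there without a separate proof precisely because the displayed manipulations immediately preceding it (the Leibniz-rule transfer starting from Equation (\ref{commutator1null}) and the adjoint computation giving $\hat{n}^\sharp \delta_{\del{D},n} = [\Theta - \div(\hat{n}^\sharp)]\,\delta_{\del{D},n}$, which relies on $\hat{n}^\sharp$ being tangent to the null boundary) are the proof, and your sign bookkeeping and the commutator expansion for the homogeneous case both check out.
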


\subsection{Uniqueness}\label{uniquenesslindipsec}

The abstract essence of the uniqueness argument for two-sided characteristic Cauchy problems may now be revealed.  Let us assume yet again all the hypotheses of Theorem \ref{existencethmnodistrsol}.  By item \ref{JN} in Proposition \ref{intersectJN}, $N \doteq J(\scrN)$ is an open submanifold of $M$.   There is therefore a continuous restriction map of distributions, $r : \scrD'(M) \to \scrD'(N)$.  It is also clear that, as subsets of $N$ with the relative topology, $J^+(\scrN)$ and $J^-(\scrN)$ are domains with regular boundary---the boundary being equal to $\scrN$ in both cases.  Any global, past-directed, smooth section of the null line bundle of $\scrN$ yields via index lowering a smooth conormal field $n$ along $\scrN$ which is everywhere outward pointing [resp.\ inward pointing] relative to $J^+(\scrN)$ [resp.\ $J^-(\scrN)$].  Let also $\calT$ be the first-order differential operator defined in Equation (\ref{commutator2null}).

Suppose that the following favourable scenario occurs:
\begin{enumerate}[\textbf{(\roman*)}]
\item \label{domainexist} There exist linear subspaces $E_{\bm{1}}$ and $E_{\delta}$ of $\scrD'(N)$, containing $C^\infty(N)$ and such that the operators of multiplication by the indicator functions of $J^+(\scrN)$ and $J^-(\scrN)$, and the operator of multiplication by $\delta_{\scrN,n}$, extend to linear maps
\begin{equation*} \calM_+, \calM_- : E_{\bm{1}} \to \scrD'(N) \quad \text{and} \quad \calM_\delta :  E_{\delta} \to \scrD'(N).
\end{equation*}
\item For any $\Phi \in E_{\bm{1}}$, $\supp{\calM_\pm \Phi} \subseteq \supp{\Phi} \cap J^{\pm}(\scrN)$ and $[\calM_+ + \calM_-]\Phi = \Phi$. \label{multmapsprop}
\item There is a subspace $E$ of $\scrD'(M)$ such that $C^\infty(M) \subseteq E$ and $r(E) \subseteq E_{\bm{1}} \cap E_{\delta}$.  Furthermore, for any $\phi \in E$ with $P \phi = 0$, it holds that
\begin{equation}
P\calM_\pm r{\phi} = \mp \calT \calM_\delta r{\phi} \quad \text{in }\scrD'(N).
 \label{commidentity4}
\end{equation}
 \label{generalisedcommut}
\end{enumerate}
	In the presence of these desiderata, let $\phi \in E$ be such that $P \phi = 0$ and that $\supp{(\calM_+ r \phi)}$ is past compact in $\scrM$.  Henceforth, $\rmG^\dagger_+, \rmG^\dagger_-$ will denote the retarded ($+$) and advanced ($-$) Green operators for $P^\dagger$.  Let now $\zeta \in C_{\c}^\infty(N)$ be arbitrary and notice that, under the assumptions on $\phi$ just given,
	\begin{equation*}
	\supp{(\calM_+ r \phi)} \cap J^-(\supp{\zeta}) \quad \text{is compact in $M$ and therefore also in $N$.}
\end{equation*}
Let $\theta \in C_{\c}^\infty(N)$ be equal to $1$ in a neighbourhood of this set.  Since $P^\dagger \rmG^\dagger_- \zeta = \zeta$ and $\supp{(\rmG^\dagger_-\zeta)} \subseteq J^-(\supp{\zeta})$, it is easy to see that $P^\dagger (\theta \rmG^\dagger_- \zeta) - \zeta$ has (compact) support disjoint from $\supp{(\calM_+ r \phi)}$.  Whereupon, using Equation (\ref{commidentity4}), one obtains
	\begin{align*}
	[\calM_+ r \phi] (\zeta \mu_g) &= [\calM_+ r \phi]\Big(P^\dagger (\theta \rmG^\dagger_- \zeta) \, \mu_g\Big) \\
	&= [P \calM_+ r \phi]\left(\theta \rmG^\dagger_- \zeta \, \mu_g\right) \\
	&= - [\calT \calM_\delta r \phi]\left(\theta \rmG^\dagger_- \zeta \, \mu_g\right).
	\end{align*}
It follows that $\calM_+ r \phi = 0$ if $\calM_\delta r \phi = 0$.  Similar arguments show that if $\phi \in E$, $P\phi = 0$, and $\supp{(\calM_- r \phi)}$ is future compact in $\scrM$, then $\calM_\delta r \phi = 0$ implies that $\calM_- r \phi = 0$.  In particular, if both $\supp{(\calM_+ r \phi)}$ is past compact and $\supp{(\calM_- r \phi)}$ is future compact then, by property \ref{multmapsprop}, $r{\phi} = 0$ whenever $\phi \in E$, $P\phi = 0 $ and $\calM_\delta r \phi = 0$.
	
	The relation with characteristic Cauchy problems is as follows: when $\phi \in C^\infty(M)$, Equation (\ref{commidentity4}) reduces to Equation (\ref{commidentityhom}) with $N = J(\scrN)$, the domain $D=J^\pm(\scrN)$, and $\Phi \doteq r \phi$.  In particular, in that case $\calM_\delta r \phi = 0$ is equivalent to the statement that $\phi$ vanishes on $\scrN$, and $\calM_\pm r \phi = 0$ if and only if $\phi$ itself vanishes on $J^\pm(\scrN)$.  The following proposition summarises what has been shown so far for future reference.  

\begin{proposition}\label{uniquenessabstract}
Let $\scrM = (M,g,\mathfrak{t})$, $\scrN$ and $P$ be as in Theorem \ref{existencethmnodistrsol}.  With $N = J(\scrN)$, let there be linear subspaces $E_{\bm{1}}, E_\delta$ of $\scrD'(N)$, $E$ of $\scrD'(M)$,  and generalised multiplication maps $\calM_\pm, \calM_\delta$ with the properties listed in \emph{\ref{domainexist}}, \emph{\ref{multmapsprop}} and \emph{\ref{generalisedcommut}}.   Then, for any $\phi \in E$ with $P\phi = 0$ and $\calM_\delta r \phi = 0$:
\begin{enumerate}
\item $\calM_+ r \phi = 0$ whenever $\supp{(\calM_+ r \phi)}$ is past compact in $\scrM$;
\item $\calM_- r \phi = 0$ whenever $\supp{(\calM_- r \phi)}$ is future compact in $\scrM$;
\item $r \phi = 0$ whenever $\supp{(\calM_+ r \phi)}$ is past compact in $\scrM$ and $\supp{(\calM_- r \phi)}$ is future compact in $\scrM$.
\end{enumerate}
In particular, if $\phi \in C^\infty(M)$ solves $P\phi = 0$, vanishes on $\scrN$, and has $\supp{(\calM_+ r \phi)}$ past compact [resp.\ $\supp{(\calM_- r \phi)}$ future compact] in $\scrM$, then $\phi = 0$ on $J^+(\scrN)$ [resp.\ on $J^-(\scrN)$]. \qed
\end{proposition}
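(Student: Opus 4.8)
The plan is to prove assertion (1) by a duality argument, then deduce (2) by time reversal and (3) formally, and finally read off the concluding sentence as the smooth specialisation. Since $\calM_+ r\phi \in \scrD'(N)$, to show $\calM_+ r\phi = 0$ it suffices to verify $[\calM_+ r\phi](\zeta\mu_g) = 0$ for every $\zeta \in C^\infty_\c(N)$. The governing idea is to rewrite $\zeta$, modulo an error supported away from $\supp(\calM_+ r\phi)$, as $P^\dagger$ applied to a \emph{compactly supported} test object, so that the hypothesis $P\phi = 0$ can be brought to bear through the generalised commutator identity (\ref{commidentity4}) of property \ref{generalisedcommut}.

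First I would introduce the advanced Green operator $\rmG^\dagger_-$ for the (normally hyperbolic) formal adjoint $P^\dagger$, giving $P^\dagger \rmG^\dagger_- \zeta = \zeta$ with $\supp(\rmG^\dagger_- \zeta) \subseteq J^-(\supp\zeta)$. As $\rmG^\dagger_- \zeta$ need not be compactly supported, I would cut it off: since $\supp(\calM_+ r\phi)$ is past compact and $J^-(\supp\zeta)$ is the causal past of the compact set $\supp\zeta$, global hyperbolicity forces $\supp(\calM_+ r\phi) \cap J^-(\supp\zeta)$ to be compact in $M$, hence in $N$. Choosing $\theta \in C^\infty_\c(N)$ equal to $1$ on a neighbourhood of this compact set, the error $P^\dagger(\theta \rmG^\dagger_- \zeta) - \zeta$ then has compact support disjoint from $\supp(\calM_+ r\phi)$.

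With the cutoff in place the computation is short: pairing against $\calM_+ r\phi$ annihilates the error, and moving $P$ back across via the distributional definition of the adjoint yields $[\calM_+ r \phi](\zeta \mu_g) = [\calM_+ r \phi]\bigl(P^\dagger(\theta \rmG^\dagger_- \zeta)\,\mu_g\bigr) = [P\calM_+ r \phi]\bigl(\theta \rmG^\dagger_- \zeta\,\mu_g\bigr) = -[\calT\calM_\delta r \phi]\bigl(\theta\rmG^\dagger_-\zeta\,\mu_g\bigr)$, where the last step invokes property \ref{generalisedcommut}. The standing hypothesis $\calM_\delta r\phi = 0$ makes this vanish, and since $\zeta$ was arbitrary we conclude $\calM_+ r\phi = 0$, proving (1). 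The time-reversed argument, using the retarded operator $\rmG^\dagger_+$ and future compactness of $\supp(\calM_- r\phi)$, gives (2); then (3) is immediate from the relation $[\calM_+ + \calM_-]r\phi = r\phi$ of property \ref{multmapsprop}. The concluding sentence is just the specialisation to $\phi \in C^\infty(M)$, where (\ref{commidentity4}) reduces to the concrete jump formula (\ref{commidentityhom}), the condition $\calM_\delta r\phi = 0$ becomes $\phi\restriction_\scrN = 0$, and $\calM_\pm r\phi = 0$ becomes the vanishing of $\phi$ on $J^\pm(\scrN)$.

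The main obstacle is not analytic but a matter of causal bookkeeping: ensuring that $\theta$ can be chosen so that $P^\dagger(\theta\rmG^\dagger_-\zeta)$ agrees with $\zeta$ throughout $\supp(\calM_+ r\phi)$ while $\theta\rmG^\dagger_-\zeta$ stays compactly supported. This is exactly where past compactness of $\supp(\calM_+ r\phi)$, the support property of $\rmG^\dagger_-$, and global hyperbolicity combine to furnish the needed compactness of the overlap; once that is secured, every remaining step is a formal manipulation of the pairing together with a single application of the generalised commutator identity.
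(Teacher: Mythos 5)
Your proposal is correct and follows essentially the same route as the paper: the same duality argument with the advanced Green operator $\rmG^\dagger_-$ for $P^\dagger$, the same cutoff $\theta$ exploiting the compactness of $\supp(\calM_+ r\phi)\cap J^-(\supp\zeta)$, the same single application of the generalised commutator identity (\ref{commidentity4}), followed by time reversal for (2), property \ref{multmapsprop} for (3), and the smooth specialisation via (\ref{commidentityhom}). No gaps.
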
	
	
	Proposition \ref{uniquenessabstract} does not yet provide a concrete uniqueness result in low regularity.  It simply reduces the problem of finding such a result to the problem of finding spaces $E_{\bm{1}}, E_\delta, E$ and generalised multiplication maps $\calM_+$, $\calM_-$ and $\calM_\delta$ as per \ref{domainexist}, \ref{multmapsprop} and \ref{generalisedcommut}---with $E$ a strict enlargement of $C^\infty(M)$.  This will be done in what follows.
	
\subsubsection*{Uniqueness in the concrete}	
	
	Henceforth, the view will be taken that the product between distributions on manifolds should be intended (whenever it exists) in the ``Fourier--Ambrose'' sense \cite{ambrose1980products, oberguggenberger1986products}.  In this framework, it is not difficult to use known results to suggest candidate subspaces $E_{\bm{1}}, E_\delta$ satisfying the desiderata \ref{domainexist} and \ref{multmapsprop} in the previous section, with $\calM_+$, $\calM_-$ and $\calM_\delta$ being the products with the indicator function of $J^+(\scrN)$, the indicator function of $J^-(\scrN)$, and $\delta_{\scrN, n}$ (respectively) in the Fourier--Ambrose sense.  The notion of \emph{Sobolev} (or $H^s$) \emph{wave front set} $WF^s(u)$ of a distribution $u$, introduced in \cite[p.\ 201]{duistermaat1972fourier}, will be used, and the reader is referred to \cite[App.\ B]{junker2002adiabatic} for a review.
	
	\begin{proposition}[Sobolev wave front set criterion for existence of distribution products, Cor.\ 3.1 in \cite{oberguggenberger1986products}]\label{sobwfsetcrit} Let $N$ be a manifold and $u,v \in \scrD'(N)$.  The product $u \cdot v$ exists in $\scrD'(N)$ and in the Fourier--Ambrose sense if, for every $(x, \xi) \in T^*N {\setminus } \bm{0}$, there exist $s, t \in \R$ with $s+t \geq 0$, $(x, \xi) \notin WF^s(u)$ and $(x, \xi) \notin WF^t(v)$. \qed
	\end{proposition}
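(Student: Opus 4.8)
The plan is to reduce the assertion to the standard microlocal fact that, for compactly supported distributions, the product exists precisely when the convolution of their Fourier transforms is a tempered distribution, and then to extract polynomial boundedness of that convolution from the hypothesis by a conic decomposition of frequency space. Since existence of a product in the Fourier--Ambrose sense is a purely local matter, first I would fix a point $x_0 \in N$, pass to a coordinate chart identifying a neighbourhood with an open subset of $\R^n$ (where $n = \dim N$), and multiply by a cut-off $\varphi \in C^\infty_\c$ with $\varphi(x_0) \neq 0$. Writing $U \doteq \varphi u$ and $V \doteq \varphi v$, both are compactly supported, so by the Paley--Wiener--Schwarz theorem $\widehat{U}, \widehat{V} \in C^\infty(\R^n)$ are of polynomial growth. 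The local product exists provided the convolution $\widehat{U} * \widehat{V}$ is a well-defined tempered distribution (equivalently, a continuous function of at most polynomial growth), the product being $(2\pi)^{-n}$ times its inverse Fourier transform. The task thus becomes the estimate
\begin{equation*}
\Big| \int_{\R^n} \widehat{U}(\eta - \xi)\, \widehat{V}(\xi)\, \d\xi \Big| \le C\, (1 + |\eta|)^M, \qquad \eta \in \R^n,
\end{equation*}
for suitable $C, M$, together with absolute convergence of the integral.

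Next I would set up a decomposition adapted to the hypothesis. Over the compact set $\overline{\supp \varphi}$ the relevant cotangent directions form a compact set, so finitely many open cones $\Gamma_1, \ldots, \Gamma_m$ cover $\R^n \setminus \bm{0}$ with the property that, for each $j$, the hypothesis at a direction $\xi_j \in \Gamma_j$ supplies orders $s_j, t_j \in \R$ with $s_j + t_j \ge 0$ and (after shrinking $\varphi$) the weighted $L^2$ bounds
\begin{equation*}
\int_{\Gamma_j} (1 + |\xi|)^{2 s_j}\, |\widehat{U}(\xi)|^2 \,\d\xi < \infty \quad\text{and}\quad \int_{\Gamma_j} (1 + |\xi|)^{2 t_j}\, |\widehat{V}(\xi)|^2\, \d\xi < \infty.
\end{equation*}
In those cones along which $u$ (resp.\ $v$) is microlocally smooth, $\widehat{U}$ (resp.\ $\widehat{V}$) decays rapidly and the corresponding order may be taken arbitrarily large; the only binding constraints come from the finitely many genuinely singular directions of $u$ and of $v$.

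The heart of the argument is the convolution estimate, which I would organise by a Littlewood--Paley-type splitting into dyadic-conic pieces where $|\xi| \sim 2^k$ and $\xi$ lies in one of the cones $\Gamma_j$. On each such piece, Cauchy--Schwarz bounds the contribution by the product of the two weighted $L^2$ masses at scale $2^k$ — here the Peetre inequality $(1 + |\xi + \zeta|)^a \le 2^{|a|}(1 + |\xi|)^{|a|}(1 + |\zeta|)^a$ is used to transfer weights between the two factors, whose arguments $\xi$ and $\eta - \xi$ are linked by the convolution — yielding a factor $\lesssim 2^{-(s_j + t_j)k}$. \textbf{I expect this balancing to be the main obstacle.} Summing over $k$, convergence is governed exactly by $s_j + t_j \ge 0$, with the borderline case $s_j + t_j = 0$ being delicate and requiring a finer (not merely absolutely convergent) interpretation of the pairing. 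The geometric crux is that the convolution constraint $\xi + (\eta - \xi) = \eta$ forces a specific matching between the singular cones of $\widehat U$ and those of $\widehat V$; one must verify that no output frequency receives a singular contribution from both factors at once, and it is precisely the pointwise hypothesis on $WF^s(u)$ and $WF^t(v)$ that excludes this. The residual near-diagonal region $\xi \approx \eta$ contributes only polynomial growth in $\eta$, which is harmless.

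Finally, summing the finitely many pieces gives the desired polynomial bound, so $\widehat{U} * \widehat{V}$ defines a tempered distribution and the local product exists. I would then check that it is independent of the cut-off (the difference of two admissible cut-offs vanishes near $x_0$, so the discrepancy is supported away from $x_0$ and does not affect the germ of the product there), and glue the germs by a partition of unity subordinate to a cover of $N$, using locality of the Fourier--Ambrose product, to obtain a well-defined $u \cdot v \in \scrD'(N)$.
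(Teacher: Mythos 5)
The paper offers no proof of this proposition: it is imported verbatim from Oberguggenberger (Cor.\ 3.1 there) and closed with a \qed, so there is no internal argument to measure yours against. Your outline does follow the route by which such criteria are established in the literature --- localize, pass to Fourier transforms of the cut-off distributions, and control the convolution $\widehat{U} * \widehat{V}$ by a conic decomposition combined with Cauchy--Schwarz and the Peetre inequality --- but the step you yourself flag as ``the main obstacle'' is exactly where the proof lives, and the mechanism you propose for it does not work as described.

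The dangerous region of $\int \widehat{U}(\eta-\xi)\,\widehat{V}(\xi)\,\d\xi$ is where $|\xi|$ and $|\eta-\xi|$ are both large while $\eta$ stays comparatively small; there $(\eta-\xi)/|\eta-\xi|$ is close to $-\xi/|\xi|$, so the estimate unavoidably pairs the weighted $L^2$ mass of $\widehat{V}$ on a cone $\Gamma$ with that of $\widehat{U}$ on the \emph{antipodal} cone $-\Gamma$. Hence the orders that must sum to $\geq 0$ are an order for $u$ at $(x,-\xi)$ and an order for $v$ at $(x,\xi)$ --- the same antipodal pairing as in H\"ormander's smooth-category product theorem. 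Your claim that ``no output frequency receives a singular contribution from both factors at once'' and that ``it is precisely the pointwise hypothesis\dots that excludes this'' is therefore unsubstantiated: with the hypothesis read literally at the same covector for both factors, nothing prevents $u$ from being $H^s$-singular only along $+\xi_0$ and $v$ only along $-\xi_0$. In one dimension, $u=(x+i0)^{-1}$ and $v=(x-i0)^{-1}$ satisfy the same-$\xi$ condition (take $s=-1$, $t=1$ on positive directions and $s=1$, $t=-1$ on negative ones), yet $\widehat{\varphi u}*\widehat{\varphi v}$ diverges even as a symmetric principal value, so the Fourier--Ambrose product does not exist. You must therefore either prove the statement in its $(x,\xi)$-for-$u$, $(x,-\xi)$-for-$v$ form, or invoke an antipodal symmetry of the wave front sets at hand (which happens to hold in the paper's application, where both sets equal a full conormal bundle $N^*\scrN \setminus \bm{0}$). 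Two smaller repairs: the borderline case $s+t=0$ is not delicate --- a single Cauchy--Schwarz with weights $(1+|\xi|)^{\pm s}$ plus Peetre already yields absolute convergence and the bound $C(1+|\eta|)^{|s|}$, with no dyadic refinement needed --- and the reduction to ``finitely many genuinely singular directions'' is not available in general, since $WF^s(u)$ need not be a finite union of rays; the finite conic cover with uniform orders must instead be extracted from compactness of the cosphere together with the openness, in $(x,\xi)$, of the complement of each $WF^s$.
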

	
	\begin{lemma}\label{lemextension}
	Let $\scrM = (M,g,\mathfrak{t})$, $\scrN$ and $P$ be as in Theorem \ref{existencethmnodistrsol}.  Set $N = J(\scrN)$.  Then
	\begin{align*}\label{wfcondn} 
	E_{\bm{1}} &\doteq \Bigg\{ \Phi \in \scrD'(N) \ \Bigg| \bigcap_{s>-1/2} WF^s(\Phi) \cap N^*\scrN = \emptyset \Bigg\} \quad \text{and} \\
	E_{\delta} &\doteq \Bigg\{ \Phi \in \scrD'(N) \ \Bigg| \bigcap_{s>1/2} WF^s(\Phi) \cap N^*\scrN = \emptyset \Bigg\} \subset E_{\bm{1}}
	\end{align*}
	satisfy \emph{\ref{domainexist}} and \emph{\ref{multmapsprop}} on p.\ \pageref{domainexist}, $\calM_+$, $\calM_-$ and $\calM_\delta$ being operators of multiplication, in the Fourier--Ambrose sense, with the indicator function of $J^+(\scrN)$, the indicator function of $J^-(\scrN)$, and $\delta_{\scrN, n}$ (respectively).  \end{lemma}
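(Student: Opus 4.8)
The plan is to deduce both \ref{domainexist} and \ref{multmapsprop} from the Sobolev wave front set criterion of Proposition~\ref{sobwfsetcrit}, once the Sobolev wave front sets of the three multipliers $\bm{1}_{J^+(\scrN)}$, $\bm{1}_{J^-(\scrN)}$ and $\delta_{\scrN,n}$ have been identified. The decisive structural point---and the very reason for passing to $N=J(\scrN)$---is that, although $\del J^\pm(\scrN)$ is merely a locally Lipschitz topological hypersurface when viewed in $M$, inside the open submanifold $N$ the topological boundary of each of $J^+(\scrN)$ and $J^-(\scrN)$ is the \emph{smooth} hypersurface $\scrN$ itself. Hence all three multipliers are conormal distributions on $N$ with singularities confined to the smooth conormal bundle $N^*\scrN$. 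Flattening $\scrN$ to a coordinate hyperplane, each multiplier factorises microlocally as a one-dimensional transverse model (a Heaviside jump for the indicators, a transverse Dirac mass for $\delta_{\scrN,n}$) tensored with smooth data along $\scrN$; the elementary facts that the Heaviside lies in $H^\sigma_\loc$ exactly for $\sigma<1/2$ and the Dirac mass exactly for $\sigma<-1/2$ then yield
\begin{equation*}
WF^s\big(\bm{1}_{J^\pm(\scrN)}\big) = \begin{cases} \emptyset, & s < 1/2, \\ N^*\scrN \setminus \bm{0}, & s \geq 1/2, \end{cases} \qquad WF^s\big(\delta_{\scrN,n}\big) = \begin{cases} \emptyset, & s < -1/2, \\ N^*\scrN \setminus \bm{0}, & s \geq -1/2. \end{cases}
\end{equation*}

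First I would dispatch the elementary algebraic facts. Both $E_{\bm{1}}$ and $E_{\delta}$ are linear subspaces: using the subadditivity $WF^s(\Phi_1+\Phi_2)\subseteq WF^s(\Phi_1)\cup WF^s(\Phi_2)$ together with the monotonicity $WF^{s}(\Phi)\subseteq WF^{s'}(\Phi)$ for $s\le s'$, one passes at each $(x,\xi)\in N^*\scrN$ to a common index below which both summands are microlocally regular. Both spaces contain $C^\infty(N)$ trivially, and $E_{\delta}\subseteq E_{\bm{1}}$ holds because intersecting over the larger index range $s>-1/2$ produces a set contained in the intersection over $s>1/2$.

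Next comes the existence of the products, which is where the thresholds do their work. For $\Phi\in E_{\bm{1}}$ I would apply Proposition~\ref{sobwfsetcrit} to $u=\Phi$ and $v=\bm{1}_{J^\pm(\scrN)}$. Off $N^*\scrN$ the multiplier is microlocally smooth, so one pairs a sufficiently negative $s$---at which $\Phi$, being locally of finite order, is microlocally regular---with an arbitrarily large $t$. On $N^*\scrN$ the defining condition of $E_{\bm{1}}$ furnishes some $s_0>-1/2$ with $(x,\xi)\notin WF^{s_0}(\Phi)$; since $(x,\xi)\notin WF^t(\bm{1}_{J^\pm(\scrN)})$ for every $t<1/2$ and $-s_0<1/2$, any $t\in[-s_0,1/2)$ gives $s_0+t\geq 0$, verifying the criterion. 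The product $\Phi\cdot\delta_{\scrN,n}$ for $\Phi\in E_{\delta}$ is handled identically: $E_{\delta}$ now supplies $s_0>1/2$ while $\delta_{\scrN,n}$ is microlocally $H^t$ for every $t<-1/2$, so $t\in[-s_0,-1/2)$ works. This exact tuning of $\mp 1/2$ against $\pm 1/2$ is why the two index thresholds appear in the definitions, and since the Fourier--Ambrose products agree with ordinary multiplication on $C^\infty(N)$, the maps $\calM_\pm,\calM_\delta$ are genuine linear extensions into $\scrD'(N)$, establishing \ref{domainexist}.

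It remains to record \ref{multmapsprop}. Because $J^\pm(\scrN)$ is closed in $N$ (its boundary $\scrN$ being contained in it), $\supp\bm{1}_{J^\pm(\scrN)}=J^\pm(\scrN)$, and as the support of a Fourier--Ambrose product is contained in the intersection of the supports of its factors, one obtains $\supp(\calM_\pm\Phi)\subseteq\supp\Phi\cap J^\pm(\scrN)$. For the completeness relation I would use that $\bm{1}_{J^+(\scrN)}+\bm{1}_{J^-(\scrN)}=1$ in $\scrD'(N)$, the two indicators overlapping only on the null set $\scrN$; bilinearity of the product and the identity $\Phi\cdot 1=\Phi$ then give $[\calM_++\calM_-]\Phi=\Phi\cdot(\bm{1}_{J^+(\scrN)}+\bm{1}_{J^-(\scrN)})=\Phi$. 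The main obstacle is the very first step---pinning down the three Sobolev wave front sets with their sharp thresholds---since both the applicability of the product criterion and the precise numerical values $\pm 1/2$ in the statement hinge entirely on these computations; once they are in hand, the algebraic and support verifications are routine bookkeeping.
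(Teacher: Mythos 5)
Your proposal is correct and follows essentially the same route as the paper: identify the Sobolev wave front sets of $\bm{1}_{J^\pm(\scrN)}$ and $\delta_{\scrN,n}$ (with thresholds $1/2$ and $-1/2$, which the paper simply cites from the literature rather than re-deriving via the transverse Heaviside/Dirac model) and then apply the Fourier--Ambrose criterion of Proposition \ref{sobwfsetcrit}. The additional bookkeeping you supply---linearity, the pairing $t\in[-s_0,\pm 1/2)$, the support inclusion, and $\bm{1}_{J^+(\scrN)}+\bm{1}_{J^-(\scrN)}=1$ a.e.---is exactly what the paper compresses into ``the lemma then follows at once.''
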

	\begin{proof}
	If $N$ is a manifold and $D \subseteq N$ is a domain with regular boundary,
	\begin{equation*}
	WF^s(\bm{1}_D) = \begin{cases} \emptyset &\text{if } s < 1/2 \\ N^*\del{D} {\setminus } \bm{0} &\text{if } s \geq 1/2 \end{cases}
	\end{equation*}
	\cite[Thm.\ 3.4.1]{agranovich2015sobolev} and, if $n$ is a smooth field of conormals to $\del{D}$,
	\begin{equation*}
	WF^s(\delta_{\del{D},n}) = \begin{cases} \emptyset &\text{if } s < -1/2 \\ N^*\del{D} {\setminus } \bm{0} &\text{if } s \geq -1/2 \end{cases}
	\end{equation*}
	\cite[Eq.\ (106)]{junker2002adiabatic}.  The lemma then follows at once using Proposition \ref{sobwfsetcrit}. 
	\end{proof}
	
For any manifold $N$ and $s \in \R$, let
\begin{equation*}
H^{s+}_\loc(N) \doteq \bigcup_{\ell > s} H^{\ell}_\loc(N) \quad \text{and} \quad H^{s-}_\loc(N) \doteq \bigcap_{\ell < s} H^{\ell}_\loc(N).
\end{equation*}
	By the result just proved, for any $\phi \in H^{1/2+}_\loc(M)$ it clearly holds that $r{\phi} \in E_{\bm{1}} \cap E_{\delta}$.  A concrete uniqueness theorem can finally be given, the idea being that a density/continuity argument shows that $E \doteq H^{1/2+}_\loc(M)$ satisfies desideratum \ref{generalisedcommut} on p.\ \pageref{domainexist}.  As preparation, let us recall that, for any $s \in \R$, the extended causal Green operator $\rmG : \scrD'_{\tc}(M) \to \scrD'(M)$ maps $H^s_\c(M)$ to $H^{s+1}_\loc(M)$---indeed, it does so continuously in the standard topologies of these spaces \cite[Thm.\ 6.5.3]{duistermaat1972fourier}.  By a simple exhaustion argument (omitted), $\rmG$ then also maps $H^s_\tc(M)$ to $H^{s+1}_\loc(M)$.  In particular,
\begin{equation}\label{HslockerP}
H^{s}_\loc(M) \cap \ker P = \rmG H^{s-1}_\tc(M)
\end{equation}
as can be seen using a standard procedure.\footnote{\label{footnoteinverseop}Let $\scrC_1, \scrC_2$ be spacelike Cauchy surfaces for $\scrM$ with $\scrC_2 \subset I^+(\scrC_1)$.  Let $\rho_+$ be any smooth function on $M$ which equals $0$ on $J^-(\scrC_1)$ and $1$ on $J^+(\scrC_2)$.  If $\phi \in H^{s}_\loc(M)$ and $P \phi = 0$ then $\eta \doteq P(\rho_+ \phi)$ is in $H^{s-1}_{\tc}(M)$ by the Leibniz rule, and $\rmG \eta = \phi$.}
	
\begin{theorem}[Uniqueness in $H^{1/2+}_\loc$]\label{uniquenessconcrete}
Let $\scrM = (M,g,\mathfrak{t})$, $\scrN$ and $P$ be as in Theorem \ref{existencethmnodistrsol}.  Furthermore, let $r : \scrD'(M) \to \scrD'(J(\scrN))$ be the restriction map and $\calM_\pm$ be as in Lemma \ref{lemextension}.  If $\phi$ belongs to $H^{1/2+}_\loc(M) \cap \ker{P}$ and has vanishing trace on $\scrN$, $\supp{(\calM_+ r \phi)}$ past compact in $\scrM$ and $\supp{(\calM_- r \phi)}$ future compact in $\scrM$, then $\phi = 0$ on $J(\scrN)$.
\end{theorem}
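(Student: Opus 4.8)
The plan is to verify that the space $E \doteq H^{1/2+}_\loc(M)$ fulfils the three desiderata \ref{domainexist}, \ref{multmapsprop} and \ref{generalisedcommut} underlying Proposition \ref{uniquenessabstract}, and then simply to read off the conclusion from part (3) of that proposition. Items \ref{domainexist} and \ref{multmapsprop} are already supplied by Lemma \ref{lemextension}, while the inclusions $C^\infty(M)\subseteq E$ and $r(E)\subseteq E_{\bm 1}\cap E_\delta$ were noted immediately before the statement (an $H^\ell_\loc$ distribution has $WF^s$ empty for $s\le\ell$, so for $\ell>1/2$ both defining intersections are empty). The entire burden is therefore to establish, for every $\phi\in E\cap\ker P$, the commutator identity (\ref{commidentity4}); once this is in hand, the hypothesis of vanishing trace reads exactly $\calM_\delta r\phi=(\phi\restriction_\scrN)\,\delta_{\scrN,n}=0$, and the two support hypotheses are precisely those of part (3) of Proposition \ref{uniquenessabstract}, yielding $r\phi=0$.

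To prove (\ref{commidentity4}) in $\scrD'(J(\scrN))$ I would argue by density from the smooth case (Corollary \ref{corsecondcommutator}). Fix $\phi\in H^\ell_\loc(M)\cap\ker P$ with $\ell>1/2$. By Equation (\ref{HslockerP}) write $\phi=\rmG\eta$ with $\eta\in H^{\ell-1}_\tc(M)$, and approximate $\eta$ by a sequence $\eta_k\in C^\infty_\tc(M)$ with $\eta_k\to\eta$ in $H^{\ell-1}_\tc(M)$ (mollification against a partition of unity, a fixed pair of Cauchy surfaces keeping all supports temporally compact). Setting $\phi_k\doteq\rmG\eta_k$, Theorem \ref{baerthm} guarantees $\phi_k\in C^\infty(M)$, while $P\phi_k=P\rmG\eta_k=0$; moreover $\phi_k\to\phi$ in $H^\ell_\loc(M)$ by continuity of $\rmG\colon H^{\ell-1}_\tc(M)\to H^\ell_\loc(M)$. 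Each $\phi_k$ is a smooth solution, so Equation (\ref{commidentityhom}), applied on $N=J(\scrN)$ with $D=J^\pm(\scrN)$, gives (\ref{commidentity4}) for $\phi_k$, namely $P\calM_\pm r\phi_k=\mp\,\calT\calM_\delta r\phi_k$.

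It then remains to pass to the limit, and here the threshold $\ell>1/2$ is decisive. Because $\ell>0$ one has $\phi,\phi_k\in L^2_\loc(M)$, and the Fourier--Ambrose product with the bounded function $\bm 1_{J^\pm(\scrN)}$ reduces to the pointwise cut-off; hence $\calM_\pm r\phi_k=\bm 1_{J^\pm(\scrN)}\phi_k\to\bm 1_{J^\pm(\scrN)}\phi=\calM_\pm r\phi$ in $L^2_\loc$, a fortiori in $\scrD'(N)$. Because $\ell>1/2$, the trace operator $H^\ell_\loc(M)\to H^{\ell-1/2}_\loc(\scrN)$ is continuous, so $\phi_k\restriction_\scrN\to\phi\restriction_\scrN$ in $L^2_\loc(\scrN)$; reconciling the Fourier--Ambrose product $\delta_{\scrN,n}\cdot\phi$ with the single layer carried by the trace gives $\calM_\delta r\phi_k=(\phi_k\restriction_\scrN)\,\delta_{\scrN,n}\to(\phi\restriction_\scrN)\,\delta_{\scrN,n}=\calM_\delta r\phi$ in $\scrD'(N)$. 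Since $P$ and $\calT$ are differential operators, hence weak-$*$ continuous on $\scrD'(N)$, both sides of the smooth identity converge and (\ref{commidentity4}) follows for $\phi$. This verifies \ref{generalisedcommut} with $E=H^{1/2+}_\loc(M)$, and part (3) of Proposition \ref{uniquenessabstract} then delivers $\phi=0$ on $J(\scrN)$.

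I would flag two points as the genuine technical heart. The main obstacle is the identification $\delta_{\scrN,n}\cdot\phi=(\phi\restriction_\scrN)\,\delta_{\scrN,n}$ together with its continuity in $\phi$: this is exactly where the $H^{1/2+}$ hypothesis is indispensable, since the trace of $\phi$ on the \emph{null} hypersurface $\scrN$ becomes a well-defined $L^2_\loc$ object only above the Sobolev exponent $1/2$, dovetailing with the wave front computation for $\delta_{\scrN,n}$ in Lemma \ref{lemextension} and with the product criterion of Proposition \ref{sobwfsetcrit}. The secondary point is the density step: one must approximate $\eta$ within $H^{\ell-1}_\tc(M)$ by smooth, temporally compact sections in such a way that $\rmG$ returns genuine smooth \emph{global} solutions---routine given Theorem \ref{baerthm} and the continuity of $\rmG$, but requiring careful temporal-compactness bookkeeping against a fixed Cauchy temporal function.
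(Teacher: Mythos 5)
Your overall strategy coincides with the paper's: realise $E=H^{1/2+}_\loc(M)$ as a space satisfying desideratum \ref{generalisedcommut} by writing $\phi=\rmG\eta$ via Equation (\ref{HslockerP}), extend the smooth-case identity (\ref{commidentityhom}) to $\phi$ by approximation, and then quote Proposition \ref{uniquenessabstract}. The limit passage itself is sound in substance; where you identify $\calM_\pm$ and $\calM_\delta$ concretely as the pointwise cut-off on $L^2_\loc$ and the trace times $\delta_{\scrN,n}$, the paper instead obtains continuity of $\calM_\pm$ and $\calM_\delta$ as maps $H^\ell_\loc(N)\to H^{k}_\loc(N)$ directly from Thm.\ 8.3.1 in \cite{hormander1997lectures}, which avoids having to verify that the Fourier--Ambrose product is consistent with the trace product. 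Either route would close that step.

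The genuine gap is the density step, which you describe as ``routine'' but which is exactly what the structure of the paper's proof is designed to avoid. You approximate $\eta$ by $\eta_k\in C^\infty_\tc(M)$ ``in $H^{\ell-1}_\tc(M)$'' and deduce $\phi_k=\rmG\eta_k\to\phi$ in $H^{\ell}_\loc(M)$ ``by continuity of $\rmG$''. But $H^{s}_\tc(M)$ is introduced in Equation (\ref{sobolevrestrsupp}) only as a set: no topology is specified, no density of $C^\infty_\tc(M)$ in it is available, and the continuity of $\rmG$ is established only on $H^{s}_\c(M)$ (its action on temporally compact supports is obtained by an exhaustion argument, not as a continuous map on a named topological vector space). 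Mollification also does not by itself produce approximants whose supports remain inside a fixed temporally compact set \emph{and} which converge in a topology strong enough to control $\rmG$, $\calM_\pm$ and $\calM_\delta$ simultaneously. The paper's proof circumvents this in two stages: it first proves Equation (\ref{eqn}) for $\eta\in H^{\ell}_\c(M)$, where $C^\infty_\c(M)$ is genuinely dense and $r\circ\rmG: H^{\ell}_\c(M)\to H^{\ell+1}_\loc(N)$ is genuinely continuous; then, exploiting the locality of the identity to be proved, for each test density $\nu$ it constructs $\hat\phi=\rmG[\sigma P(\rho_+\phi)]\in H^{1/2+}_\sc(M)\cap\ker P$ agreeing with $\phi$ on a neighbourhood of $\supp\nu$ (the support computation involving $J(\overline{W})$), and applies the spatially compact case to $\hat\phi$. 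Your argument can be repaired, but only by carrying out essentially this localization (either on $\phi$ as the paper does, or on $\eta$ before applying $\rmG$); as written, the key convergence $\phi_k\to\phi$ in $H^{\ell}_\loc(M)$ is not justified.
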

\begin{proof}
Denote $J(\scrN)$ by $N$.  Since $\phi$ is in $H^{1/2+}_\loc(M) \cap \ker{P}$, by virtue of Equation (\ref{HslockerP}) there exists an $\ell > -1/2$ and $\eta \in H^{\ell}_{\tc}(M)$ such that $\rmG \eta = \phi$.  It needs to be shown that 
\begin{equation}\label{eqn}
[P \calM_\pm \pm \calT \calM_\delta]r \phi = [P \calM_\pm \pm \calT \calM_\delta]r \rmG \eta = 0
\end{equation}
[the operator $\calT$ is as defined in Equation (\ref{commutator2null}), while $\calM_\delta$ is as in Lemma \ref{lemextension}], for then Proposition \ref{uniquenessabstract} yields the desired result.  The following density argument shows that the claim holds true in the special case in which $\eta$ has compact support (equivalently, in which $\phi$ has spatially compact support):  $r \circ \rmG : H^\ell_c(M) \to H^{\ell+1}_\loc(N)$ is continuous by the remarks preceding the statement of this theorem.  $\delta_{\scrN,n}$ belongs to $H^{-1/2-}_\loc(N)$, and the indicator functions of $J^+(\scrN)$ and $J^-(\scrN)$ belong to $H^{1/2-}_\loc(N)$.  Hence, Thm.\ 8.3.1 in \cite{hormander1997lectures} shows that there exist $k, k' \in \R$ such that
\begin{equation*}
\calM_\delta : H_\loc^\ell(N) \to H^{k}_\loc(N) \quad \text{and} \quad \calM_\pm : H_\loc^\ell(N) \to H^{k'}_{\loc}(N)
\end{equation*} 
continuously.  Since $P$ and $\calT$ are differential operators, it follows that both $P \circ \calM_\pm \circ r \circ \rm G$ and $\calM_\delta \circ \calT \circ r \circ \rmG$ are continuous from $H^\ell_\c(M)$ to $\scrD'(N)$.  Finally, Equation (\ref{eqn}) can be shown to hold by considering a sequence $(\eta_n)_{n \in \N}$ of functions in $C_\c^\infty(M)$ tending to $\eta$ in the topology of $H_{\c}^\ell(M)$; such a sequence may be found since $C_\c^\infty(M)$ is dense in $H_{\c}^\ell(M)$.  In other words, Equation (\ref{eqn}) certainly holds when $\phi \in H^{1/2+}_\sc(M)$. \\
The general case follows by a reduction to the particular case just described.  Namely, consider the pairing of the left-hand side of Equation (\ref{eqn}) with a test density $\nu \in \Secs^\infty_\c(\calD N)$.  If $\hat{\phi} \in \scrD'(M)$ is any distribution such that $\hat{\phi}=\phi$ on a relatively compact open neighbourhood $W$ of $\supp{\nu}$, then
\begin{equation}\label{eqn2}
\pair{[P \calM_\pm \pm \calT \calM_\delta]r \hat{\phi}}{\nu} = \pair{[P \calM_\pm \pm \calT \calM_\delta]r \phi}{\nu}
\end{equation}
since all operations involved are local.  It will now be shown that such a $\hat{\phi}$ may be found which additionally lies in $H^{1/2+}_\sc(M)$; this will complete the proof since the left-hand side of Equation (\ref{eqn2}) was already proved to vanish in this case.  Let $\scrC_1, \scrC_2$ be spacelike Cauchy surfaces for $\scrM$ with $\scrC_2 \subset I^+(\scrC_1)$ and $\overline{W} \subseteq I^+(\scrC_1) \cap I^-(\scrC_2)$.  Let $\rho_+$ be any smooth function on $M$ which equals $0$ on $J^-(\scrC_1)$ and $1$ on $J^+(\scrC_2)$, and $\sigma \in C^\infty_\c (M)$ be equal to $1$ on the compact set $J(\overline{W}) \cap J^+(\scrC_1) \cap J^-(\scrC_2)$.  Then $\hat{\eta} \doteq \sigma P(\rho_+ \phi)$ belongs to $H^{-1/2+}_\c(M)$ and, if $\hat{\phi} \doteq \rmG \hat{\eta}$ (see Footnote \ref{footnoteinverseop}),
\begin{equation*}
\hat{\phi} - \phi = \rmG[(\sigma-1)P(\rho_+ \phi)].
\end{equation*}
Now, by construction $\supp[(\sigma-1)P(\rho_+ \phi)] \subseteq J^+(\scrC_1) \cap J^-(\scrC_2) \cap \overline{M {\setminus } J(\overline{W})}$ and thus
\begin{multline*}
\supp{[\hat{\phi} - \phi]} \cap W \subseteq J\big( \supp{[(\sigma - 1)P(\rho_+\phi)]} \big) \cap W \subseteq J\Big( \overline{M {\setminus } J(\overline{W})} \Big) \cap W \\ \subseteq \overline{I\big(M {\setminus } \overline{I(W)}\big)} \cap W \subseteq \overline{I(M {\setminus } I(W))} \cap W = \emptyset.\end{multline*}
That is, $\hat{\phi} - \phi = 0$ on $W$ and the proof is complete.
\end{proof}

\section{Representation formulae and continuous dependence}\label{repformulaeunivsec}

Theorem \ref{uniquenessconcrete} implies that the solution map
\begin{equation}
\Sol_\scrN : C^\infty_\tc(\scrN) \oplus C^\infty_\scrN(\scrM) \to C^0_\scrN(\scrM) \quad \text{given by} \quad (f,F) \mapsto \phi_{(f,F)}, \label{solnmapinhom}
\end{equation}
constructed in the proof of Theorem \ref{existencethmnodistrsol} via a highly non-unique extension procedure using Borel's lemma, is actually independent of any arbitrary choices made in that procedure.  Uniqueness also implies straightforwardly that this map is linear.

An alternative way of representing solutions will now be presented.  It is analogous to the expressions used in the proof of Thm.\ 22 in \cite{baer2015initial}.

\begin{proposition}\label{corprotrepformula}
With the hypotheses and notation of Theorem \ref{existencethmnodistrsol} and of Definition \ref{defncharacteristicdata}, the solution map in Equation (\ref{solnmapinhom}) is given by
\begin{equation}\label{repformulainhom}
\Sol_\scrN(f,F) = e(f) - \sum_\pm \rmG_\pm [\bm{1}^\pm]\big( P (e(f)) - F\big) \quad \forall \ (f,F) \in C^\infty_\tc(\scrN) \oplus C^\infty_\scrN(\scrM),
\end{equation}
where $e : C^\infty_\tc(\scrN) \to C_\tc^\infty(M)$ is any (not necessarily linear) extension map such that $\supp{e(f)} \subseteq J(\supp{f})$.\footnote{Such a map can certainly be constructed, e.g.\ by using Borel's lemma with a globally timelike vector field.}  In particular,
\begin{equation}\label{protrepformula}
\Sol^{\mathrm{h}}_\scrN \doteq \Sol_\scrN(\cdot, 0) = - \bigg(\sum_\pm \rmG_\pm [[\bm{1}^\pm], P] \bigg) \circ e.
\end{equation}
\end{proposition}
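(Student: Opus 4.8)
The plan is to check that the right-hand side of (\ref{repformulainhom}), which I shall call $\psi$, shares with $\phi_{(f,F)} = \Sol_\scrN(f,F)$ every defining feature of a solution, and then to apply the uniqueness statement of Theorem \ref{uniquenessconcrete} (in the abstract form of Proposition \ref{uniquenessabstract}) to the difference $w \doteq \psi - \phi_{(f,F)}$. Writing $h \doteq P(e(f)) - F \in C^\infty(M)$, the first task is to make sense of $\rmG_\pm [\bm{1}^\pm] h$. Since $e(f) \in C^\infty_\tc(M)$, its image $P(e(f))$ is temporally compact, while $F \in C^\infty_\scrN(\scrM)$; hence $[\bm{1}^+] h$ is past compact and $[\bm{1}^-] h$ is future compact, so the extended operators of Theorem \ref{baerthm} apply. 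Using that $P \rmG_\pm = \id$ on past/future compact distributions together with $(\bm{1}^+ + \bm{1}^-) h = \bm{1}_{J(\scrN)} h = h$ as distributions (the discrepancy being confined to the null set $\scrN$, and $\supp h \subseteq J(\scrN)$ because $\supp e(f) \subseteq J(\supp f)$ and $\supp F \subseteq J(\scrN)$), I would obtain $P\psi = P(e(f)) - [\bm{1}^+] h - [\bm{1}^-] h = F$.

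Next I would record regularity and support. Because $[\bm{1}^\pm] h \in H^{1/2-}_\loc$ and $\rmG$ raises Sobolev order by one [cf.\ (\ref{HslockerP}) and the remarks preceding Theorem \ref{uniquenessconcrete}], one has $\psi \in H^{3/2-}_\loc(M) \subset H^{1/2+}_\loc(M)$, securing the regularity hypothesis. The containment $\supp \psi \subseteq J(\scrN)$ follows from $\supp e(f) \subseteq J(\scrN)$ and the idempotence of $J^\pm$. The crucial support observation is that the ``cross terms'' vanish, $\bm{1}^\mp \cdot \rmG_\pm [\bm{1}^\pm] h = 0$: indeed $\rmG_\pm [\bm{1}^\pm] h$ is an $L^1_\loc$ function supported in $J^\pm(\scrN)$, while $\bm{1}^\mp$ is bounded and supported in $J^\mp(\scrN)$, the two overlapping only on the null set $\scrN$, so the Fourier--Ambrose product coincides with the (vanishing) almost-everywhere pointwise product. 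Consequently $\calM_+ r\psi = \bm{1}^+ e(f) - \bm{1}^+ \rmG_+ [\bm{1}^+] h$ has past compact support and, symmetrically, $\calM_- r\psi$ has future compact support.

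It remains to compute the trace $\calM_\delta r\psi$, and this is where I expect the main difficulty. The clean claim is $\delta_{\scrN,n} \cdot \rmG_\pm [\bm{1}^\pm] h = 0$. I would argue that, since $\rmG_\pm [\bm{1}^\pm] h$ lies in $H^{3/2-}_\loc \subset H^{1/2+}_\loc$, it admits a genuine single-valued trace on $\scrN$; being supported in $J^\pm(\scrN)$, it vanishes on an open set abutting $\scrN$ from the complementary side, so this trace is zero and $\delta_{\scrN,n} \cdot \rmG_\pm [\bm{1}^\pm] h = (\text{trace}) \cdot \delta_{\scrN,n} = 0$. Hence $\calM_\delta r\psi = \delta_{\scrN,n} \cdot e(f) = f\, \delta_{\scrN,n} = \calM_\delta r\phi_{(f,F)}$, so $w$ has vanishing trace. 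The delicate points here are purely about distribution products: that the products exist at all (Proposition \ref{sobwfsetcrit} and Lemma \ref{lemextension}), that multiplication of $\delta_{\scrN,n}$ by a function of regularity above $H^{1/2}$ returns the trace times $\delta_{\scrN,n}$, and that this same $H^{1/2+}$ regularity precludes a jump across $\scrN$ and thereby forces the two one-sided traces to coincide (and hence to vanish).

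With all the hypotheses of Proposition \ref{uniquenessabstract} verified for $w$, I conclude $rw = 0$; since $\psi$ and $\phi_{(f,F)}$ are both supported in $J(\scrN)$, this yields $\psi = \phi_{(f,F)} = \Sol_\scrN(f,F)$, proving (\ref{repformulainhom}). Finally, (\ref{protrepformula}) follows by a purely algebraic rearrangement in the case $F = 0$: writing $[\bm{1}^\pm] P(e(f)) = [[\bm{1}^\pm], P] e(f) + P([\bm{1}^\pm] e(f))$ and using $\rmG_\pm P([\bm{1}^\pm] e(f)) = [\bm{1}^\pm] e(f)$ (the Green operator being a left inverse on past/future compact distributions) together with $\sum_\pm [\bm{1}^\pm] e(f) = \bm{1}_{J(\scrN)} e(f) = e(f)$, the explicit extension term $e(f)$ cancels and one is left with $\Sol_\scrN(f,0) = - \sum_\pm \rmG_\pm [[\bm{1}^\pm], P] e(f)$, as claimed.
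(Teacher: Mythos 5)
Your proposal is correct and follows essentially the same route as the paper: one checks that the right-hand side of (\ref{repformulainhom}) lies in $H^{1/2+}_\loc(M)$ (the paper uses the slightly cruder bound $\bm{1}^\pm(P e(f)-F)\in L^2_{\pc/\fc}$, hence $H^1_{\pc/\fc}$ after applying $\rmG_\pm$), solves $P\psi=F$, and has the trace and causal support properties required by Theorem \ref{uniquenessconcrete}, so the difference with $\Sol_\scrN(f,F)$ vanishes; the homogeneous formula (\ref{protrepformula}) then follows by the identical commutator rearrangement. Your verification of the cross-term vanishing and of the trace identities is somewhat more explicit than the paper's ``clearly,'' but it is the same argument.
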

\begin{proof}
Let $\hat{f} \doteq e(f)$.  Then, under the given hypotheses,
\begin{equation*}
\supp{\big[\bm{1}^\pm (P \hat{f} - F) \big]} \subseteq \supp{(\bm{1}^\pm P \hat{f})} \cup \supp{(\bm{1}^\pm F)} \subseteq  J^\pm(\supp{f}) \cup \supp{(\bm{1}^\pm F)}.
\end{equation*}
That is, $\bm{1}^+ (P \hat{f} - F) \in L^2_{\pc}(M)$ and $\bm{1}^- (P \hat{f} - F) \in L^2_{\fc}(M)$, whence $\rmG_+ [\bm{1}^+ (P\hat{f} - F)] \in H^1_{\pc}(M)$ and $\rmG_- [\bm{1}^- (P\hat{f} - F)] \in H^1_{\fc}(M)$.  Note that the trace of the latter two distributions vanishes on $\scrN$ owing to the support properties of $\rmG_+$ and $\rmG_-$.  Since $\rmG_\pm$ is a right inverse of $P$,
\begin{equation*}
P\bigg\{\hat{f} - \sum_\pm \rmG_\pm [\bm{1}^\pm]\big(P \hat{f} - F \big)\bigg\} = P\hat{f} - \sum_\pm \bm{1}^\pm \big( P\hat{f} - F \big) = F.
\end{equation*}
Consider now the difference
\begin{equation*}
\phi \doteq \Sol_\scrN(f,F) - \bigg[\hat{f} - \sum_\pm \rmG_\pm [\bm{1}^\pm]\big(P \hat{f} - F \big) \bigg].
\end{equation*}
Clearly, $\phi \in H^1_\loc(M)$, $\supp{\phi} \subseteq J(\scrN)$, and $\phi$ satisfies all the hypotheses of Theorem \ref{uniquenessconcrete}.  Hence, $\phi = 0$ on $M$, proving Equation (\ref{repformulainhom}).  Equation (\ref{protrepformula}) is the result of the following calculation:
\begin{align*}
\bigg( \id - \sum_\pm \rmG_\pm [\bm{1}^\pm]P \bigg) \circ e &= \bigg( \id - \sum_\pm \rmG_\pm \left\{ [[\bm{1}^\pm], P] + P [\bm{1}^\pm] \right\} \bigg) \circ e \\
&= \bigg( \id - \sum_\pm \rmG_\pm \left\{ [[\bm{1}^\pm], P] + P [\bm{1}^\pm] \right\} \bigg) \circ e \\
&=  \bigg( \id - \sum_\pm \rmG_\pm [[\bm{1}^\pm], P] - \sum_\pm \rmG_\pm P[\bm{1}^\pm] \bigg) \circ e \\
&= - \bigg(\sum_\pm \rmG_\pm [[\bm{1}^\pm], P] \bigg) \circ e.
\end{align*}
\end{proof}

The extension map $e$ in Proposition \ref{corprotrepformula} may certainly be chosen so that it is continuous from $C_\c^\infty(\scrN)$ to $C_\c^\infty(J(\scrN))$.\footnote{This can be done, for instance, by defining $e$ to be the extension map constructed in the proof of Theorem \ref{globalborel}, with $f_n \equiv 0$ for each $n \geq 1$.} For any $\varepsilon > 0$, $[\bm{1}^\pm]$ maps $C_\c^\infty(J(\scrN))$ continuously to $H^{1/2-\varepsilon}_\c(J(\scrN))$, and $\rmG_\pm$ maps $H^{1/2-\varepsilon}_\c(M)$ continuously to $H^{3/2-\varepsilon}_\loc(M)$.  Equation (\ref{repformulainhom}) then immediately implies the following statement on the continuous dependence of solutions of two-sided characteristic Cauchy problems.

\begin{corollary}\label{corcontdependence}
The map $\Sol_\scrN$ in Proposition \ref{corprotrepformula} is continuous from $C_\c^\infty(\scrN) \oplus C_\c^\infty(J(\scrN))$ to $H^{3/2-\varepsilon}_\loc(M)$, for any $\varepsilon > 0$. \qed
\end{corollary}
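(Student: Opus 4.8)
The plan is to read the continuity off directly from the representation formula (\ref{repformulainhom}), tracking the regularity of the output through each operation in the composition. Writing $\Sol_\scrN(f,F) = e(f) - \sum_\pm \rmG_\pm [\bm{1}^\pm]\big(P(e(f)) - F\big)$, I would treat the leading term $e(f)$ and the two summands $\rmG_\pm [\bm{1}^\pm]\big(P(e(f)) - F\big)$ separately, verify that each is a composition of continuous linear maps with values in $H^{3/2-\varepsilon}_\loc(M)$, and then invoke continuity of addition on that Fréchet space.

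First I would fix, once and for all, an extension map $e$ that is continuous from $C_\c^\infty(\scrN)$ to $C_\c^\infty(J(\scrN))$; as recorded just before Proposition \ref{corprotrepformula} (and via the footnote there), the construction in the proof of Theorem \ref{globalborel} with all higher data set to zero furnishes such a map. Since the inclusion $C_\c^\infty(M) \hookrightarrow H^{3/2-\varepsilon}_\loc(M)$ is continuous, the leading term $f \mapsto e(f)$ is already continuous into the target. For the inner argument of the Green operators, $P$ is a differential operator and hence continuous $C_\c^\infty \to C_\c^\infty$, so $(f,F) \mapsto P(e(f)) - F$ is continuous from $C_\c^\infty(\scrN) \oplus C_\c^\infty(J(\scrN))$ into $C_\c^\infty(J(\scrN))$, with all supports remaining inside $J(\scrN)$.

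Next I would pass through the two remaining ingredients. Multiplication by the fixed indicator $\bm{1}^\pm$ sends $C_\c^\infty(J(\scrN))$ continuously into $H^{1/2-\varepsilon}_\c(J(\scrN))$: indeed $\bm{1}^\pm \in H^{1/2-}_\loc$ by the computation \cite[Thm.\ 3.4.1]{agranovich2015sobolev} recalled in the proof of Lemma \ref{lemextension}, and continuity of the multiplication map then follows from the Sobolev product estimates (Thm.\ 8.3.1 in \cite{hormander1997lectures}) already exploited in the proof of Theorem \ref{uniquenessconcrete}. Finally, each $\rmG_\pm$ maps $H^{1/2-\varepsilon}_\c(M)$ continuously into $H^{3/2-\varepsilon}_\loc(M)$ by the gain-of-one-derivative mapping property of the Green operators \cite[Thm.\ 6.5.3]{duistermaat1972fourier} applied with $s = 1/2-\varepsilon$. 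Composing these continuous maps shows each summand is continuous into $H^{3/2-\varepsilon}_\loc(M)$, and hence so is the whole of $\Sol_\scrN$.

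The only point needing slight care—rather than a genuine obstacle—is that the cited references are phrased for the causal operator $\rmG = \rmG_+ - \rmG_-$ (or in terms of wave front sets) and for multiplication between general Sobolev factors, whereas here I need the continuity of the \emph{individual} retarded and advanced operators $\rmG_\pm$ and of multiplication by the \emph{fixed} multiplier $\bm{1}^\pm$. Both are handled by the same Sobolev-multiplication and propagation-of-singularities machinery used earlier in this section, together with the observation that $\rmG_\pm$ enjoys the same order-$(+1)$ Sobolev mapping as $\rmG$. I would also note that continuity into the Fréchet space $H^{3/2-\varepsilon}_\loc(M)$ is to be checked seminorm by seminorm, i.e.\ on an exhaustion of $M$ by compacta, which is precisely the form in which the Green-operator estimate is available.
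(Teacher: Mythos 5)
Your proposal is correct and follows essentially the same route as the paper: the paper's own justification (given in the paragraph immediately preceding the corollary) likewise fixes a continuous extension map $e$, observes that $[\bm{1}^\pm]$ maps $C_\c^\infty(J(\scrN))$ continuously into $H^{1/2-\varepsilon}_\c(J(\scrN))$ and that $\rmG_\pm$ maps $H^{1/2-\varepsilon}_\c(M)$ continuously into $H^{3/2-\varepsilon}_\loc(M)$, and then reads the continuity off Equation (\ref{repformulainhom}). Your extra remarks on justifying the multiplier continuity via \cite[Thm.\ 3.4.1]{agranovich2015sobolev} and Thm.\ 8.3.1 of \cite{hormander1997lectures}, and on the individual $\rmG_\pm$ versus $\rmG$, merely make explicit what the paper leaves implicit.
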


\subsection{Further representation formulae for the homogeneous problem}\label{repformulasubsec}

The causal Green operator may be made to appear in Equation (\ref{protrepformula}) as follows:
\begin{equation}\label{repformulawithGandcomm}
\Sol^{\mathrm{h}}_\scrN = - (-\rmG_+ + \rmG_-) [[\bm{1}^-], P] \circ e = \rmG [[\bm{1}^-], P] \circ e \quad \big( = \rmG [\bm{1}^-] P \circ e\big).
\end{equation}
(The equality in parentheses follows from the fact that $\rmG \circ P = 0$.)  Now let $n$ be a smooth conormal field along $\scrN$ which is outward-directed relative to $J^-(\scrN)$---and note that with the signature convention adopted in this paper this implies that $n^\sharp$ is future-directed.  The jump formula in Theorem \ref{thmcommidentity} may be applied to Equation (\ref{repformulawithGandcomm}), yielding
 \begin{equation}
\Sol^{\mathrm{h}}_\scrN = \rmG \circ \calS_{\scrN, \mu_g} \circ \rmT_{\scrN, n}\restriction_{C^\infty_\tc(\scrN)} \label{repformulafinal}
\end{equation}
with the operator $\rmT_{\scrN, n}$ defined as in Equation (\ref{defnofTop}).  This is the final representation formula obtained in this paper.

\section{Expansion density of a null hypersurface and ``universality'' in the two-sided characteristic Cauchy problem}\label{expdensunivsec}

\subsection{Expansion density of a null hypersurface}\label{expdensitysubsec}

The differential operator $\rmT_{\scrN, n}$ appearing in Equation (\ref{repformulafinal}) has been ostensibly defined, according to Equation (\ref{defnofTop}), via a non-unique choice of conormal field $n$ along $\scrN$, outward directed relative to $J^-(\scrN)$.  However, no such choice appears in the commutator $[[\bm{1}^-],P]$ which, by virtue of Equation (\ref{commutator1}), equals $\calS_{\scrN, \mu_g} \circ \rmT_{\scrN, n} \circ i^*$ where $i : \scrN \to M$ is the inclusion.  Since, for any $\varphi$, $\{ 2 \mathfrak{n}\varphi + n(\tilde{X})\varphi \} \iota_n\mu_g$ is easily seen to be invariant under changes in $n$, one may regard Equation (\ref{commutator1}) as evidence that the quantity $\Lie_{\mathfrak{n}} (\iota_n \mu_g)$ is also invariantly defined.  It is, however, of interest to provide a direct proof of this fact in the general setting of null hypersurfaces and without making reference to regular domains or normally hyperbolic operators.  The following property of Lie derivatives of densities will be used: for any continuous vector field $X$, $C^1$ function $\alpha$ and $C^1$ density $\mu$,
\begin{equation}\label{liederivdensitiesproperty} \Lie_{\alpha X} \mu = \alpha \Lie_X \mu +  (X\alpha) \mu. \end{equation}

\begin{theorem}\label{invarianceofexpansion}
Let $\scrN \xhookrightarrow{i} M$ be a smooth null hypersurface in a Lorentzian manifold $(M,g)$.  If $N^* \scrN \to \scrN$ is trivial and $m$, $n$ are any two smooth and nowhere-vanishing sections of $N^* \scrN$ then
\begin{equation}
	\Lie_{\mathfrak{m}} (\iota_m \mu_g) = \pm \Lie_{\mathfrak{n}} (\iota_n \mu_g),
\end{equation}
where $\mathfrak{m}, \mathfrak{n}$ are the vector fields on $\scrN$ given by raising indices on $m, n$ (respectively), and  the plus (resp.\ minus) sign occurs when $m$ and $n$ are pointwise positive (resp.\ negative) multiples of one other.
\end{theorem}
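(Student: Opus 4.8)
The plan is to reduce the whole statement to the single rescaling $m = \lambda n$, where $\lambda \in C^\infty(\scrN)$ is nowhere vanishing: such a $\lambda$ exists precisely because $N^*\scrN$ is a trivial line bundle and $m, n$ are both nowhere-vanishing sections of it, so their ratio is a well-defined smooth function, and being nowhere zero it has locally constant sign, equal to $+1$ exactly where $m$ and $n$ are positive multiples of one another and $-1$ where they are negative multiples. The theorem then becomes the purely computational assertion that $\Lie_{\mathfrak{m}}(\iota_m \mu_g) = \sgn(\lambda)\,\Lie_{\mathfrak{n}}(\iota_n \mu_g)$, and I would obtain it by tracking how the two ingredients $\mathfrak{m}$ and $\iota_m \mu_g$ scale with $\lambda$ and observing a cancellation.

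First I would record the two scaling laws. Since index raising is $C^\infty(\scrN)$-linear, one has $\mathfrak{m} = \lambda \mathfrak{n}$. The induced density, on the other hand, obeys $\iota_m \mu_g = |\lambda|^{-1} \iota_n \mu_g$, and the absolute value here is the crucial point and the main thing to get right. By the construction of $\iota_n \mu_g$ recalled in Appendix \ref{divthmappendix} --- evaluate $\mu_g$ on a vector $V$ transverse to $\scrN$ and normalised by $\langle n, V \rangle = 1$, together with a frame of $T\scrN$ --- replacing $n$ by $\lambda n$ forces $V \mapsto V/\lambda$, and because $\mu_g$ is an (unsigned) density this contributes the factor $|\lambda|^{-1}$ rather than $\lambda^{-1}$. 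It is precisely this modulus that will be responsible for the $\pm$ in the conclusion.

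With these in hand the remainder is a short application of the Leibniz-type rule (\ref{liederivdensitiesproperty}). Writing $\mu \doteq |\lambda|^{-1} \iota_n \mu_g$ and applying (\ref{liederivdensitiesproperty}) with $\alpha = \lambda$ and $X = \mathfrak{n}$ gives $\Lie_{\mathfrak{m}}(\iota_m \mu_g) = \Lie_{\lambda \mathfrak{n}} \mu = \lambda\, \Lie_{\mathfrak{n}} \mu + (\mathfrak{n}\lambda)\, \mu$. Expanding $\Lie_{\mathfrak{n}}\mu$ by the ordinary product rule for a density times a function and using $\mathfrak{n}(|\lambda|^{-1}) = -|\lambda|^{-2}\,\sgn(\lambda)\,\mathfrak{n}\lambda$, the identity $\lambda\,\sgn(\lambda) = |\lambda|$ collapses the first correction term to $-|\lambda|^{-1}(\mathfrak{n}\lambda)\,\iota_n\mu_g$, which exactly cancels the term $(\mathfrak{n}\lambda)\mu = |\lambda|^{-1}(\mathfrak{n}\lambda)\,\iota_n\mu_g$. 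What survives is $\lambda |\lambda|^{-1}\,\Lie_{\mathfrak{n}}(\iota_n\mu_g) = \sgn(\lambda)\,\Lie_{\mathfrak{n}}(\iota_n\mu_g)$, which is the claimed identity once one reads off that $\sgn(\lambda)$ is $+1$ on the positive-multiple locus and $-1$ on the negative-multiple locus.

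The only genuine obstacle is bookkeeping the modulus consistently: had the induced density scaled by $\lambda^{-1}$ rather than $|\lambda|^{-1}$, the very same cancellation would yield $\Lie_{\mathfrak{n}}(\iota_n\mu_g)$ with no sign at all, so the entire content of the $\pm$ lies in the unsigned nature of $\mu_g$. Everything else is routine, and in particular no appeal to normal hyperbolicity, to regular domains, or to the single-layer machinery of the previous sections is required --- the statement is purely one about densities on a null hypersurface.
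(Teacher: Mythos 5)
Your proposal is correct and is essentially the paper's own argument: the paper likewise writes one conormal as a nowhere-vanishing multiple of the other (with the reciprocal convention $n=\alpha m$), derives the scaling $\iota_m\mu_g=|\alpha|\,\iota_n\mu_g$ from the transverse-vector normalisation in Appendix \ref{divthmappendix}, and then applies Equation (\ref{liederivdensitiesproperty}) together with the Leibniz rule so that the derivative-of-$\lambda$ terms cancel, leaving the factor $\sgn$. Your emphasis on the modulus coming from the unsigned nature of the density is exactly the point on which the paper's computation also turns.
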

\begin{proof}
	Let $\alpha \in C^\infty(\scrN; \R {\setminus } \{ 0 \})$ be the function uniquely defined by $n = \alpha m$.  Recall [see Equation (\ref{intproductconormal})] that $\iota_n \mu_g = \iota_\Theta \mu_g$ for any $M$-vector field $\Theta$ along $\scrN$, transverse to $\scrN$ and such that $n(\Theta) = 1$.  It follows that $m(\alpha \Theta) = 1$ and that $\iota_m \mu_g = \iota_{\alpha \Theta} \mu_g = |\alpha| \, \iota_\Theta \mu_g = |\alpha| \,\iota_n \mu_g$.  Upon using the Leibniz rule for densities,
	\begin{align*} \Lie_{\mathfrak{m}} (\iota_m \mu_g) = \Lie_{\alpha^{-1} \mathfrak{n}} (|\alpha| \iota_n \mu_g) &=  \big(\Lie_{\alpha^{-1} \mathfrak{n}} |\alpha|\big) \iota_n \mu_g +  |\alpha| \Lie_{\alpha^{-1} \mathfrak{n}} (\iota_n \mu_g) \\ &=  (|\alpha|^{-1} \mathfrak{n} \alpha) \iota_n \mu_g + |\alpha| \Lie_{\alpha^{-1} \mathfrak{n}} (\iota_n \mu_g). \end{align*}
	Using Equation (\ref{liederivdensitiesproperty}), one may further calculate
	\begin{multline*}
	\Lie_{\mathfrak{m}} (\iota_m \mu_g) = (|\alpha|^{-1} \mathfrak{n}\alpha) \iota_n \mu_g +  |\alpha| \alpha^{-1} \Lie_{\mathfrak{n}} (\iota_n \mu_g) + |\alpha| \mathfrak{n}(\alpha^{-1}) \iota_n \mu_g \\
	= |\alpha| \alpha^{-1} \Lie_{\mathfrak{n}} (\iota_n \mu_g) + n(\alpha^{-1} |\alpha|) \, \iota_n \mu_g = |\alpha| \alpha^{-1} \Lie_{\mathfrak{n}} (\iota_n \mu_g),
	\end{multline*}
	which completes the proof.
\end{proof}

\begin{definition}\label{defnexpansiondensity}
Let $\scrN$ be a smooth null hypersurface in a Lorentzian manifold $(M,g)$.  Assume $N^*\scrN \to \scrN$ is trivial---i.e.\ orientable as a bundle---and that a choice of orientation class is made for it.\footnote{Of course (by Proposition \ref{nullhypglobalnullfield}) $N^*\scrN$ is automatically trivial whenever $(M,g)$ is time orientable.}  Then the \emph{expansion density} of $\scrN$ relative to this orientation is the smooth density on $\scrN$ given by $\Lie_{n^\sharp} (\iota_n \mu_g)$ for one (and hence any other) nowhere-vanishing and positively oriented section $n$ of $N^* \scrN$.  $\scrN$ will be said to be \emph{divergence-free} (relative to $g$) if its expansion density vanishes identically.
\end{definition}

The behaviour of the expansion density under conformal transformations of the ambient Lorentzian metric will now be examined.  Note that a generic smooth hypersurface $\scrN$ which is null in $\scrM = (M,g,\mathfrak{t})$ is also null in $(M,g',\mathfrak{t})$ if, on $\scrN$, $g'$ is a positive multiple of $g$.  Furthermore, the images of the null generators of $\scrN$ relative to $g$ and $g'$ coincide in this case.

\begin{theorem}\label{bmslikethm} Let $(M,g)$ be a Lorentzian manifold of dimension $d+1$ and $\scrN$ be a smooth null hypersurface with trivial conormal bundle.  Let $g'$ be another Lorentzian metric such that, on $\scrN$, $g' = \lambda g$ for some smooth positive function $\lambda$ on $\scrN$.  Then
\begin{equation*}
\Lie_{n^{\sharp'}} (\iota_n \mu_{g'}) = \Lie_{\lambda^{\frac{d-1}{2}} n^\sharp} (\iota_n \mu_{g}),
\end{equation*}
	where ${}^{\sharp'}$ denotes the musical isomorphism determined by $g'$.  If, in addition, $\lambda$ is constant along the null generators of $\scrN$, then the expansion densities of $\scrN$ relative to the two metrics are related by
\begin{equation}\label{liederivconformaltransf}
\Lie_{n^{\sharp'}} (\iota_n \mu_{g'}) = \lambda^{\frac{d-1}{2}}\Lie_{n^\sharp} (\iota_n \mu_{g}),
\end{equation}
and $\scrN$ is divergence-free relative to $g$ if and only if it is so relative to $g'$.
\end{theorem}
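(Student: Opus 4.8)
The plan is to reduce the whole statement to two pointwise scaling relations along $\scrN$, combined with the Leibniz-type rule for Lie derivatives of densities, Equation (\ref{liederivdensitiesproperty}). Since the conormal bundle $N^*\scrN$ depends only on the smooth structure of $\scrN$ and not on any metric, the \emph{same} nowhere-vanishing section $n$ may be used for both $g$ and $g'$; moreover, by Theorem \ref{invarianceofexpansion} neither expansion density depends on which positively oriented $n$ is chosen. The only difference between the two sides is therefore carried by the musical isomorphism and the metric volume density, and these need only be compared \emph{at points of} $\scrN$. A key observation I would stress is that every object appearing is determined by $g$ along $\scrN$ alone: in particular $\iota_n \mu_g$ equals $\iota_\Theta \mu_g\restriction_\scrN$ for any transverse $\Theta$ with $n(\Theta)=1$ [Equation (\ref{intproductconormal})], a condition that is a pure pairing and hence metric-independent, so the same $\Theta$ serves both metrics. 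Consequently the hypothesis ``$g'=\lambda g$ on $\scrN$'' is exactly what is needed, and no extension of $g'$ off $\scrN$ enters.

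First I would record the two local relations. Because $g'=\lambda g$ as bilinear forms at each point of $\scrN$, index lowering with respect to $g'$ equals $\lambda$ times index lowering with respect to $g$; inverting, index raising satisfies $\sharp' = \lambda^{-1}\sharp$, so that $n^{\sharp'} = \lambda^{-1} n^\sharp$ on $\scrN$. In particular $n^{\sharp'}$ is a scalar multiple of $n^\sharp$, hence again tangent to $\scrN$ and to its null generators. Next, since $\dim M = d+1$, the determinant of the metric scales by $\lambda^{d+1}$, giving $\mu_{g'} = \lambda^{\frac{d+1}{2}} \mu_g$ at points of $\scrN$; combined with the metric-independence of $\Theta$ this yields
\begin{equation*}
\iota_n \mu_{g'} = \lambda^{\frac{d+1}{2}}\, \iota_n \mu_g \qquad \text{as densities on }\scrN.
\end{equation*}

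The computational heart is then the elementary identity
\begin{equation*}
\Lie_{\alpha X}(\beta \mu) = \Lie_{\alpha\beta X}\mu,
\end{equation*}
valid for functions $\alpha,\beta$, a vector field $X$ and a density $\mu$. It follows at once by expanding the left-hand side with the ordinary Leibniz rule and then Equation (\ref{liederivdensitiesproperty}): both sides equal $\alpha\beta\,\Lie_X\mu + X(\alpha\beta)\,\mu$. Applying this with $\alpha=\lambda^{-1}$, $\beta=\lambda^{\frac{d+1}{2}}$, $X=n^\sharp$ and $\mu=\iota_n\mu_g$ (so that $\alpha\beta = \lambda^{\frac{d-1}{2}}$), and inserting the two relations just established, gives
\begin{equation*}
\Lie_{n^{\sharp'}}(\iota_n \mu_{g'}) = \Lie_{\lambda^{-1} n^\sharp}\big(\lambda^{\frac{d+1}{2}}\,\iota_n \mu_g\big) = \Lie_{\lambda^{\frac{d-1}{2}} n^\sharp}(\iota_n \mu_g),
\end{equation*}
which is the first assertion of the theorem.

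Finally, for the second assertion I would observe that the null generators of $\scrN$ are the integral curves of the null direction field, which is spanned by $n^\sharp$; thus ``$\lambda$ constant along the null generators'' is equivalent to $n^\sharp\lambda = 0$ on $\scrN$. Expanding $\Lie_{\lambda^{\frac{d-1}{2}} n^\sharp}(\iota_n \mu_g)$ once more by Equation (\ref{liederivdensitiesproperty}) produces $\lambda^{\frac{d-1}{2}}\Lie_{n^\sharp}(\iota_n \mu_g)$ together with a term proportional to $n^\sharp\big(\lambda^{\frac{d-1}{2}}\big) = \tfrac{d-1}{2}\lambda^{\frac{d-3}{2}}(n^\sharp\lambda)$, and the latter vanishes under the hypothesis, yielding Equation (\ref{liederivconformaltransf}). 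Since $\lambda>0$ forces $\lambda^{\frac{d-1}{2}}>0$ everywhere, the two expansion densities vanish identically together, so by Definition \ref{defnexpansiondensity} divergence-freeness relative to $g$ is equivalent to divergence-freeness relative to $g'$. I do not anticipate a genuine obstacle: the only points demanding care are the bookkeeping of conformal weights—the factor $\lambda^{-1}$ from the inverse metric combining with $\lambda^{\frac{d+1}{2}}$ from the volume density to give the exponent $\frac{d-1}{2}$—and the repeated insistence that each object is fixed by $g$ along $\scrN$ alone.
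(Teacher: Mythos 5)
Your proof is correct and follows essentially the same route as the paper: both rest on the pointwise relations $n^{\sharp'}=\lambda^{-1}n^\sharp$ and $\mu_{g'}=\lambda^{\frac{d+1}{2}}\mu_g$ on $\scrN$ followed by the Leibniz rule and Equation (\ref{liederivdensitiesproperty}). Packaging the computation as the identity $\Lie_{\alpha X}(\beta\mu)=\Lie_{\alpha\beta X}\mu$ is a tidy reorganisation of the paper's direct expansion, not a different argument.
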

\begin{proof}
On $\scrN$, $n^{\sharp'} = \lambda^{-1} n^\sharp$ and $\mu_{g'} = \lambda^{\frac{d+1}{2}} \mu_g$.  Therefore,
\begin{multline*}
\Lie_{n^{\sharp'}} (\iota_n \mu_{g'}) = \Lie_{\lambda^{-1} n^\sharp} \big( \lambda^{\frac{d+1}{2}} \, \iota_n \mu_{g}\big)
= \lambda^{-1} n^\sharp\big[\lambda^{\frac{d+1}{2}}\big] \, \iota_n \mu_{g} +  \lambda^{\frac{d+1}{2}} \, \Lie_{\lambda^{-1} n^\sharp} (\iota_n \mu_{g}) \\
= \lambda^{-1} n^\sharp\big[\lambda^{\frac{d+1}{2}}\big] \, \iota_n \mu_{g} +  \lambda^{\frac{d+1}{2}} n^\sharp[\lambda^{-1}] \, \iota_n \mu_{g} +  \lambda^{\frac{d+1}{2}} \lambda^{-1} \Lie_{n^\sharp} (\iota_n \mu_{g}) \\
= n^\sharp \big[\lambda^{\frac{d - 1}{2}}\big] \, \iota_n \mu_{g} +  \lambda^{\frac{d-1}{2}}\Lie_{n^\sharp} (\iota_n \mu_{g}) = \Lie_{\lambda^{\frac{d-1}{2}} n^\sharp} (\iota_n \mu_{g}).
\end{multline*}
So far, the assumption that $n^\sharp\lambda = 0$ has not been used.  Doing so (in the second-to-last line above) yields Equation (\ref{liederivconformaltransf}).
\end{proof}

\subsection{``Universality'' in the two-sided characteristic Cauchy problem}\label{universalitysubsec}

In the setting of Definition \ref{defnexpansiondensity}, for any vector field $X$ one may canonically define an operator $\rmT_{\scrN} : C^\infty(\scrN) \to \Gamma^{\infty}(\calD\scrN)$ by $\rmT_{\scrN} = \rmT_{\scrN, n}$ [see Equation (\ref{defnofTop})] for any positively oriented $n$.  The dependence of $\rmT_{\scrN, g} \doteq \rmT_{\scrN}$ on the Lorentzian metric $g$ will now be considered.

\begin{theorem}\label{equivariancethm}
	Let $(M,g)$ have dimension $1+d$, and let $g'$ be another metric such that, on $\scrN$, $g' = \lambda g$ for some $\lambda \in C^\infty(\scrN)$ positive and constant along the generators of $\scrN$.  Then, if in addition the vector field $X$ is tangent to $\scrN$,
	\begin{equation*}
	\calS_{\scrN, \mu_{g'}} \circ \rmT_{\scrN, g'} = \calS_{\scrN, \mu_g} \circ \rmT_{\scrN, g} \circ [\lambda^{-1}].
	\end{equation*}
\end{theorem}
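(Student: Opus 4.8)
The plan is to reduce the identity to two clean scalings—one for the single-layer map $\calS_{\scrN,\mu}$ as the reference volume density changes, and one for the operator $\rmT_{\scrN,n}$ of Equation (\ref{defnofTop}) as $g$ is replaced by $g'$—and then to reconcile the resulting conformal weights. Since $X$ is tangent to $\scrN$ and $n$ is a conormal, the term $n(\tilde{X})$ in Equation (\ref{defnofTop}) vanishes for \emph{both} metrics; hence $\rmT_{\scrN, g}\varphi = 2\mathfrak{n}(\varphi)\,\iota_n\mu_g + \varphi\,\Lie_{\mathfrak{n}}(\iota_n\mu_g)$, and $\rmT_{\scrN, g'}$ has the analogous $X$-free form with $\mathfrak{n}' \doteq n^{\sharp'}$ and $\mu_{g'}$ replacing $\mathfrak{n}$ and $\mu_g$. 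Here $n$ is a fixed positively oriented section of $N^*\scrN$, the orientation being metric-independent.

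First I would record the dependence of the single layer on its reference density. From Definition \ref{singlayerdistr}, $\calS_{\scrN,\mu}(\omega)$ is characterised by $\pair{\calS_{\scrN,\mu}(\omega)}{h\mu} = \int_\scrN (i^*h)\,\omega$ for $h \in C^\infty_\c(M)$, so for a general test density $\nu$ one has $\pair{\calS_{\scrN,\mu}(\omega)}{\nu} = \int_\scrN i^*(\nu/\mu)\,\omega$. On $\scrN$ one has $\mu_{g'} = \lambda^{(d+1)/2}\mu_g$ (as in the proof of Theorem \ref{bmslikethm}), whence $i^*(\nu/\mu_{g'}) = \lambda^{-(d+1)/2}\,i^*(\nu/\mu_g)$ on $\scrN$ and therefore
\begin{equation*}
\calS_{\scrN, \mu_{g'}} = \calS_{\scrN, \mu_g} \circ [\lambda^{-(d+1)/2}].
\end{equation*}

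Next I would compute the rescaling of $\rmT$. On $\scrN$ one has $\mathfrak{n}' = \lambda^{-1}\mathfrak{n}$ and $\iota_n\mu_{g'} = \lambda^{(d+1)/2}\iota_n\mu_g$, so the first-order term of $\rmT_{\scrN, g'}$ equals $2\lambda^{-1}\mathfrak{n}(\varphi)\cdot\lambda^{(d+1)/2}\iota_n\mu_g = \lambda^{(d-1)/2}\cdot 2\mathfrak{n}(\varphi)\,\iota_n\mu_g$, while its zeroth-order term is $\varphi\,\Lie_{\mathfrak{n}'}(\iota_n\mu_{g'}) = \lambda^{(d-1)/2}\,\varphi\,\Lie_{\mathfrak{n}}(\iota_n\mu_g)$ by Equation (\ref{liederivconformaltransf}) of Theorem \ref{bmslikethm}—and it is precisely at this step that the hypothesis $\mathfrak{n}\lambda = 0$ (that $\lambda$ be constant along the generators, $\mathfrak{n}$ being tangent to them) is consumed. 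Both terms acquire the same factor, giving $\rmT_{\scrN, g'}\varphi = \lambda^{(d-1)/2}\,\rmT_{\scrN, g}\varphi$.

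Combining the two scalings, $\calS_{\scrN, \mu_{g'}}\rmT_{\scrN, g'}\varphi = \calS_{\scrN, \mu_g}\!\left(\lambda^{-(d+1)/2}\lambda^{(d-1)/2}\,\rmT_{\scrN, g}\varphi\right) = \calS_{\scrN, \mu_g}\!\left(\lambda^{-1}\,\rmT_{\scrN, g}\varphi\right)$, since the exponents sum to $-1$. It then remains to pass the factor $\lambda^{-1}$ through the first-order operator $\rmT_{\scrN, g}$: because $\mathfrak{n}$ is tangent to the generators and $\lambda$ is constant along them, $\mathfrak{n}(\lambda^{-1}) = 0$, so by the Leibniz rule the derivative correction drops out and $\rmT_{\scrN, g}(\lambda^{-1}\varphi) = \lambda^{-1}\rmT_{\scrN, g}\varphi$; substituting yields the claimed identity $\calS_{\scrN, \mu_{g'}} \circ \rmT_{\scrN, g'} = \calS_{\scrN, \mu_g} \circ \rmT_{\scrN, g} \circ [\lambda^{-1}]$. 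The only genuinely delicate point is the bookkeeping of conformal weights: it is the exact cancellation $-(d+1)/2 + (d-1)/2 = -1$, together with the fact that $\lambda^{-1}$ commutes with the first-order operator $\rmT_{\scrN, g}$ (which again holds because $\lambda$ is constant along generators), that conspires to leave precisely $[\lambda^{-1}]$ and no residual derivative term.
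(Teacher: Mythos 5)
Your proof is correct and follows essentially the same route as the paper: both arguments rest on the scaling $\calS_{\scrN,\mu_{g'}} = \calS_{\scrN,\mu_g}\circ[\lambda^{-(d+1)/2}]$, the vanishing of the $n(\tilde{X})$ term under the tangency hypothesis, the relation $n^{\sharp'} = \lambda^{-1}n^\sharp$, and Equation (\ref{liederivconformaltransf}). The only (cosmetic) difference is that you first establish $\rmT_{\scrN,g'} = [\lambda^{(d-1)/2}]\circ\rmT_{\scrN,g}$ and then commute $[\lambda^{-1}]$ past $\rmT_{\scrN,g}$ as a separate step, whereas the paper absorbs the factor directly by writing $\lambda^{-1}n^\sharp\varphi = n^\sharp(\lambda^{-1}\varphi)$ to get $\rmT_{\scrN,g'} = [\lambda^{(d+1)/2}]\circ\rmT_{\scrN,g}\circ[\lambda^{-1}]$ in one pass.
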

\begin{proof}
	Since $\mu_{g'} = \lambda^{\frac{d+1}{2}} \mu_g$ on $\scrN$, $\calS_{\scrN, \mu_{g'}} = \calS_{\scrN, \mu_g} \circ [\lambda^{-\frac{d+1}{2}}]$.  Hence, it suffices to show that $\rmT_{\scrN, g'} = [\lambda^{\frac{d+1}{2}}] \circ \rmT_{\scrN,n} \circ [\lambda^{-1}]$.  The second term in curly brackets in the right-hand side of Equation (\ref{defnofTop}) vanishes under the tangency assumption on $X$.  Since $n^\sharp \lambda = 0$, $n^{\sharp'} = \lambda^{-1} n^\sharp = n^\sharp \circ [\lambda^{-1}]$ by the Leibniz rule.  Hence, using Equation (\ref{liederivconformaltransf}),
	\begin{equation*}
	\rmT_{\scrN, g'} \varphi = 2 \big[n^{\sharp'}\varphi\big]\iota_n \mu_{g'} + \varphi \Lie_{n^{\sharp'}} \iota_n \mu_{g'} = \lambda^{\frac{d+1}{2}} [2 n^\sharp(\lambda^{-1}\varphi) \iota_n \mu_g + \lambda^{-1} \varphi \Lie_{n^\sharp} \iota_n \mu_g ]
	\end{equation*}
	as had to be shown.
\end{proof}

Let $M$ be a manifold and $\scrN \xhookrightarrow{i} M$ be a smooth hypersurface such that there exists a smooth one-dimensional and integrable distribution $Z$ in $T \scrN$, globally trivial as a bundle over $\scrN$.  Let $\mathbf{Lor}$ be the bundle of Lorentzian metrics on $M$, and let $\mathrm{Met}_{\scrN, Z}$ be the set consisting of those smooth sections of the pullback bundle $i^*\mathbf{Lor}$ according to which $\scrN$ is null and $Z$ is precisely the null line bundle of $\scrN$.  Then there is a multiplicative action on $\mathrm{Met}_{\scrN, Z}$ by the abelian group of positive smooth functions on $\scrN$, and another by the subgroup $G_{\scrN, Z}$ of positive smooth functions on $\scrN$ which are constant along the integral manifolds of $Z$.  There is also a representation $\rho : G_{\scrN, Z} \to V$ where $V = \mathrm{Lin}(C^\infty(\scrN), \scrD'(M))$, given by precomposition as follows:
\begin{equation*}\rho(\lambda) v \doteq v \circ [\lambda] \quad \forall \ v \in V, \ \forall \ \lambda \in G_{\scrN, Z}.\end{equation*}
Theorem \ref{equivariancethm} reveals that the function $\theta : \mathrm{Met}_{\scrN, Z} \to V$ given by $g \mapsto \calS_{\scrN, \mu_g} \circ \rmT_{\scrN, g}$ is equivariant in the sense that $\theta(\lambda g) = \rho(\lambda^{-1})\theta(g)$.  Let $\mathrm{Met}_{\scrN, Z} \times_{\rho} V$ denote the set of all orbits of pairs $(g, v) \in \mathrm{Met}_{\scrN, Z} \times V$ under the $G_{\scrN, Z}$-action given by
\begin{equation*} (g, v) \mapsto (\lambda g, \rho(\lambda^{-1}) v) \quad \lambda \in G_{\scrN, Z},\end{equation*}
and notice that there is a natural projection $\mathrm{Met}_{\scrN, Z} \times_{\rho} V \to \mathrm{Met}_{\scrN, Z}/G_{\scrN, Z}$ given by sending the equivalence class of $(g', v')$ to the equivalence class of $g'$ in $\mathrm{Met}_{\scrN, Z}/G_{\scrN, Z}$.  By the equivariance just discussed, $\theta$ corresponds uniquely to a section $\Theta : \mathrm{Met}_{\scrN, Z}/G_{\scrN, Z} \to \mathrm{Met}_{\scrN, Z} \times_{\rho} V$, namely
\begin{equation*}
\Theta : [g] \mapsto [(g, \theta(g))].
\end{equation*}
$\theta$ can be uniquely retrieved from $\Theta$ by letting $\theta(g)$ be the unique element $v$ of $V$ such that $[(g, v)] = \Theta(g)$.  One might therefore say that $\Theta$ encodes a ``universal structure'' underlying the characteristic Cauchy problem in the case in which the vector field $X$ in $P = \square_g + X + [q]$ is tangent to the initial data hypersurface.

\section{Final remarks and applications}\label{applicationssec}

\subsection{Relation with the existing literature}\label{exlit}
The line of attack on the characteristic Cauchy problem pursued in this paper was mostly inspired by the works \cite{rendall1990reduction} and \cite{baer2015initial}.  In fact, as already discussed in Section \ref{mainres}, the main existence result has been effectively (although not explicitly in the details of the proofs) obtained by solving two (generalised) Goursat problems using the approach in \cite{rendall1990reduction}, and subsequently merging the resulting solutions.  A reason for using Rendall's method has been the ease with which it yields ``one-sided smooth'' solutions given smooth data.  On the other hand, it is to be expected that the same (or similar) existence and regularity results can also be obtained by other means---for instance, by globalising the parametrix-based approach in \cite[Sec.\ 5.4]{friedlander1975wave}, or by using energy methods as done, for instance, in \cite{hormander1990remark,mullerzumhagen1990characteristic,cabet2014characteristic}.  

An alternative route towards obtaining the global existence results in this paper could involve first formulating the problem as a single ``Cauchy problem'' posed on a partially null Cauchy surface.  This is the point of view already sketched in \cite{kay1991theorems} (cf.\ in particular Fig.\ 3 there), and it is also related to the ``SC-Case'' discussed in \cite{mullerzumhagen1977characteristic} (cf.\ Fig.\ 1 there).  In the homogeneous case, one might proceed as follows: given a characteristic datum $f$,  first construct a bona fide (smooth or even just locally Lipschitz) Cauchy surface $\Sigma$ for the ambient manifold, such that $\Sigma \cap \scrN$ is a $\Sigma$-neighbourhood of $\supp{f}$ and contains the locus of points at which $\Sigma$ has a null normal.  Then, solve a global Cauchy problem of sorts, with initial data of mixed type in that only the zeroth-order Cauchy datum is prescribed (and is equal to $f$) on the null portion of $\Sigma$, while, on the remaining spacelike portion of $\Sigma$, the solution is required to vanish together with its first normal derivative.  Notice that, at least if one is content with obtaining solutions in finite energy spaces, the setup in H{\"o}rmander's paper \cite{hormander1990remark} naturally accommodates for such a space of Cauchy data of mixed type: in particular, the space $L^2(\Sigma, d\nu^0_\Sigma)$ in \cite[Eq.\ (8)]{hormander1990remark}, representing first-order Cauchy data, is zero-dimensional when localised in the null portion of $\Sigma$.  A \emph{global} solution will thus exist, certainly in a finite energy space, by the isomorphism in \cite[Eq.\ (8)]{hormander1990remark}.  By the standard domain of dependence property, the solution will have the support properties denoted by (b2) and (b3) in Section \ref{mainres}, and its trace on $\scrN$ will vanish everywhere outside $\supp{f}$.  Hence, by uniqueness as proved in this paper, it will actually be independent of the chosen ``deformation'' $\Sigma$ of $\scrN$.  Here, in addition, the resulting solution is shown to be smooth on each side of $\scrN$.

Aside from these references, there is a vast literature on the classical Goursat problem for characteristic cones \cite{cagnac1981probleme,dossa2002solutions, choquet2011existence,gerard2016construction}.  Finally, in the setting of spacetimes possessing a sufficiently regular \emph{asymptotic null infinity}, the construction of a scattering theory for conformally coupled fields is contingent on the resolution of a conformally related and ``one-sided'' characteristic Cauchy problem posed on null infinity.  This is because data for this problem are interpreted as radiation fields in the sense of Friedlander, i.e.\ as traces on null infinity of conformally rescaled fields.  The reader is referred to \cite{nicolas2016conformal} for an up-to-date account and further references.

\subsection{Bifurcate Killing horizons and spaces of solutions $S_A$ and $S_B$}\label{SASBareOK}

The results in this paper allow to vindicate or sharpen some claims made without proof in old literature on quantum field theory on curved spacetimes.  Arguably, the main example of such literature (which actually provided the initial motivation for the current paper) is the seminal work carried out by Kay and Wald in \cite{kay1991theorems}, on linear Klein--Gordon quantum fields on spacetimes possessing a bifurcate Killing horizon structure.  In the presence of such a structure, two smooth, null, closed and achronal hypersurfaces naturally arise, namely the two ``horizons'' denoted by $\mathcalligra{h}_A$ and $\mathcalligra{h}_B$ in \cite{kay1991theorems} (a detailed presentation can be found in \cite[Sec.\ 2.2]{lupo2015aspects}).  Under the assumptions on the underlying spacetime manifold made in \cite{kay1991theorems}, both $\mathcalligra{h}_A$ and $\mathcalligra{h}_B$ will satisfy all of the geometric hypotheses in the existence and uniqueness theorems proved in this paper.  One may thus define a space $S_A \doteq \Sol^{\mathrm{h}}_{\mathcalligra{h}_{A}} [C_\c^\infty(\mathcalligra{h}_{A})]$, and an analogous space $S_B$ by the replacement $A \leftrightarrow B$.  In the language of \cite{kay1991theorems}, solutions in these spaces ``propagate entirely through'' $\mathcalligra{h}_A$ and $\mathcalligra{h}_B$ (respectively).  It is precisely the existence of these spaces, and the regularity and support properties of their elements, that were claimed without proof in \cite{kay1991theorems} (the reader is referred in particular to the ``Note Added in Proof'' there).  The results in this paper have completely filled this gap.  Therefore, the discussion on this particular point in \cite{kay1991theorems} does not require any modification.

Furthermore, the necessary and sufficient conditions for finite-order differentiability across the initial data hypersurface presented in Section \ref{regularitythm} here make it possible to sharpen some of the arguments made in the ``Note Added in Proof'' to \cite{kay1991theorems}.  Namely, let $\scrN$ be the initial data hypersurface for a two-sided characteristic Cauchy problem and $\scrS$ be a cross-section of $\scrN$.  It follows from the discussion in Section \ref{regularitythm} that
\begin{equation*}
\Sol^{\mathrm{h}}_{\scrN}[C_{\tc}^\infty(\scrN)] \cap C^k(M) = \left\{ \phi \in S_{\scrN} \relmiddle| (t^\ell \phi)^+ = (t^\ell \phi)^- \text{ on $\scrS$, } \forall \ \ell = 0, \ldots, k \right\},
\end{equation*}
where $t$ is any vector field transverse to $\scrN$ and the plus [resp.\ minus] sign indicates a limit as $\scrN$ is approached from $I^+(\scrN)$ [resp.\ from $I^-(\scrN)$].  Applying these considerations to the case of interest in \cite{kay1991theorems}, in which $\scrN = \mathcalligra{h}_A$ or $\mathcalligra{h}_B$ and $\scrS = \Sigma$ is the ``bifurcation surface'', yields the following simple observation: In picking out subspaces of $S_A$ and $S_B$ according to the requirement that they should consist of globally $C^2$ solutions, it was not necessary to seek solutions whose transverse derivatives up to order $5$ vanish on $\Sigma$, as was done in the ``Note Added in Proof'' in \cite{kay1991theorems} and even in the recent revision of that work carried out in \cite[App.\ B]{kay2016non-existence} and in \cite[Ch.\ 4]{lupo2015aspects}.  Instead, solutions whose transverse derivatives up to order $2$ vanish on $\Sigma$ would have already fulfilled this desideratum.

\subsection{Expansion density and two-point functions of Hadamard states on null hypersurfaces}\label{expdensityhadamard}
A linear Klein--Gordon field is one defined by $P = \square_g + [q]$---that is, no first-order terms are present.  If a two-sided characteristic Cauchy problem for such a $P$ is posed on a null hypersurface $\scrN$ with identically vanishing expansion density (Definition \ref{defnexpansiondensity}) then the representation formula, Equation (\ref{repformulafinal}), is reduced to
\begin{equation*}
\Sol^{\mathrm{h}}_{\scrN}(f) = \text{``} 2 \rmG \left(\frac{\partial f}{\partial u} \cdot \delta_{\scrN, n}\right) \text{''},
\end{equation*}
where $u$ is a coordinate on $\scrN$ such that $\partial/\partial u = n^\sharp$.  Both the ``horizons'' $\mathcalligra{h}_A, \mathcalligra{h}_B$ in a bifurcate Killing horizon structure, and future/past null infinity $\scrI^{+/-}$ of an asymptotically flat spacetime in a Bondi gauge \cite[Ch.\ 11]{wald1984general} constitute examples of divergence-free null hypersurfaces.  One might therefore speculate that the simplicity of the representation formula in both cases will be a key ingredient in completely explaining the tantalising similarity between, on the one hand, the universal expressions derived in \cite{kay1991theorems} for the two-point functions of isometry-invariant Hadamard states evaluated on pairs of solutions both in $S_A$ or in $S_B$ and, on the other hand, the expression of the two-point function of the distinguished state on future/past null infinity described in \cite{dappiaggi2017hadamard} and in references therein.  This ought to be further investigated.

\section*{Acknowledgements}
The author is particularly indebted to Alan Rendall for pointing out reference \cite{rendall1992stability}, to Micha{\l} Wrochna for a careful reading of some passages and for bringing attention to the literature on jump formulae and to reference \cite{lerner2017unique}, and to an anonymous referee for useful suggestions and for clarifying some aspects of \cite{hormander1990remark} relevant to the work done here.  It is also a pleasure to thank Matthias Blau, Jonathan Luk, Alexander Strohmaier, Chris Fewster and Bernard Kay for useful discussions.  This work was supported through the Albert Einstein Center for Fundamental Physics, Bern, and the NCCR SwissMAP (The Mathematics of Physics) of the Swiss Science Foundation, during the author's post-doctoral employment at the Institute of Theoretical Physics, University of Bern.  Finally, financial support from the Institut Fourier (Universit\'{e} Grenoble Alpes, CNRS), for a visit during which a preliminary version of this work was presented, is gratefully acknowledged.

\appendix

\section{Causal completeness and causal compactness in Lorentzian geometry}\label{causcomplapp}

Throughout this section, $\scrM = (M,g,\mathfrak{t})$ is a time-oriented Lorentzian manifold.  The following notions were first introduced by Leray in \cite{leray1953hyperbolic} and used extensively in subsequent work, e.g.\ in \cite{friedlander1975wave, baer2007wave}.

\begin{definition}[Future/past compactness] $A \subseteq M$ is \emph{future} [resp.\ \emph{past}] compact if $J^+(p) \cap A$ [resp.\ $J^-(p) \cap A$] is compact for all $p \in M$.  $A$ is \emph{temporally compact} if it is both future and past compact.
\end{definition}

\begin{rk} There also exist the following subtly different notions \cite{galloway1986curvature, treude2011master}: $A \subseteq M$ is said to be \emph{future/past} \emph{causally complete} if, for each $q \in J^{+/-}(A)$, the \emph{closure} of $J^{-/+}(q) \cap A$ in $A$ is compact. For a generic $\scrM$, future (resp.\ past) causal completeness is a weaker condition than past (resp.\ future) compactness.  However, subsets of globally hyperbolic Lorentzian manifolds are future (resp.\ past) causally complete if and only if they are past (resp.\ future) compact.
\end{rk}

Clearly, any closed (in $M$ or in the relative topology, by Lemma \ref{causcompletelem2} below) subset of a future (resp.\ past) compact subsets is itself future (resp.\ past) compact.  Moreover, the union and intersection of two future (resp.\ past) compact subsets is future (resp.\ past) compact.  Compact sets are always temporally compact.  More interestingly, Cauchy surfaces (when they exist) are also temporally compact.  The following further results are proved e.g.\ in \cite[Sec.\ 3.1]{galloway2014achronal} or in \cite[Sec.\ 1.2]{baer2015green}.

\begin{lemma}\label{causcompletelem1}
	If $A \subseteq M$ is either future or past compact, then it is closed. \qed
	\end{lemma}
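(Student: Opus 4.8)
The plan is to exploit the fact that future compactness controls $A$ inside each causal future $J^+(p)$, and to upgrade this local control into global closedness by covering $M$ with the corresponding chronological futures. First I would observe that, since $\scrM$ is time-oriented, every point $x \in M$ admits a point $p \in I^-(x)$ to its chronological past (follow the distinguished timelike vector field $\mathfrak{t}$ backward from $x$ for a short time). Equivalently $x \in I^+(p)$, so the family $\{ I^+(p) \}_{p \in M}$ is an open cover of $M$.

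Next, assuming $A$ is future compact, I would fix an arbitrary $p \in M$ and consider $A \cap J^+(p)$, which is compact by hypothesis and therefore closed in $M$ (manifolds being Hausdorff). Since $I^+(p) \subseteq J^+(p)$, one has $A \cap I^+(p) = (A \cap J^+(p)) \cap I^+(p)$, which is the intersection of a set closed in $M$ with the open set $I^+(p)$; hence $A \cap I^+(p)$ is relatively closed in the subspace $I^+(p)$.

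Finally I would invoke the fact that closedness is a local property: a subset of $M$ is closed precisely when its intersection with each member of an open cover is relatively closed in that member. As the sets $I^+(p)$ cover $M$ and each $A \cap I^+(p)$ is relatively closed in $I^+(p)$, it follows that $A$ is closed in $M$. The past compact case is then obtained by applying the future compact case to the Lorentzian manifold with reversed time orientation $(M, g, -\mathfrak{t})$, under which past compactness becomes future compactness.

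I do not anticipate any genuine obstacle here; the only point requiring care is the nonemptiness of $I^-(x)$ for every $x$, which is exactly where time-orientability of $\scrM$ is used to guarantee that $\{ I^+(p) \}_{p \in M}$ is a bona fide cover. Alternatively, one could argue sequentially (legitimately, since the manifold is second-countable): given $x \in \overline{A}$, pick $p \in I^-(x)$, note that a sequence in $A$ converging to $x$ eventually enters the open neighbourhood $I^+(p) \subseteq J^+(p)$, and conclude $x \in A \cap J^+(p)$ because the latter is compact, hence closed.
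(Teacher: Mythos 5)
Your proof is correct; the paper itself gives no argument for this lemma, delegating it to the cited references, and your sequential version (pick $p \in I^-(x)$ for a closure point $x$, trap the tail of an approximating sequence in the compact, hence closed, set $A \cap J^+(p)$) is exactly the standard argument found there. The covering reformulation and the time-reversal reduction for the past-compact case are both sound, and you correctly flag the only points needing care: nonemptiness of $I^-(x)$ from time-orientability, and second-countability to justify arguing with sequences.
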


	\begin{lemma}\label{causcompletelem2}
		Suppose that $\scrM$ is globally hyperbolic.  If $A \subseteq M$ is past compact, then $J^+(A)$ is closed and thus $J^+(A) = \overline{I^+(A)}$.  Furthermore, $J^+(A)$ too is past compact.  Analogous statements hold for $J^-(A)$ if $A$ is future compact. \qed
	\end{lemma}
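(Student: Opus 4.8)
The plan is to prove the three assertions for $J^{+}(A)$ in turn---closedness, the identity $J^{+}(A) = \overline{I^{+}(A)}$, and past compactness---and to recover the statements for $J^{-}(A)$ by the time-reversal $\mathfrak{t} \mapsto -\mathfrak{t}$. Throughout I would invoke two standard properties of globally hyperbolic spacetimes: (i) the causal relation $\{(x,y) \in M \times M \mid y \in J^{+}(x)\}$ is closed; and (ii) for every compact $K \subseteq M$ and every $p \in M$, the set $J^{+}(K) \cap J^{-}(p)$ is compact (the usual strengthening of the defining compactness property, with one of the two sets taken to be a point). Both are established in the references cited just before the lemma.

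For closedness of $J^{+}(A)$, I would take a sequence $q_{n} \in J^{+}(A)$ with $q_{n} \to q$---sequences suffice since second countability makes $M$ metrizable---and choose base points $a_{n} \in A$ with $a_{n} \in J^{-}(q_{n})$. To trap the $a_{n}$ in a fixed compact set, pick any $p \in I^{+}(q)$, which is possible because the chronological future of any point of a time-oriented Lorentzian manifold is nonempty. As $I^{-}(p)$ is open and $q_{n} \to q \in I^{-}(p)$, one has $q_{n} \in J^{-}(p)$ for all large $n$, and hence $a_{n} \in J^{-}(p) \cap A$; this set is compact by past compactness of $A$, so after passing to a subsequence $a_{n} \to a \in A$. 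Applying fact (i) to the relations $a_{n} \in J^{-}(q_{n})$ and letting $n \to \infty$ gives $a \in J^{-}(q)$, that is $q \in J^{+}(a) \subseteq J^{+}(A)$, as required.

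The identity $J^{+}(A) = \overline{I^{+}(A)}$ is then formal: in any spacetime one has the inclusions $I^{+}(A) \subseteq J^{+}(A) \subseteq \overline{I^{+}(A)}$, so taking closures and using the closedness just proved yields $\overline{I^{+}(A)} \subseteq \overline{J^{+}(A)} = J^{+}(A) \subseteq \overline{I^{+}(A)}$. For past compactness of $J^{+}(A)$, I would fix $p \in M$ and set $K \doteq J^{-}(p) \cap A$, which is compact by hypothesis. A short chase of causal relations gives
\[ J^{-}(p) \cap J^{+}(A) = J^{-}(p) \cap J^{+}(K), \]
the forward inclusion holding because any $r \in J^{-}(p)$ admitting some $a \in J^{-}(r) \cap A$ forces $a \in K$, and the reverse holding because $K \subseteq A$. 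By fact (ii) the right-hand side is compact, which is precisely the assertion.

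I expect the only genuine subtlety to lie in the first part, where one must combine past compactness of $A$ (to extract a convergent subsequence of base points) with the closedness of the causal relation (to pass to the limit); the identity and the past-compactness statement then follow immediately.
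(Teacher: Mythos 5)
Your proof is correct. The paper does not actually prove this lemma itself—it defers to \cite[Sec.\ 3.1]{galloway2014achronal} and \cite[Sec.\ 1.2]{baer2015green}—and your argument, resting on the closedness of the causal relation and the compactness of $J^{+}(K)\cap J^{-}(p)$ for compact $K$ in a globally hyperbolic spacetime, is essentially the standard one found in those references; the sequence-extraction step via $J^{-}(p)\cap A$ and the reduction of past compactness of $J^{+}(A)$ to the compact set $K=J^{-}(p)\cap A$ are both sound.
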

		
	If $\scrM$ is globally hyperbolic, a Cauchy-surface--based criterion for future/past compactness of closed sets exists.  A subset $A \subseteq M$ is said to be \emph{future} [resp.\ \emph{past}] \emph{bounded} if there exists a Cauchy surface $\mathscr{C}^+$ [resp.\ $\mathscr{C}^-$] such that $A \subseteq J^-(\mathscr{C}^+)$ [resp.\ $A \subseteq J^+(\mathscr{C}^-)$].

		\begin{lemma}\label{causcompletelem3}
			Suppose that $\scrM$ is globally hyperbolic, and that $A \subseteq M$ is closed.  Then $A$ is future (resp.\ past) compact if and only if it is future (resp.\ past) bounded. \qed
			\end{lemma}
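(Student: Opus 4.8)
The plan is to prove the two implications separately, treating only the ``future'' case throughout; the ``past'' case follows verbatim after reversing the time orientation $\mathfrak{t}$, since this exchanges $J^+$ with $J^-$ and hence ``future compact'' with ``past compact'' and ``future bounded'' with ``past bounded.''

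For the implication \emph{future bounded $\Rightarrow$ future compact} I would argue as follows. Let $\mathscr{C}^+$ be a Cauchy surface with $A \subseteq J^-(\mathscr{C}^+)$. As recorded earlier in this appendix, Cauchy surfaces are temporally compact, so in particular $\mathscr{C}^+$ is future compact. Applying the ``$J^-$'' version of Lemma \ref{causcompletelem2} (valid precisely because $\mathscr{C}^+$ is future compact) shows that $J^-(\mathscr{C}^+)$ is itself future compact and closed. Since $A$ is closed and contained in the future compact set $J^-(\mathscr{C}^+)$, and since closed subsets of future compact sets are future compact, it follows at once that $A$ is future compact. This direction therefore needs nothing beyond the results already assembled here.

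The substantive direction is \emph{future compact $\Rightarrow$ future bounded}, and here the plan is to construct a Cauchy surface lying to the future of $A$. First I would fix, via Bernal--S\'{a}nchez, a Cauchy temporal function $\tau \colon M \to \R$ whose level sets $\Sigma_s = \tau^{-1}(s)$ are smooth spacelike Cauchy surfaces, yielding a splitting $M \cong \R \times \Sigma$ in which the integral curves $\gamma_x$ of $\partial_\tau$ are future-directed, timelike and inextensible. It is important to note that a \emph{level set} $\Sigma_s$ will not suffice in general: one checks that $A \subseteq J^-(\Sigma_s)$ is equivalent to $\sup_A \tau \le s$, whereas $\sup_A \tau$ may well be $+\infty$ for a future compact $A$ (its points may climb arbitrarily high in $\tau$ provided they escape to spatial infinity). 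Accordingly I would build a ``wavy'' Cauchy surface as a graph $\tau = f(x)$ over $\Sigma$. The key quantity is the ceiling function $\psi(x) \doteq \sup\{\tau(a) : a \in A \cap \gamma_x\}$ (with $\sup \emptyset = -\infty$): future compactness applied to the base point $p = \gamma_x(0)$ forces $\psi(x) < +\infty$ for every $x$, for otherwise $A \cap \gamma_x \cap \{\tau \ge 0\}$ would be a closed, $\tau$-unbounded, hence non-compact, subset of $J^+(p) \cap A$. The decisive point is then that future compactness controls not merely the pointwise finiteness but the \emph{rate} at which $\psi$ may grow: fixing a base point $p$, any $a \in A$ with $\tau(a)$ large lies outside the compact set $J^+(p) \cap A$ and is therefore spacelike to $p$, which by finite speed of propagation bounds $\tau(a)$ in terms of the spatial distance from $a$ to $p$ in the slices. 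Thus $\psi$ grows only sub-luminally, so one can choose a slowly-varying majorant $f \ge \psi$ whose graph is achronal; smoothing it if a smooth spacelike representative is desired, this graph is a Cauchy surface $\mathscr{C}^+$ with $A \subseteq J^-(\mathscr{C}^+)$.

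I expect the main obstacle to lie precisely in this last construction: converting the sub-luminal growth bound on $\psi$ into a genuinely achronal (sub-luminal) majorant $f$ \emph{uniformly} over $\Sigma$, given that the metric coefficients $\beta, g_\tau$ in the normal form $g = -\beta\,d\tau^2 + g_\tau$ may degenerate towards spatial infinity. The honest way to manage this is to phrase the argument entirely through the causal relation---replacing explicit Riemannian distance estimates by the finite-speed-of-propagation content encoded in the compactness of the diamonds $J^+(p) \cap J^-(q)$---and indeed the full statement is standard and may alternatively be invoked directly from \cite{galloway2014achronal} or \cite{baer2015green}.
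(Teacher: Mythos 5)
The implication \emph{future bounded} $\Rightarrow$ \emph{future compact} is correct and complete as you give it: it uses only the temporal compactness of Cauchy surfaces, Lemma \ref{causcompletelem2}, and the heredity of future compactness under passage to closed subsets, all of which are recorded in this appendix. Bear in mind, though, that the paper supplies no proof of this lemma at all---it is quoted from \cite[Sec.\ 3.1]{galloway2014achronal} and \cite[Sec.\ 1.2]{baer2015green}---so the comparison below is with the standard argument found there rather than with anything in the text.

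For the substantive implication, \emph{future compact} $\Rightarrow$ \emph{future bounded}, your proposal has a genuine gap, and you have in fact located it yourself. The step ``$a$ is spacelike to $p$, which by finite speed of propagation bounds $\tau(a)$ in terms of the spatial distance from $a$ to $p$ in the slices'' is not available in a general globally hyperbolic manifold: relative to a fixed splitting $M \cong \R \times \Sigma$ and a fixed Riemannian metric on $\Sigma$, the light cones have no uniform opening angle, so ``spacelike to $p$'' yields no quantitative bound on $\tau(a)$ and hence no usable growth control on $\psi$. Everything then hinges on producing an achronal majorant of $\psi$ whose graph is a Cauchy surface---which is essentially the statement being proved---and deferring that step to the references concedes the proof. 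The standard way to close the gap dispenses with graphs and metric estimates entirely: fix any Cauchy surface $\Sigma_0$ and set $\scrC^+ \doteq \del J^-(A \cup \Sigma_0)$, the boundary of the closed past set $C \doteq J^-(A) \cup J^-(\Sigma_0)$. As an achronal boundary, $\scrC^+$ is a closed, achronal, edgeless topological hypersurface. Every inextensible timelike curve enters $I^-(\Sigma_0) \subseteq C^{\circ}$; and no future-directed inextensible timelike curve $\gamma$ can remain in $J^-(A) \cap I^+(\Sigma_0)$ for all parameters beyond some $s_0$, since then each $\gamma(s)$ would admit $a_s \in A \cap J^+(\gamma(s_0))$ with $\gamma(s) \in J^-(a_s)$, imprisoning $\gamma$ in the compact set $J^+(\gamma(s_0)) \cap J^-\big(A \cap J^+(\gamma(s_0))\big)$---impossible for an inextensible causal curve in a globally hyperbolic manifold. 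Since $C$ is a past set, each inextensible timelike curve therefore crosses $\scrC^+$ exactly once, so $\scrC^+$ is a Cauchy surface, and one checks directly that $C \cap I^+(\scrC^+) = \emptyset$, whence $A \subseteq C \subseteq J^-(\scrC^+)$. Your first direction stands; the second needs to be replaced by (or honestly reduced to) an argument of this kind.
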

			
	For a vector bundle $E \to M$ one may now define subspaces of $\Gamma^k(E)$, with $k \in \N_0 \cup \{\infty\} \cup \{ -\infty\}$, consisting of (distributional) sections with causally restricted supports.  Namely:
\begin{align*}
\Gamma^k_{\mathrm{sc}}(E) &\doteq \left\{ u \in \Gamma^k(E) \relmiddle| \supp{u} \subseteq J(K) \ \text{for a compact } K \subseteq M \right\}, \\
\Gamma^k_{\mathrm{fc}}(E) &\doteq \left\{ u \in \Gamma^k(E) \relmiddle| \supp{u} \text{ is future compact}  \right\}, \\
\Gamma^k_{\mathrm{pc}}(E) &\doteq \left\{ u \in \Gamma^k(E) \relmiddle| \supp{u} \text{ is past compact}  \right\}, \\
\text{and} \quad \Gamma^k_{\mathrm{tc}}(E) &\doteq \left\{ u \in \Gamma^k(E) \relmiddle| \supp{u} \text{ is temporally compact}  \right\} = \Gamma^k_{\mathrm{fc}}(E) \cap \Gamma^k_{\mathrm{pc}}(E)
\end{align*}
are, respectively, the space of $C^k$ sections with \emph{spatially compact support}, the space of $C^k$ sections with future compact support, the space of $C^k$ sections with past compact support, and the space of $C^k$ sections with temporally compact support.  Furthermore, if $E \to M$ is the trivial line bundle, obvious notations will be used for locally Sobolev sections with causally restricted supports:
\begin{equation}\label{sobolevrestrsupp}
\forall \ s \in \R, \ H^s_{\sc/\pc/\fc/\tc}(M) \doteq H^s_\loc(M) \cap \scrD'_{\sc/\pc/\fc/\tc}(M).
\end{equation}

\section{Null hypersurfaces in Lorentzian manifolds}\label{nullhypappx}

\subsection{Elementary properties}\label{nullhypappxelem}

This appendix will follow the presentation in \cite{kupeli1987null}.  Let $(M,g)$ be a Lorentzian manifold of dimension $d+1$ and let $\scrN \subset M$ be a smooth null hypersurface.  Any non-zero vector tangent to $\scrN$ is either spacelike or null, and for any $p \in \scrN$ the set of null or zero vectors in $T_p \scrN$ is a one-dimensional vector subspace which coincides, as a subspace of $T_p M$, with the $g$-orthogonal space ${T_p \scrN}^\perp$ \cite[Lem.\ 5.28]{oneill1983semi-riemannian}.  There is therefore a canonical line bundle on $\scrN$---\ the \emph{null line bundle} $K_{\scrN}$ of $\scrN$---defined by
\begin{equation*}
	K_{\scrN} \doteq \coprod_{p \in \scrN} {T_p \scrN}^\perp \xrightarrow{\pi} \scrN.
\end{equation*}
The following result is standard (see, e.g., \cite[Prop.\ 4]{kupeli1987null}).

\begin{proposition}\label{nullhypglobalnullfield}
	Let $(M,g, \mathfrak{t})$ be a time-oriented Lorentzian manifold, and let $\scrN \subset M$ be a smooth null hypersurface.  Then there exists a global future-directed section of $K_{\scrN} \xrightarrow{\pi} \scrN$.  Any two such sections $n,\, n'$ are related by $n' = f n$ where $f$ is a smooth positive function on $\scrN$.  In particular, $K_{\scrN}$---equivalently, the conormal bundle to $\scrN$---is globally trivial and orientable as a vector bundle.
\end{proposition}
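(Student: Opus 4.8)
The plan is to use the global timelike vector field supplied by time-orientability to single out, at every point of $\scrN$, a canonical future-directed null generator, and then to bootstrap smoothness from a local description of $\scrN$ as a level set. First I would fix, by the very definition of time orientation, a global smooth timelike vector field $T$ on $M$, taken future-directed relative to $\mathfrak{t}$. At a point $p \in \scrN$ the fibre ${T_p\scrN}^\perp$ of $K_\scrN$ is a one-dimensional null subspace, so its nonzero vectors split into a future-directed and a past-directed ray. Since a null vector is never $g$-orthogonal to a timelike one, $g(\,\cdot\,, T_p)$ is nowhere zero on ${T_p\scrN}^\perp \setminus \{0\}$; with the signature convention $(+,-,\ldots,-)$ it is positive precisely on the future-directed ray. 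I would therefore \emph{define} $n_p$ to be the unique element of ${T_p\scrN}^\perp$ satisfying $g(n_p, T_p) = 1$, which is then automatically future-directed. This produces, set-theoretically, a future-directed section $p \mapsto n_p$.

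The substance of the argument lies in verifying that $p \mapsto n_p$ is smooth, which I would establish by comparison with local sections. Near any point of $\scrN$ one may write $\scrN = \{ F = 0 \}$ with $F$ smooth and $\d F \neq 0$ on $\scrN$; then $\d F$ spans the conormal bundle locally, and since $\scrN$ is null the vector $(\d F)^\sharp$ is both null and tangent to $\scrN$ (it lies in ${T_p\scrN}^\perp$), hence a smooth, nowhere-vanishing local section of $K_\scrN$. Because $g\big((\d F)^\sharp, T\big) = \d F(T)$ is a smooth, nowhere-vanishing function, the rescaled field $(\d F)^\sharp / g\big((\d F)^\sharp, T\big)$ is a smooth local section satisfying the normalisation $g(\,\cdot\,, T) = 1$; by the uniqueness built into the pointwise construction it must coincide with $n$. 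Thus $n$ is smooth near every point, and the local representatives automatically agree on overlaps, so they glue to a global smooth future-directed section, proving the first assertion.

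For the second assertion, if $n, n'$ are two global future-directed sections then at each point they lie in the same future ray of the one-dimensional fibre, whence $n' = f n$ with $f > 0$ pointwise; smoothness of $f$ follows from the identity $f = g(n', T)/g(n, T)$. Finally, a real line bundle admitting a global nowhere-vanishing section is trivial, and hence orientable, which gives the stated conclusion for $K_\scrN$. The identification with the conormal bundle is immediate: the musical isomorphism $\sharp$ carries $N^*\scrN$ into $K_\scrN$, since a conormal $\omega$ has $\omega^\sharp$ $g$-orthogonal to $T_p\scrN$ and therefore in ${T_p\scrN}^\perp$, and it is a bundle isomorphism because $g$ is nondegenerate; thus the conormal bundle is trivial as well.

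The only genuinely nontrivial step is the smoothness-and-gluing argument in the second paragraph; pointwise existence and the scaling relation are elementary linear algebra once $T$ has been fixed. The one place demanding care is the sign of $g(\,\cdot\,, T)$ along the null ray, as this is exactly what distinguishes the future-directed generator and pins down the correct orientation under the chosen signature convention.
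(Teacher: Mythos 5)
Your proof is correct and follows essentially the same route as the paper: both single out the section by normalising against the global future-directed timelike field, i.e.\ by imposing $g(n,T)=1$. The only difference is in how smoothness is verified --- you compare with local sections $(\d F)^\sharp$ built from defining functions and glue, whereas the paper realises the normalised section globally as the inverse of the projection restricted to the regular level set $\theta^{-1}\{1\}$ of $\theta(X)=g(X,\Theta\circ\pi(X))$ on the total space of $K_\scrN$; both are routine and the remaining assertions are handled identically.
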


\begin{proof}
Let $\Theta$ be a smooth, future-directed and timelike vector field on $M$.  View the metric as a map $g : TM \times_M TM \to \R$ (where $\times_M$ denotes fibre product), and define $\theta : K_{\scrN} \to \R$ by $\theta(X) = g(X, \Theta \circ \pi(X))$.  Then $\theta$ is smooth, and $\theta(X) = 0 \Leftrightarrow X \in \bm{0}$ where $\bm{0}$ denotes the image of the zero section of $K_{\scrN}$.  $\theta$ has no critical points on $K_{\scrN} {\setminus } \bm{0}$.  Therefore, for any $\alpha \in \R {\setminus } 0$ the preimage $\theta^{-1}\{ \alpha \}$ is a smooth hypersurface in $K_{\scrN}$, and $\pi_\alpha \doteq \pi \restriction_{\theta^{-1}\{ \alpha \}}$ defines an injective smooth map onto $\scrN$.  By e.g.\ expressing $\theta$ in a local trivialization of $K_{\scrN}$, it is also easy to see that $\pi_\alpha$ is an immersion, and therefore a diffeomorphism.  Its inverse defines a global, smooth, nowhere-vanishing section of $K_{\scrN}$, which is future-directed if and only if $\alpha > 0$.
\end{proof}

\begin{definition}\label{nullgendefn}
Let $\scrN$ be a smooth null hypersurface in a time-oriented Lorentzian manifold, and let $n$ be a global future-directed section of the null line bundle of $\scrN$.  Any maximally extended integral curve of $n$ is called a \emph{null generator of} $\scrN$.
\end{definition}

\begin{proposition}\label{nullhyppregeo}
	Let $(M,g)$ be a Lorentzian manifold and $\scrN$ be a smooth null hypersurface with (local) tangent null vector field $n$.  Then, on the subset of $\scrN$ on which $n$ is defined, there exists a smooth real-valued function $f$ such that $\nabla_n n = f n$.  That is, the integral curves of $n$ are null pregeodesics.  In particular, the null generators of a smooth null hypersurface in a time-oriented Lorentzian manifold are null pregeodesics and can be reparametrised to null geodesics. \qed
\end{proposition}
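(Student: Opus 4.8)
The plan is to show that the acceleration $\nabla_n n$ lies pointwise in the null line bundle $K_\scrN$, i.e.\ that $\nabla_n n \in T_p\scrN^\perp$ at each $p$; once this is known, writing $\nabla_n n = f n$ merely records the resulting proportionality, and the smoothness of $f$ follows because $n$ is nowhere vanishing on its domain. Recall from the discussion opening this appendix that, since $\scrN$ is null, the orthogonal complement $T_p\scrN^\perp$ is the one-dimensional span of $n_p$ and is \emph{contained} in $T_p\scrN$; equivalently, $g(n, W) = 0$ for every $W$ tangent to $\scrN$, and in particular $g(n,n) = 0$ on $\scrN$. Accordingly, I would verify that $\nabla_n n$ is $g$-orthogonal to all of $T\scrN$, which forces it into $T\scrN^\perp = \mathrm{span}(n)$.

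Two orthogonality computations are required, both local on the domain of $n$ and carried out with vector fields tangent to $\scrN$. First, differentiating the identity $g(n,n) = 0$ along $n$ (which is tangent to $\scrN$) and using that $\nabla$ is metric gives $0 = n\big(g(n,n)\big) = 2\,g(\nabla_n n, n)$, so $g(\nabla_n n, n) = 0$. Second, and this is the crux of the argument, let $W \in \mathfrak{X}(\scrN)$ be any vector field tangent to $\scrN$. From $g(n, W) = 0$ on $\scrN$ and metricity one has $g(\nabla_n n, W) = -\,g(n, \nabla_n W)$. I would then rewrite $\nabla_n W = \nabla_W n + [n,W]$ using that the Levi-Civita connection is torsion-free. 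The first term contributes $g(n, \nabla_W n) = \tfrac{1}{2}\,W\big(g(n,n)\big) = 0$, since $g(n,n)$ vanishes identically on $\scrN$ and $W$ differentiates along $\scrN$. The second term contributes $g(n, [n,W])$, which vanishes because $n$ and $W$ are both tangent to $\scrN$, hence so is their bracket $[n,W]$, and $n$ annihilates every vector tangent to $\scrN$. Thus $g(\nabla_n n, W) = 0$ for all such $W$, completing the verification that $\nabla_n n \in \mathrm{span}(n)$.

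Extracting $f$ smoothly is routine: choosing any smooth vector field $Y$ with $g(n, Y)$ nowhere zero (e.g.\ a timelike field), the relation $\nabla_n n = f n$ gives $f = g(\nabla_n n, Y)/g(n, Y)$, which is manifestly smooth. Finally, for the ``in particular'' clause I would invoke Proposition \ref{nullhypglobalnullfield} to obtain, in the time-orientable case, a global future-directed section $n$ of $K_\scrN$ whose maximally extended integral curves are exactly the null generators; each such $\gamma$ then satisfies $\nabla_{\dot\gamma}\dot\gamma = (f\circ\gamma)\,\dot\gamma$ and is therefore a pregeodesic. Reparametrising $\gamma$ via a new parameter $s$ with $t = t(s)$ yields $\nabla_{\gamma'}\gamma' = \big(t'' + (f\circ\gamma)(t')^2\big)\dot\gamma$, so solving the scalar ODE $t'' + (f\circ\gamma)(t')^2 = 0$ (which admits solutions with $t' > 0$) produces an affine parameter and recasts each null generator as a genuine null geodesic. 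I do not expect any serious obstacle here; the only point demanding care is the tangency of $[n,W]$ to $\scrN$, which is precisely what makes the orthogonality computation close.
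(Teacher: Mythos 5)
Your proof is correct. The paper states this proposition as a standard fact and omits the proof entirely (deferring to the literature it follows in this appendix), and your argument is precisely the standard one: metric compatibility plus torsion-freeness show $\nabla_n n$ is $g$-orthogonal to every field tangent to $\scrN$ (the tangency of $[n,W]$ being, as you note, the step that closes the computation), hence lies in the one-dimensional space $T\scrN^\perp = \mathrm{span}(n)$, after which smooth extraction of $f$ and the affine reparametrisation are routine.
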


Proposition \ref{nullhyppregeo} does \emph{not} imply that any global future-directed null vector field $n$ tangent to an arbitrary smooth null hypersurface can be globally rescaled to yield a \emph{geodesic vector field} $\tilde{n}$, i.e.\ one satisfying $\nabla_{\tilde{n}} \tilde{n} = 0$.  An obvious obstruction arises if one of the null generators is a closed curve, and any of (and therefore all) its null geodesic reparametrisations returns to the same point with a different velocity.  To ensure that no such obstructions occur, it is necessary to inject further causal assumptions.  A number of sufficient conditions were derived in \cite[Sec.\ 4]{kupeli1987null}.

\begin{definition}\label{crosssectdefn}
Let $\scrN$ be a smooth null hypersurface in a time-oriented Lorentzian manifold, and let $\mathscr{S}$ be a smoothly embedded submanifold of $\scrN$ which is spacelike and of codimension $1$ in $\scrN$.  Then $\mathscr{S}$ will be called a \emph{cross-section} of $\scrN$\footnote{Kupeli \cite[Def.\ 16]{kupeli1987null} prefers to say that $\scrN$ is \emph{causally separated by} $\mathscr{S}$.} if there exists a diffeomorphism $\chi : \scrN \to \R \times \mathscr{S}$ such that: $\chi(\mathscr{S}) = \{0\} \times \mathscr{S}$; denoting by $\partial/\partial U$ the vector field on $\R \times \mathscr{S}$ induced from the standard $\d/\d t$ vector field on $\R$ by the identification $T(\R \times \mathscr{S}) \cong T\R \times T\mathscr{S}$, $\chi^* (\partial/\partial U )$ is a null, future-directed, vector field tangent to $\scrN$ (equivalently, for any $x \in \mathscr{S}$ the curve $\chi^{-1} (\cdot, x) : \R \to \scrN$ is a null generator of $\scrN$).
\end{definition}

A submanifold is a cross-section for a smooth null hypersurface in the abstract sense of Definition \ref{crosssectdefn} if and only if any null generator ``registers'' on it once and never returns to it.

\begin{lemma}[{\cite[Lem.\ 17]{kupeli1987null}}]\label{lemintersectonce}
	Let $\scrN$ be a smooth null hypersurface in a time-oriented Lorentzian manifold $(M,g,\mathfrak{t})$, and let $n$ be a null and future-directed vector field tangent to $\scrN$. Then a spacelike, codimension-$1$ smoothly embedded submanifold $\mathscr{S}$ of $\scrN$ is a cross-section of $\scrN$ according to Definition \ref{crosssectdefn} if and only if any maximal integral curve of $n$ intersects $\mathscr{S}$ at precisely one parameter value. \qed
\end{lemma}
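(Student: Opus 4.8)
The plan is to prove both implications, using as a bridge the fact (Proposition \ref{nullhypglobalnullfield}) that any two future-directed sections of the null line bundle of $\scrN$ differ by a smooth positive factor. Consequently the given field $n$ and any other future-directed null tangent field share the same \emph{images} of maximal integral curves, i.e.\ the same null generators. Throughout I will use that $\scrN$ being null forces every nonzero vector tangent to $\scrN$ to be spacelike or null, whereas every nonzero vector tangent to the spacelike $\scrS$ is spacelike; hence $n$ is everywhere transverse to $\scrS$ inside $\scrN$.

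For the forward implication, suppose $\scrS$ is a cross-section, with $\chi : \scrN \to \R \times \scrS$ as in Definition \ref{crosssectdefn} and $V \doteq \chi^*(\partial/\partial U)$. The maximal integral curves of $V$ are exactly the complete curves $U \mapsto \chi^{-1}(U,x)$, $x \in \scrS$, each of which is injective and meets $\scrS = \chi^{-1}(\{0\} \times \scrS)$ at the single parameter value $U=0$. Since $n = \lambda V$ for a smooth positive $\lambda$ by Proposition \ref{nullhypglobalnullfield}, the maximal integral curves of $n$ have the same images and are again injective, so each meets $\scrS$ at precisely one parameter value.

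For the converse, the first step is to upgrade the hypothesis to the level of generators: a maximal integral curve of a vector field is either injective or periodic, and a periodic generator meeting $\scrS$ would do so at infinitely many, equally spaced parameter values; since the hypothesis guarantees that every generator meets $\scrS$ (exactly once), every generator must in fact be injective. The main obstacle is that $n$ need not be complete, so its flow issuing from $\scrS$ is a priori defined only on an awkward open subdomain of $\R \times \scrS$, whereas Definition \ref{crosssectdefn} demands a diffeomorphism onto the \emph{full} product. I would remove this obstacle at the outset by fixing a complete Riemannian metric $h$ on $\scrN$ and replacing $n$ by the positive rescaling $\tilde{n} \doteq (1+\lvert n \rvert_h)^{-1} n$, whose $h$-speed is bounded by $1$; integral curves of $\tilde{n}$ then cannot leave compact sets in finite time, so $\tilde{n}$ is complete. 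Rescaling preserves images, injectivity and the once-only intersection property, and $\tilde{n}$ is still a future-directed null tangent field.

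Finally, with $\Phi : \R \times \scrN \to \scrN$ the (global) flow of $\tilde{n}$, define $\Xi : \R \times \scrS \to \scrN$ by $\Xi(U,x) \doteq \Phi_U(x)$. Injectivity holds because $\Xi(U,x)=\Xi(U',x')$ places $x,x'$ on a common generator, forcing $x=x'$ (once-only) and then $U=U'$ (injectivity of the generator); surjectivity holds because every point of $\scrN$ lies on a generator, which meets $\scrS$. That $\Xi$ is a local diffeomorphism is the transversality computation: $\partial_U \Xi = \tilde{n} \circ \Xi$ and $d\Xi|_{T_x\scrS} = d\Phi_U|_{T_x\scrS}$, so since $T_x\scrN = T_x\scrS \oplus \R\,\tilde{n}(x)$ (the null $\tilde{n}(x)$ is not tangent to the spacelike $\scrS$), applying the isomorphism $d\Phi_U$ and using $d\Phi_U(\tilde{n})=\tilde{n}$ shows that $d\Xi$ is onto, hence an isomorphism by dimension count. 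Thus $\Xi$ is a bijective local diffeomorphism, i.e.\ a diffeomorphism; its inverse $\chi \doteq \Xi^{-1}$ satisfies $\chi(\scrS)=\{0\}\times\scrS$ and $\chi^*(\partial/\partial U)=\tilde{n}$, exhibiting $\scrS$ as a cross-section.
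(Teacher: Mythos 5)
The paper itself offers no argument for this lemma: it is imported verbatim from Kupeli (Lem.\ 17 of the cited reference) and stated with an immediate end-of-proof mark, so there is no internal proof to compare against. Your proof is correct and self-contained, and it supplies exactly the two ingredients that genuinely need care. In the forward direction, the observation that $n$ and $\chi^*(\partial/\partial U)$ differ by a positive smooth factor (Proposition \ref{nullhypglobalnullfield}), so that their maximal integral curves are increasing reparametrisations of one another with the same images, is the whole point. In the converse, the injective-or-periodic dichotomy (valid because a future-directed null field is nowhere zero) correctly rules out closed generators, and the rescaling $\tilde{n} = (1+\lvert n\rvert_h)^{-1} n$ by means of a complete auxiliary Riemannian metric is the standard device for obtaining a complete field with the same generators, so that the flow-out map $\Xi$ is defined on all of $\R \times \scrS$; transversality of the null $\tilde{n}$ to the spacelike $\scrS$ then makes $\Xi$ a bijective local diffeomorphism, hence a diffeomorphism, and $\chi = \Xi^{-1}$ satisfies Definition \ref{crosssectdefn} (which only asks that \emph{some} future-directed null tangent field be conjugated to $\partial/\partial U$, so replacing $n$ by $\tilde{n}$ is harmless). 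This is, modulo presentation, the standard argument and essentially Kupeli's.
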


Lemma \ref{lemintersectonce} allows to find a sufficient condition for the rescalability of a null vector field, tangent to a smooth null hypersurface, to a geodesic one.

\begin{proposition}[{\cite[Thm.\ 18]{kupeli1987null}}]\label{geodesicrescaling}
	Let $\scrN$ be a smooth null hypersurface in a time-oriented Lorentzian manifold $\scrM$ and $n$ be a global future-directed null vector field on $\scrN$.  If $\scrN$ admits a cross-section $\mathscr{S}$ then $n$ can be globally rescaled to yield a future-directed null vector field $\tilde{n}$ on $\scrN$ satisfying $\nabla_{\tilde{n}} \tilde{n} = 0$. \qed
\end{proposition}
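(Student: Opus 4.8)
The plan is to produce a positive smooth function $\alpha$ on $\scrN$ such that the rescaled field $\tilde{n} \doteq \alpha n$ is geodesic. By Proposition \ref{nullhyppregeo} there is a smooth function $f$ on $\scrN$ with $\nabla_n n = f n$. Using the $C^\infty$-linearity of $\nabla$ in its lower slot together with the Leibniz rule, a direct computation gives
\begin{equation*}
\nabla_{\alpha n}(\alpha n) = \alpha (n\alpha)\, n + \alpha^2 (\nabla_n n) = \big[\alpha(n\alpha) + \alpha^2 f\big]\, n.
\end{equation*}
Since $n$ is nowhere vanishing and $\alpha$ is to be positive, the condition $\nabla_{\tilde{n}}\tilde{n} = 0$ is therefore equivalent to the first-order linear ODE $n\alpha + f\alpha = 0$ along the null generators of $\scrN$.

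First I would use the cross-section to set up consistent global initial data. By hypothesis $\scrN$ admits a cross-section $\mathscr{S}$, so by Definition \ref{crosssectdefn} there is a diffeomorphism $\chi : \scrN \to \R \times \mathscr{S}$ carrying $\mathscr{S}$ onto $\{0\} \times \mathscr{S}$ and under which the null generators become the curves $U \mapsto (U, x)$. Let $V$ be the vector field on $\scrN$ corresponding to $\partial/\partial U$ under $\chi$; by construction $V$ is future-directed, null and tangent to $\scrN$, hence $V = \beta n$ for a unique smooth positive function $\beta$ on $\scrN$ (two future-directed sections of the null line bundle differ by a positive factor, as in Proposition \ref{nullhypglobalnullfield}). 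Viewing the unknown as a function of the product coordinates $(U,x)$ and using $n\alpha = \beta^{-1}\,\partial\alpha/\partial U$, the ODE becomes
\begin{equation*}
\frac{\partial \alpha}{\partial U}(U,x) = -\beta(U,x)\, f(U,x)\, \alpha(U,x),
\end{equation*}
which I would solve subject to $\alpha(0,x) = 1$ for every $x \in \mathscr{S}$.

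The explicit solution is
\begin{equation*}
\alpha(U,x) = \exp\left(-\int_0^U \beta(U',x)\, f(U',x)\, \d U'\right),
\end{equation*}
which is manifestly positive. The point that requires care is global smoothness: since the coefficient $-\beta f$ is smooth on all of $\R \times \mathscr{S}$ and the equation is linear, standard results on the smooth dependence of solutions of linear ODEs on parameters and initial conditions (see, e.g., \cite[Sec.\ 1.7]{coddington1955theory}) show that $\alpha$ is jointly smooth in $(U,x)$, with no trouble arising from the unboundedness of the $U$-domain precisely because linearity makes the solution global. Transporting back via $\chi$ yields a smooth positive $\alpha$ on $\scrN$ solving $n\alpha + f\alpha = 0$, so that $\tilde{n} \doteq \alpha n$ is future-directed (as $\alpha > 0$), null, tangent to $\scrN$, and satisfies $\nabla_{\tilde{n}}\tilde{n} = 0$.

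The only genuine obstacle is the consistent global choice of starting value on each generator: without a cross-section one cannot do this coherently, since a closed generator would impose a periodicity constraint on $\alpha$ that generically fails—exactly the obstruction flagged after Proposition \ref{nullhyppregeo}. Lemma \ref{lemintersectonce} is what removes it, guaranteeing that each maximal integral curve of $n$ meets $\mathscr{S}$ at precisely one parameter value; the product structure furnished by $\chi$ then upgrades the fibrewise solution of the linear ODE to a globally smooth function on $\scrN$.
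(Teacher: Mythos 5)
Your proof is correct and complete: the reduction of $\nabla_{\tilde n}\tilde n=0$ to the linear ODE $n\alpha+f\alpha=0$ along the generators, the use of the cross-section (via Definition \ref{crosssectdefn} and Lemma \ref{lemintersectonce}) to impose the consistent initial condition $\alpha=1$ on $\mathscr{S}$, and the explicit exponential solution together give exactly the standard argument. The paper itself supplies no proof, simply citing \cite[Thm.\ 18]{kupeli1987null}, so your write-up fills in precisely the argument behind that citation; the only cosmetic remark is that the appeal to smooth dependence on parameters is superfluous, since joint smoothness of $\alpha$ in $(U,x)$ already follows from differentiating the explicit integral formula under the integral sign.
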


If $\scrN$ admits a geodesic, future-directed, null, global tangent vector field $n$ then the maximally extended (in $\scrN$) integral curves of $n$ will be referred to as the \emph{null geodesic generators} of $\scrN$.  Viewed as geodesics in the ambient Lorentzian manifold, these may of course fail to be future or past inextensible.

The following proposition is a simple consequence of Lemma \ref{lemintersectonce} and of the equivalence between global hyperbolicity and the existence of smooth spacelike Cauchy surfaces.

\begin{proposition}\label{crosssecglobhyp}
Let $\scrM$ be a globally hyperbolic Lorentzian manifold.  Then any smooth null hypersurface $\scrN$ whose null generators, when reparametrised as null geodesics entirely contained in $\scrN$, are future and past inextensible as geodesics in $\scrM$, admits a cross-section. \qed
\end{proposition}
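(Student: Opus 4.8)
The plan is to exhibit a cross-section explicitly as the intersection of $\scrN$ with a smooth spacelike Cauchy surface, and then to verify the one-point-intersection criterion of Lemma \ref{lemintersectonce}. First I would invoke the equivalence between global hyperbolicity and the existence of a smooth spacelike Cauchy surface \cite{bernal2007globally} to fix such a surface $\scrC$ for $\scrM$, and set $\scrS \doteq \scrC \cap \scrN$. By Proposition \ref{nullhypglobalnullfield}, $\scrN$ carries a global future-directed null section $n$ of its null line bundle, whose maximal integral curves are by definition the null generators; reparametrising each such curve as a null geodesic does not alter its image, so the image of a maximal integral curve of $n$ coincides with the image of an inextensible null geodesic of the kind featuring in the hypothesis.

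Next I would check that $\scrS$ is a spacelike, codimension-$1$ embedded submanifold of $\scrN$, which is the routine part. At any $p \in \scrS$ the tangent spaces $T_p\scrC$ and $T_p\scrN$ are distinct hyperplanes of $T_pM$: the former consists, away from $0$, entirely of spacelike vectors, while the latter contains the null vector $n_p^\sharp$. Two distinct hyperplanes of a vector space are automatically transverse, so $\scrC$ and $\scrN$ meet transversally in $M$, whence $\scrS$ is a smooth codimension-$2$ submanifold of $M$, i.e.\ of codimension $1$ in $\scrN$. Its tangent space $T_p\scrS = T_p\scrC \cap T_p\scrN$ is contained in the spacelike $T_p\scrC$, so every nonzero vector of $T_p\scrS$ is spacelike and $\scrS$ is spacelike.

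The crux is to show that every maximal integral curve of $n$ meets $\scrS$ at exactly one parameter value, so that Lemma \ref{lemintersectonce} applies. The image of such a curve is that of a future- and past-inextensible null geodesic of $\scrM$, hence of an inextensible causal curve; a smooth spacelike Cauchy surface, being acausal, is crossed exactly once by any inextensible causal curve \cite{oneill1983semi-riemannian}, and the crossing point, lying in both $\scrC$ and $\scrN$, belongs to $\scrS$. Because $\scrM$ admits no closed causal curves, no generator is periodic, so each integral curve of $n$ is injective and the single geometric intersection corresponds to a single parameter value. Applying Lemma \ref{lemintersectonce} then yields that $\scrS$ is a cross-section of $\scrN$.

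The main obstacle I anticipate is the ``exactly once'' claim. The ``at least once'' direction is the defining Cauchy property applied to the inextensible causal curve, but ``at most once'' relies on the acausality of a smooth spacelike Cauchy surface, equivalently on the strict monotonicity of a Cauchy temporal function along future-directed causal curves. Handling this cleanly—together with the bookkeeping that ties the inextensibility hypothesis on the geodesic reparametrisations to the injectivity of the integral curves of $n$—is where care is needed; the transversality and the spacelike character of $\scrS$ are elementary by comparison.
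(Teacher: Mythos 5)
Your proposal is correct and follows exactly the route the paper intends: the paper states the proposition as ``a simple consequence of Lemma \ref{lemintersectonce} and of the equivalence between global hyperbolicity and the existence of smooth spacelike Cauchy surfaces,'' and your argument --- take $\scrS = \scrC \cap \scrN$ for a smooth spacelike Cauchy surface $\scrC$, check transversality and spacelikeness, and verify the one-parameter-value criterion via acausality of $\scrC$ and the absence of closed causal curves --- is precisely the filling-in of that sketch. The details you supply (including the bookkeeping identifying the images of the maximal integral curves of $n$ with those of the inextensible geodesic reparametrisations) are sound.
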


\subsection{A further result}\label{gensetupCIVP}

Recall that the equalities $\overline{I^\pm(V)} = \overline{J^\pm(V)}$, $\del{I}^\pm(V) = \del{J}^\pm(V)$ and $J^\pm(V)^{\circ} = I^\pm(V)$ hold for any subset $V$ of a time-oriented Lorentzian manifold.  

\begin{proposition}\label{intersectJN}
	Let $\scrM = (M,g,\mathfrak{t})$ be a time-oriented Lorentzian manifold, $\scrN$ be an achronal smooth null hypersurface, and $S$ be a subset of $\scrN$.  Define\footnote{Recall that, according to Definition \ref{nullgendefn}, a null (geodesic) generator of a null hypersurface is always future-directed and maximally extended.}
\begin{equation*}S^+ \doteq S \cup \left\{ p \in \scrN \relmiddle| \exists \ q \in S \text{ s.t.\ $p$ comes after $q$ along the null generator through $q$} \right\}\end{equation*}
[resp.\ define $S^-$ by replacing ``after'' with ``before''].
	\begin{enumerate}[label=\textit{\textbf{(\alph*)}}]
		\item It holds that $S^\pm \subseteq \scrN \cap \del{I}^\pm(S) = \scrN \cap \del{J}^\pm(S)$.  In particular, $\scrN \subseteq \del{J}^\pm(\scrN)$. \label{SpmNinc}
	\end{enumerate}
Suppose in addition that the null generators of $\scrN$, when reparametrised as null geodesics entirely contained in $\scrN$, are future [resp.\ past] inextensible as geodesics in $\scrM$.
\begin{enumerate}[label=\textit{\textbf{(\alph*)}},resume]
	\item $J^+(\scrN) {\setminus } I^+(\scrN) = \scrN$ [resp.\ $J^-(\scrN) {\setminus } I^-(\scrN) = \scrN$]. \label{JminusI}
	\item In either case (``future'' or ``past'') $J^+(\scrN) \cap J^-(\scrN) = \scrN$ and therefore $J^+(A) \cap J^-(A) \subseteq \scrN$ for any $A \subseteq \scrN$. \label{J+capJ-}
	\item If both conditions hold, then $J(\scrN)$ is open in $M$ and $\del{J}(\scrN) = [\del{J}^+(\scrN) {\setminus } \scrN] \cup [\del{J}^-(\scrN) {\setminus } \scrN]$. \label{JN}
\end{enumerate}	
Finally assume that, in addition to the above, $\scrM$ is globally hyperbolic and $S$ is future [resp.\ past] causally complete in $\scrM$.
\begin{enumerate}[\textit{\textbf{(\alph*)}},resume]
	\item $S^+ = \scrN \cap \del{I}^+(S) = \scrN \cap \del{J}^+(S) = \scrN \cap J^+(S)$ [resp.\ $S^- = \scrN \cap \del{I}^-(S) = \scrN \cap \del{J}^-(S) = \scrN \cap J^-(S)$]. \label{SpmNeq}
	\item If $\scrN$ is closed in $M$ then $S^+$ is closed in $M$ [resp.\ $S^-$ is closed in $M$]. \label{Sclosed}
\end{enumerate}
\end{proposition}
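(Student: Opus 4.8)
The plan is to derive \ref{SpmNinc} directly from the elementary causal identities recalled before the statement, to reduce \ref{JminusI} to a fact about null generators via the local geometry of null hypersurfaces, and then to obtain \ref{J+capJ-}--\ref{Sclosed} as essentially formal consequences of \ref{SpmNinc}, \ref{JminusI} and the causal-completeness theory of Appendix \ref{causcomplapp}. For \ref{SpmNinc} I would take $p \in S^\pm$: by construction $p \in \scrN$ and $p \in J^\pm(S)$, while achronality of $\scrN$ forces $p \notin I^\pm(S)$ (a chronological relation between $p$ and a point of $S \subseteq \scrN$ would chronologically connect two points of $\scrN$). Since $J^\pm(S) \subseteq \overline{J^\pm(S)} = \overline{I^\pm(S)}$ and $I^\pm(S)$ is open, $p \in \overline{I^\pm(S)} \setminus I^\pm(S) = \del I^\pm(S) = \del J^\pm(S)$, which is the claimed inclusion; the displayed equality is just $\del I^\pm = \del J^\pm$. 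Taking $S = \scrN$ (so that $S^\pm = \scrN$, since generators stay in $\scrN$) yields $\scrN \subseteq \del J^\pm(\scrN)$.

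The heart of the argument, and the step I expect to be the main obstacle, is \ref{JminusI}; I would prove $J^+(\scrN) \setminus I^+(\scrN) = \scrN$, the past case being time-dual. The inclusion $\supseteq$ is achronality again. For $\subseteq$, given $p \in J^+(\scrN) \setminus I^+(\scrN)$, pick $q \in \scrN$ with $p \in J^+(q)$; then $p \notin I^+(q)$, so by the standard fact that a causal curve from $q$ to $p$ failing to be a null geodesic would place $p$ in $I^+(q)$ \cite{oneill1983semi-riemannian}, there is a null geodesic $\gamma$ from $q$ to $p$. The push-up property $J^+(I^+(\scrN)) \subseteq I^+(\scrN)$ shows $\gamma$ never meets $I^+(\scrN)$ (else $p \in I^+(\scrN)$). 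The delicate point is to conclude that $\gamma$ is the null generator of $\scrN$ through $q$: using the local normal form of a null hypersurface (Appendix \ref{nullhypappx}), every future-directed null direction at $q$ transverse to $\scrN$ enters $I^+(\scrN)$ immediately, so $\dot{\gamma}(0)$ must be the future null-generator direction; the future inextensibility hypothesis then guarantees that the maximal null geodesic with this initial data, i.e.\ the generator of Definition \ref{nullgendefn}, stays in $\scrN$ and is not a proper sub-arc of $\gamma$, whence $\gamma \subseteq \scrN$ and $p \in \scrN$. I expect pinning down the local transversality claim rigorously, rather than merely in the Minkowskian model, to require the most care.

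Part \ref{J+capJ-} is then formal: for $p \in J^+(\scrN) \cap J^-(\scrN)$, achronality rules out $p \in I^+(\scrN)$ (combining $q'' \ll p$ with $p$ causally preceding $q'$, for $q', q'' \in \scrN$), so $p \in J^+(\scrN) \setminus I^+(\scrN) = \scrN$ by \ref{JminusI}; under the past hypothesis one argues symmetrically with $I^-$, and $J^+(A) \cap J^-(A) \subseteq \scrN$ is immediate from $J^\pm(A) \subseteq J^\pm(\scrN)$. For \ref{JN}, \ref{JminusI} and its dual give $J^\pm(\scrN) = I^\pm(\scrN) \cup \scrN$ (disjoint, by achronality), so $J(\scrN) = I^+(\scrN) \cup I^-(\scrN) \cup \scrN$; openness follows because near any $p \in \scrN$ the hypersurface locally separates a neighbourhood into a future side contained in $I^+(\scrN)$ and a past side contained in $I^-(\scrN)$. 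The boundary formula then drops out of $\del J(\scrN) = \overline{J(\scrN)} \setminus J(\scrN)$ together with $\overline{J^\pm(\scrN)} = \overline{I^\pm(\scrN)}$ and the achronality fact $I^+(\scrN) \cap I^-(\scrN) = \emptyset$, which is precisely what shows that a point of $\del J^+(\scrN) \setminus \scrN$ cannot lie in the open set $I^-(\scrN)$, hence lies in $\del J(\scrN)$.

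Finally, \ref{SpmNeq} and \ref{Sclosed} invoke global hyperbolicity through Appendix \ref{causcomplapp}: by the remark there, $S$ future causally complete is equivalent to $S$ past compact, so Lemma \ref{causcompletelem2} gives that $J^+(S)$ is closed with $J^+(S) = \overline{I^+(S)}$, whence $\del J^+(S) = J^+(S) \setminus I^+(S)$. Combined with \ref{SpmNinc} this yields $S^+ \subseteq \scrN \cap \del J^+(S) \subseteq \scrN \cap J^+(S)$; and the generator argument from \ref{JminusI}, applied to any $p \in \scrN \cap J^+(S)$ (which satisfies $p \notin I^+(q)$ for the witnessing $q \in S$ by achronality, hence is reached from $q$ along a null generator), shows $\scrN \cap J^+(S) \subseteq S^+$, closing the chain of equalities in \ref{SpmNeq}. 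Part \ref{Sclosed} is then immediate: $S^+ = \scrN \cap J^+(S)$ is the intersection of the closed set $\scrN$ with the closed set $J^+(S)$, and is therefore closed in $M$.
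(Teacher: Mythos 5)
Your proof is correct, and for parts \ref{SpmNinc}, \ref{J+capJ-}, \ref{JN}, \ref{SpmNeq} and \ref{Sclosed} it runs essentially parallel to the paper's. The genuine divergence is at the crux of part \ref{JminusI}. The paper never argues directly that the initial velocity of the connecting null geodesic $\gamma$ is the generator direction at $q$; instead it prepends to $\gamma$ a short arc $\nu\restriction_{[-\varepsilon,0]}$ of the null generator $\nu$ through $q$, observes that the concatenation still joins a point of $\scrN$ to $p \notin I^+(\scrN)\cup\scrN$, and invokes \cite[Cor.\ 14.5]{oneill1983semi-riemannian} a second time to conclude that the concatenation is an \emph{unbroken} null geodesic; geodesic uniqueness then forces $\gamma$ to lie inside the future-inextensible, $\scrN$-valued generator, so $p\in\scrN$. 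You instead prove tangency of $\dot\gamma(0)$ directly, via the claim that every future-directed null direction at $q$ transverse to $\scrN$ enters $I^+(\scrN)$ immediately, combined with push-up. That claim is correct and the step you flagged as delicate does close without any Minkowskian model computation: a future-directed null vector at $q$ not proportional to the generator is automatically transverse to $T_q\scrN$ (the null vectors in $T_q\scrN$ form a line, Appendix \ref{nullhypappx}), so the geodesic leaves $\scrN$ for small positive parameter while staying in $J^+(\scrN)$; the flow-out neighbourhood of $\scrN$ along a global timelike field (the same device as in the footnote to the proof of part \ref{JN}) places such points in $I^+(\scrN)\cup I^-(\scrN)$, and achronality plus push-up excludes $I^-(\scrN)$. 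In short, your route trades the paper's second application of the broken-geodesic corollary for a small amount of local analysis near $q$; both work, and the same comparison applies to the generator argument re-used in part \ref{SpmNeq}.
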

\begin{proof}
	The arguments for \ref{SpmNinc} will be given in the case of $S^+$ and $\del{I}^+(S)$, since the statements involving the corresponding objects with $+$ replaced by $-$ then follow simply by a change in time orientation.  Similarly, \ref{JminusI}, \ref{J+capJ-}, \ref{SpmNeq} and \ref{Sclosed} will be proved in the case where the assumption on $\scrN$ holds with the word ``future''. \\
The generic inclusion $S^+ \subseteq \scrN \cap \del{I}^+(S)$ in \ref{SpmNinc} follows from the fact that, on the one hand, $S^+ \subseteq J^+(S) \subseteq \overline{I^+(S)}$ by construction and, on the other hand, $S^+ \cap I^+(S) = \emptyset$ because $\scrN$ is achronal and $S \subseteq S^+ \subseteq \scrN$.  Hence, $S^+ \subseteq \scrN \cap [\overline{I^+(S)} {\setminus } I^+(S)] = \scrN \cap \del{I}^+(S)$.  \\
Since $\scrN$ is achronal, the inclusion $\scrN \subseteq J^+(\scrN) {\setminus } I^+(\scrN)$ in \ref{JminusI} is obvious and it only needs to be shown that, under the assumption on the null generators of $\scrN$, $J^+(\scrN) {\setminus } [I^+(\scrN) \cup \scrN] = \emptyset$.  But this follows from a standard result in Lorentzian geometry \cite[Cor.\ 14.5]{oneill1983semi-riemannian}: namely, if $q \in J^+(\scrN) {\setminus } [I^+(\scrN) \cup \scrN]$ and $\gamma$ is a future-directed causal curve connecting a point $p \in \scrN$ to $q$, then $\gamma$ must be, up to reparametrisation, a null geodesic.  Let $X \in T_p\scrN$ be null and future-directed, and denote by $\nu : I \to M$ the unique null geodesic, maximally extended in $\scrM$, such that $\nu(0)=p$ and $\dot{\nu}(0)=X$.  By assumption, $\nu(I \cap [0, +\infty)) \subseteq \scrN$.  Concatenating $\nu\restriction_{[-\varepsilon,0]}$ (for a sufficiently small $\varepsilon > 0$) with $\gamma$ defines a causal curve $\lambda$ connecting points in $\scrN$ to $q$.  Again using \cite[Cor.\ 14.5]{oneill1983semi-riemannian} since $p \notin I^+(\scrN) \cup \scrN$, $\lambda$ must admit a reparametrisation into a smooth null geodesic.  But then (by geodesic uniqueness), the image of $\lambda$ must be a subset of $\nu(I \cap [-\varepsilon, +\infty))$, and hence it must be entirely contained in $\scrN$.  In particular, one would conclude that $q \in \scrN$, contradicting the initial assumption.  \\
To prove \ref{J+capJ-}, note first that the equalities $I^+(V) \cap J^-(V) = I^-(V) \cap J^+(V) = \emptyset$ hold for any achronal subset $V$ of $M$ (this is yet again an application of \cite[Cor.\ 14.5]{oneill1983semi-riemannian}).  Hence, under the additional assumption on the null generators of $\scrN$, using \ref{JminusI} one sees that
	\begin{equation*} J^+(\scrN) \cap J^-(\scrN) = \left\{I^+(\scrN) \cap J^-(\scrN)\right\} \cup \left\{[J^+(\scrN) {\setminus } I^+(\scrN)] \cap J^-(\scrN)\right\}= \scrN.\end{equation*}
The first statement in \ref{JN} follows from \ref{JminusI} together with the fact that, for any smooth hypersurface $N \subset M$ which is nowhere timelike, the set $I^+(N) \cup N \cup I^-(N)$ is always open.\footnote{Proof: Let $\calU$ be an open neighbourhood of $N$ in $M$ arising from the flow of a global and timelike vector field on $M$.  Then $N \subset \calU \subseteq N \cup I^+(\scrN) \cup I^-(N)$, and the claim follows.}  Clearly, no point in $\scrN$ can belong to $\del{J}(\scrN)$ and thus the inclusion $\del{J}(\scrN) \subseteq [\del{J}^+(\scrN) {\setminus } \scrN] \cup [\del{J}^-(\scrN) {\setminus } \scrN]$ in the second statement in \ref{JN} is certainly satisfied.  It therefore needs to be shown that $\del{J}^+(\scrN) {\setminus } \scrN \subseteq \del{J}(\scrN)$---the arguments for $\del{J}^-(\scrN) {\setminus } \scrN$ being completely analogous.  Since $\del{J}^+(\scrN) \subseteq \overline{J^+(\scrN)} \subseteq \overline{J(\scrN)}$ automatically, it suffices to argue that $\del{J}^+(\scrN) {\setminus } \scrN \subseteq \overline{M {\setminus } J(\scrN)} = M {\setminus } J(\scrN)$.  Using (in the order indicated) item \ref{JminusI}, the fact that $\del{J}^+(\scrN) = \del{I}^+(\scrN)$, the fact that $I^+(\scrN)$ is open, and the fact that $\scrN$ is achronal, one sees that indeed
\begin{equation*}
[\del{J}^+(\scrN) {\setminus } \scrN] \cap J(\scrN) \subseteq \del{I}^+(\scrN) \cap [I^+(\scrN) \cup I^-(\scrN)] \subseteq \del{I}^+(\scrN) \cap I^-(\scrN) = \emptyset.
\end{equation*}
One inclusion in \ref{SpmNeq} was already proved in \ref{SpmNinc} under general assumptions, so one need only prove that $\scrN \cap \del{I}^+(S) \subseteq S^+$ in the presence of global hyperbolicity and of future causal completeness of $S$.  By Lemma \ref{causcompletelem2}, in this case $J^+(S)$ is equal to the closure of $I^+(S)$, whence $\del{I}^+(S) = \overline{I^+(S)} {\setminus } I^+(S) = J^+(S) {\setminus } I^+(S)$.  It follows that the points in $\scrN \cap \del{I}^+(S)$ are precisely those points on $\scrN$ which cannot be reached from $S$ by following a future-directed timelike curve, but can be reached from $S$ by following a future-directed causal curve.  That any such causal curve must be a null geodesic entirely contained in $\scrN$, and thus that such points must belong to $S^+$, then follows from an argument analogous to the one used in the above proof of part \ref{JminusI}; the details will be omitted.
  Finally, $\scrN \cap \del{J}^+(S) = \scrN \cap J^+(S)$ in this case follows from the closedness of $J^+(S)$ together with the achronality of $\scrN$. \\ \ref{Sclosed} then follows immediately since $\del{I}^+(S)$ is a closed set.
\end{proof}

\section{Global Borel Lemma for smooth hypersurfaces}\label{borelslemmaapp}

For completeness, a version of Borel's lemma will be provided here which applies to smooth hypersurfaces in smooth (finite-dimensional) manifolds.

\begin{theorem}[Borel's lemma, smooth hypersurface version]\label{globalborel}
Let $M$ be a manifold of dimension $d+1$, $S \subset M$ be a smooth hypersurface, and $V$ be a smooth vector field which is everywhere transverse to $S$.  Then, given any collection $\{f_n\}_{n=0}^\infty \subset C^{\infty}(S)$ such that $\bigcup_{n=0}^\infty \supp{f_n} \subseteq K$ for some $K$ closed in $M$, there exists an $F \in C^\infty(M)$ solving $V^n F \restriction_{S} = f_n$ for each $n \geq 0$.  
\end{theorem}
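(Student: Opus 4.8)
The plan is to straighten $V$ near $S$ by means of its flow, thereby reducing the problem to a one-variable Borel summation fibred over $S$, and then to extend the resulting function by zero. First I would exploit transversality: for $p\in S$ one has $T_pM = T_pS \oplus \R\,V(p)$, so the flow $\theta$ of $V$ yields $\Phi(x,s)\doteq\theta_s(x)$, defined on the open set $\Omega\subseteq S\times\R$ on which the flow exists, with $d\Phi_{(x,0)}$ an isomorphism sending $\partial_s$ to $V(x)$ and restricting to the identity on $T_xS$. A standard tubular-neighbourhood argument (the inverse function theorem applied along the closed subset $S\times\{0\}$, together with paracompactness of $S$) then produces a smooth $\epsilon:S\to(0,\infty]$ such that $\Phi$ restricts to a diffeomorphism from $\Omega_\epsilon\doteq\{(x,s):|s|<\epsilon(x)\}$ onto an open neighbourhood $\calU$ of $S$ in $M$, carrying $\partial_s$ to $V$ and $S\times\{0\}$ to $S$. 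In these collar coordinates one has $V^nF\restriction_S = \partial_s^n(F\circ\Phi)\restriction_{s=0}$, so it suffices to build a suitable $u\in C^\infty(\Omega_\epsilon)$ and transport it.

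Next I would perform the Borel summation in the $s$-variable. Fix $\chi\in C^\infty(\R)$ with $\chi\equiv1$ near $0$ and $\supp\chi\subseteq(-1,1)$, a null sequence $\delta_n\downarrow0$ to be fixed below, and set
\[
u(x,s)\doteq\sum_{n=0}^\infty f_n(x)\,\frac{s^n}{n!}\,\chi\!\left(\frac{s}{\delta_n}\right).
\]
Since $\chi\equiv1$ in a neighbourhood of $s=0$ and $\partial_s^m(s^k/k!)\restriction_{s=0}=\delta_{mk}$, each partial sum already carries the correct transverse jet, giving $\partial_s^n u\restriction_{s=0}=f_n$ term by term. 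To see convergence in $C^\infty$, fix a compact exhaustion $(K_j)_{j\in\N}$ of $S$ and the associated seminorms. On the support of the $n$-th summand one has $|s|\le\delta_n$, and the Leibniz rule yields $\big|\partial_s^k\big(\tfrac{s^n}{n!}\chi(s/\delta_n)\big)\big|\le C(k)\,\delta_n^{\,n-k}$ for $0\le k\le n$; hence a mixed derivative of total order $\le j$ on $K_j$ with $n>j$ is bounded by $\big(\sup_{K_j}|\partial^\alpha f_n|\big)\,C(k)\,\delta_n^{\,n-k}$ with $n-k\ge n-j\ge1$, i.e.\ by a positive power of $\delta_n$. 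Choosing each $\delta_n\le1$ small enough to meet the finitely many conditions arising from all $j<n$ so that the $n$-th summand has $j$-th seminorm at most $2^{-n}$, one makes every tail $\sum_{n>j}$ converge in each seminorm, whence $u\in C^\infty(\Omega_\epsilon)$.

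Finally I would address the global extension, which is where the closedness of $K$ and the freedom in the choice of collar enter decisively. Since $u(x,\cdot)\equiv0$ whenever $f_n(x)=0$ for all $n$, the $S$-projection of $\supp u$ lies in $\overline{\bigcup_n\supp f_n}\subseteq K$, while the $n$-th summand is supported in $\{|s|\le\delta_n\}$. Multiplying $u$ by a cutoff $\psi\in C^\infty(\Omega_\epsilon)$ with $\psi\equiv1$ on $\{|s|<\tfrac12\epsilon(x)\}$ and $\supp\psi\subseteq\{|s|<\epsilon(x)\}$ leaves the transverse jet along $S$ unchanged, and I would then put $F\doteq(\psi u)\circ\Phi^{-1}$ on $\calU$ and $F\doteq0$ on $M\setminus\calU$. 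The main obstacle is precisely to ensure that this extension is smooth, i.e.\ that the closure in $M$ of $\supp F$ remains inside the open collar $\calU$: when $S$ (and hence $K\cap S$) is non-compact, the flowed-out support could a priori accumulate on $\del\calU$. This is handled by shrinking the admissible radius $\epsilon$ so that it decays towards the ``ends'' of $K\cap S$, and by taking the widths $\delta_n$ small relative to $\epsilon$ on $\supp f_n$; because $K$ is closed in $M$, the resulting support $\Phi(\{x\in K\cap S,\ |s|\le\delta\})$ is then a closed subset of $M$ contained in $\calU$, so indeed $F\in C^\infty(M)$ with $V^nF\restriction_S=f_n$. This same description exhibits $\supp F$ as a short flow-out of $\bigcup_n\supp f_n$ along $V$, which is the feature used in the main text to localise $\phi^\pm_{\mathrm{app}}$ inside $J(\scrN)$.
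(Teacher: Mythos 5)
Your proposal is correct and follows essentially the same route as the paper's proof: straighten $V$ via its flow into a collar ($\calO_\delta$ there, $\Omega_\epsilon$ here), perform a Borel summation in the transverse variable with cutoffs of shrinking width chosen to force convergence in all $C^k$ seminorms, and extend by zero using the fact that the flowed-out support of $\bigcup_n \supp{f_n}$ is closed in $M$ and contained in the open collar. The only (inessential) difference is organisational: where you fix a single global sequence $\delta_n$ by a diagonal argument over a compact exhaustion of $S$, the paper first localises on $S$ with a partition of unity $\{\rho^{(j)}\}$ subordinate to a cover by relatively compact patches and runs a separate summation with patchwise scaling parameters $\mu_n^{(j)}$, summing the resulting $\tilde{F}^{(j)}$ at the end.
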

\begin{proof}
Let $\calD \subseteq \R \times M$ be the maximal flow domain of $V$, $\theta : \calD \to M$ be the flow of $V$, $\calO \doteq (\R \times S) \cap \calD$, and $\Phi \doteq \theta\restriction_{\calO}$.  By standard results \cite[Thm.\ 9.20]{lee2013introduction}, there exists a smooth positive function $\delta$ on $S$ such that the restriction of $\Phi$ to $\calO_\delta \doteq \left\{ (t,p) \in \calO \relmiddle| |t| < \delta(p) \right\}$ is a diffeomorphism onto an open subset $\calV$ of $M$.  Moreover, denoting this restriction by $\Phi_\delta$, one has $V = [\Phi_\delta]_* (\partial/\partial t)$.  It may then also be deduced (e.g.\ from \cite[Cor.\ 8.21]{lee2013introduction}) that, for all integers $n\geq 0$ and all $F \in C^\infty(\calV)$, $V^n F \circ {\Phi_\delta} = \frac{\partial^n}{\partial t^n}(F \circ \Phi_\delta)$.  Thus, the extension problem at hand is reduced to the problem of finding a function $\tilde{F} \in C^\infty(\calO_\delta)$ such that $F \doteq \tilde{F} \circ {\Phi_\delta}^{-1}$ is smoothly extendible from $\calV$ to $M$ and $\frac{\partial^n \tilde{F}}{\partial t^n}(0,p) = f_n(p) \ \forall \ p \in S$.

Without loss of generality, assume that $\delta(p) \leq 1$ for all $p \in S$.  Let now $\sigma: \R \to \R$ be a smooth function such that
\begin{equation*}
\sigma(t) = \begin{cases} 1 & |t| \leq 1/4 \\ 0 & |t| \geq 1/2.\end{cases}
\end{equation*}
Let $\{ U^{(j)} \}_{j \in J}$ be an open cover of $S$ by $S$-open sets with compact closure, and let $\{ \rho^{(j)} \}_{j \in J}$ be a (locally finite) partition of unity subordinate to this cover.  For all $j \in J$ let also $\delta^{(j)} > 0$ be the infimum of $\delta$ restricted to the set $U^{(j)}$, so that $(-\delta^{(j)}, \delta^{(j)}) \times U^{(j)}$ is an open submanifold of $\calO_\delta$ and its image under $\Phi_\delta$ is an open subset of $M$.  Finally, for all $j$ and $n$ define $f^{(j)}_n \doteq \rho^{(j)} f_n \in C_{\c}^\infty(U^{(j)})$ and let $\mu^{(j)}_n$ be in $[1,+\infty)$.  Then the functions $\{\tilde{g}^{(j)}_n\}_{n=0}^\infty$ defined by
\begin{equation*} \tilde{g}^{(j)}_n(t,p) = \sigma\left(\frac{\mu^{(j)}_n t}{\delta^{(j)}}\right) \frac{t^n}{n!}f^{(j)}_n(p) \end{equation*}
are certainly smooth and with compact support on $(-\delta^{(j)}, \delta^{(j)}) \times U^{(j)}$, for all $j$ and $n$.  Let $h$ be an auxiliary Riemannian metric on $U^{(j)}$, and let $e \doteq h + \d t^2$ on $(-\delta^{(j)}, \delta^{(j)}) \times U^{(j)}$.  For any non-negative integer $k$, let $\norm{\cdot}_{k}$ and $\nnorm{\cdot}_{k}$ (respectively) be the $C^k$-norms of smooth functions on $U^{(j)}$ and $(-\delta^{(j)}, \delta^{(j)}) \times U^{(j)}$ obtained by using the Levi-Civita connections of $h$ and $e$ (respectively).  For each $j$, one can ensure convergence of the series $\sum_{n=0}^\infty \tilde{g}^{(j)}_n $ to a smooth function $\tilde{F}^{(j)}$ on $(-\delta^{(j)}, \delta^{(j)}) \times U^{(j)}$, as well as equality of all derivatives of the limit function with the infinite series of corresponding termwise derivatives, by choosing the sequence $\big(\mu^{(j)}_n\big)_{n=0}^\infty$ in such a way as to make the series absolutely convergent in all $\nnorm{\cdot}_k$ norms.  By combining standard estimates (see, e.g., Lemmas 1.1.11, 1.1.12 and 2.4.1 in \cite{baer2007wave}), it can be shown that one may set $\mu_0^{(j)} = 1$ and
\begin{equation*}
\mu_n^{(j)} = \mu_n^{(j)}\left[\big( f^{(j)}_i \big)_{i=0}^\infty\right] \doteq 1+ \frac{\max_{\ell \leq n-1} \left\{c(\ell,n) \norm{\sigma}_{C^\ell(\R)}\right\} \cdot \sum_{k \leq n-1}2^k\beta_k \big\|f^{(j)}_n\big\|_k}{n!\alpha_n} \quad \forall \ n \geq 1,
\end{equation*}
where the $\beta_k$ and the $c(\ell,n)$ are non-negative \emph{universal} constants,\footnote{In terms of the notation used in \cite[Lem.\ 1.1.12]{baer2007wave}, $\beta_k = \max\{1,\alpha(k,1)\}$.} and $(\alpha_n)_{n=0}^\infty$ is an arbitrary positive and summable sequence.  It is clear that the resulting smooth function $\tilde{F}^{(j)}$ has compact support in $(-\delta^{(j)}, \delta^{(j)}) \times U^{(j)}$, and that $\frac{\partial^n \tilde{F}^{(j)}}{\partial t^n}(0,p) = f_n^{(j)}(p) \ \forall \ p \in U^{(j)}$.  Hence, $\tilde{F} \doteq \sum_{j \in J} \tilde{F}^{(j)}$ is smooth on $\calO_\delta$ and $\frac{\partial^n \tilde{F}}{\partial t^n}(0,p) = f_n(p) \ \forall \ p \in S$.  Let $K \subseteq S$ be the $M$-closed set in the statement of theorem.  By construction, the support of $F = \tilde{F} \circ {\Phi_\delta}^{-1}$ is contained in $C \doteq {\Phi_\delta}\big(\left\{ (t,p) \in \calO \relmiddle| p \in K, \, |t| \leq 3\delta(p)/4 \right\}\big)$ and this set is closed in $M$.  Hence, $F$ can be smoothly extended from $\calV$ to $M$ by combining it with the zero function on $M {\setminus } C$ via a partition of unity subordinate to the open cover $\{ \calV, M {\setminus } C \}$.
\end{proof}

\section{Divergence theorem for densities and formal adjoints}\label{divthmappendix}

Let $M$ be a $(d+1)$-dimensional manifold with or without boundary, on which are defined a $C^1$ nowhere vanishing density $\mu$ and a $C^1$ vector field $\Upsilon$.  The \emph{divergence of $\Upsilon$ relative to $\mu$} is the continuous function ${\div}_\mu \Upsilon$ uniquely defined by
\begin{equation}\label{defnofdivden} ({\div}_\mu \Upsilon) \mu = \Lie_\Upsilon \mu, \end{equation}
where $\Lie_\Upsilon$ denotes the Lie derivative along $\Upsilon$.  In the special case in which $(M,g)$ is a semi-Riemannian manifold, $\nabla$ is the Levi-Civita connection associated with $g$, and $\mu = \mu_g$ is the volume density arising from $g$, it is well known that ${\div}_{\mu_g}{\Upsilon} = \mathrm{Tr}{(\nabla \Upsilon)} = (\nabla_a \Upsilon^a) \, \mu_g$.

The first version of the divergence theorem presented here applies to domains whose topological boundaries are smooth hypersurfaces in $M$.  Following \cite[Secs.\ 10.5 \& 10.6]{loomis1990advanced}, a Borel measurable set $D \subseteq M$ will be called a \emph{domain with regular boundary in $M$} if, for every $p \in M$, there is a chart $(U, \varphi)$ about $p$ such that one of the following three possibilities holds: (i) $U \cap D = \emptyset$; (ii) $U \subseteq D$; (iii) $\varphi(U \cap D) = \varphi(U) \cap \left\{ (y^0, \ldots, y^{d}) \in \R^{d+1} \relmiddle| y^{d} > 0 \right\}$.  Given such a $D$ and $x \in \del D$, the two connected components of $T_x M {\setminus } T_x \del D$ consist of the inward-pointing or outward-pointing vectors with respect to $D$. If $Y$ is a vector at $x$, $\mu$ is a density on $T_x \del D$, and $i$ denotes the inclusion of $\del D$ into $M$, the interior product $\iota_Y \mu$ is the density on $T_x \del D$ defined by $[\iota_Y \mu] (X_1, \ldots, X_d) = \mu(i_* X_1, \ldots,  i_* X_d, Y) \  \forall \ X_1, \ldots, X_d \in T_x \del D$.  It is important to note that $\iota_Y \mu = 0$ whenever $Y \in T_x \del D$, and that $\iota_Y \mu = \iota_{Y'} \mu$ whenever $Y-Y' \in T_x \del D$. [To prove the second statement assuming $Y \notin T_x \del D$, pick an ordered basis $(X_1, \ldots, X_d)$ of $T_x \del D$ and complete it to bases $\calB \doteq (X_1, \ldots, X_d, Y)$ and $\calB' \doteq (X_1, \ldots, X_d, Y')$ of $T_x M$.  Then use the defining property of densities and the fact that the unique linear map $A \in \mathrm{GL}(T_x M)$ taking $\calB'$ to $\calB$ has unit determinant.]  It follows that any $n \in T_x^*M$ normal to $\del D$ uniquely defines a density $\iota_n \mu$ on $T_x \del D$ by
\begin{equation}\label{intproductconormal} \iota_n \mu \doteq \iota_Y \mu \quad \text{for any }Y \in T_x M \text{ such that } n(Y)=1.  \end{equation}
Dually to the notion for vectors, a covector $n$ normal to $\del D$ will be said to be outward pointing (relative to $D$) if $n(Y) > 0$ for any outward-pointing vector $Y$, and inward pointing if $n(Y) < 0$ for any such $Y$.

\begin{theorem}[Classical divergence theorem for densities, following \cite{loomis1990advanced}]\label{divthmdensities}
	Let $M$ be a manifold (without boundary), $\mu$ be a $C^1$ density on $M$, and $\Upsilon$ be a $C^1$ vector field with compact support on $M$.  If $D$ is a domain with regular boundary in $M$ and $n$ is any (not necessarily continuous) field of conormals to $\del{D}$, define the functions $\epsilon_\Upsilon$ and $\sgn_D(n)$ on $\del{D}$ by
	\begin{align*}
		\epsilon_\Upsilon(x) &= \begin{cases} 1 & \text{if $\Upsilon(x)$ points out of $D$}\\ -1 & \text{if $\Upsilon(x)$ points into $D$} \\ 0 &\text{if $\Upsilon(x)$ is tangent to $\del{D}$} \end{cases} \\
		\text{and} \quad \sgn_D(n)(x) &= \begin{cases} 1 & \text{if $n(x)$ is outward pointing}\\ -1 & \text{if $n(x)$ is inward pointing.} \end{cases}
	\end{align*}
Let also $\iota_n \mu$ be as defined in Equation (\ref{intproductconormal}).  Then
\begin{equation}\label{divthmdensitiesgeneraleq}
	\left[\int_{D} ({\div}_\mu \Upsilon) \mu = \right] \int_{D} \Lie_\Upsilon \mu = \int_{\del{D}} \epsilon_\Upsilon \, \iota_\Upsilon \mu = \int_{\del{D}} \sgn_D(n) \, n(\Upsilon) \, \iota_n \mu, \end{equation}
	where the equality in square brackets is valid if $\mu$ is nowhere vanishing. \end{theorem}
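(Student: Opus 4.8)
The plan is to establish the two equalities in Equation (\ref{divthmdensitiesgeneraleq}) separately: first the purely pointwise identity relating the two boundary integrands, and then the substantive divergence identity, which I reduce to the fundamental theorem of calculus in adapted coordinates.

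For the second equality in (\ref{divthmdensitiesgeneraleq}), I would show that $\epsilon_\Upsilon \, \iota_\Upsilon \mu$ and $\sgn_D(n)\, n(\Upsilon)\, \iota_n \mu$ agree pointwise on $\del D$. At a point $x$ where $\Upsilon(x)$ is tangent to $\del D$ both densities vanish, since $\epsilon_\Upsilon(x) = 0$ and $n(\Upsilon)(x) = 0$. Where $\Upsilon(x)$ is transverse, I would decompose $\Upsilon = \Upsilon_\parallel + c\, Y_0$ with $\Upsilon_\parallel \in T_x \del D$, $n(Y_0) = 1$ and $c = n(\Upsilon)$; the elementary homogeneity of $Y \mapsto \iota_Y \mu$ together with the tangential invariance recorded before Equation (\ref{intproductconormal}) then give $\iota_\Upsilon \mu = |c|\, \iota_{Y_0}\mu = |n(\Upsilon)|\, \iota_n \mu$. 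Comparing the definitions of $\epsilon_\Upsilon$ and $\sgn_D(n)$ shows $\epsilon_\Upsilon = \sgn_D(n)\,\sgn(n(\Upsilon))$, whence $\epsilon_\Upsilon \, \iota_\Upsilon \mu = \sgn_D(n)\, n(\Upsilon)\, \iota_n \mu$ at $x$. This settles the second equality with no global input.

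The first equality is the heart of the matter. Since $\Upsilon$ is compactly supported, I would fix a locally finite cover of $\supp{\Upsilon}$ by charts each adapted to $D$ in one of the three ways (i), (ii), (iii) of the definition of a domain with regular boundary, together with a subordinate partition of unity; by linearity of both sides in $\Upsilon$ it then suffices to treat $\Upsilon$ supported in a single such chart $(U, \varphi)$. Writing $\mu = a\,|dy^0 \wedge \cdots \wedge dy^d|$ and $\Upsilon = \Upsilon^k \partial_{y^k}$ in these coordinates, the coordinate formula for the Lie derivative of a density yields $\Lie_\Upsilon \mu = \partial_k(a\Upsilon^k)\,|dy^0\wedge\cdots\wedge dy^d|$, so that $\int_D \Lie_\Upsilon \mu = \int_{\varphi(U \cap D)} \partial_k(a \Upsilon^k)\, dy$ (the integrand is continuous because $a\Upsilon^k$ is $C^1$). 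In cases (i) and (ii) the chart meets no boundary point, and the fundamental theorem of calculus in every coordinate direction, using compact support, makes this integral vanish, as does the boundary integral trivially. In case (iii), with $\varphi(U \cap D) = \varphi(U) \cap \{y^d > 0\}$ and $\del D$ given by $\{y^d = 0\}$, the same termwise integration kills the contributions from $k < d$ and leaves $-\int_{\{y^d=0\}} a\Upsilon^d\, dy^0 \cdots dy^{d-1}$.

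It then remains to identify this surviving term with $\int_{\del D}\epsilon_\Upsilon \, \iota_\Upsilon \mu$. Here I would compute $\iota_\Upsilon \mu$ on $\{y^d = 0\}$ directly from its defining property: evaluating on the frame $(\partial_{y^0}, \ldots, \partial_{y^{d-1}})$ and expanding the resulting determinant along the last column gives $\iota_\Upsilon \mu = a\,|\Upsilon^d|\,|dy^0 \wedge \cdots \wedge dy^{d-1}|$. Since $D$ lies in $\{y^d>0\}$, a vector points out of $D$ precisely when $\Upsilon^d < 0$, so $\epsilon_\Upsilon = -\sgn(\Upsilon^d)$ and hence $\epsilon_\Upsilon \, \iota_\Upsilon \mu = -a\Upsilon^d\,|dy^0\wedge\cdots\wedge dy^{d-1}|$, matching the bulk integral just computed. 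The main obstacle is one of bookkeeping: because densities are positively homogeneous rather than linear, every sign must be tracked carefully through the absolute values, and the argument must stay valid on a possibly non-orientable $M$ — which is exactly why I organise the proof around the chart-local determinant computation and the coordinate-free homogeneity identities rather than around a global top-degree form. The bracketed reformulation in terms of $\div_\mu \Upsilon$ is then immediate from Equation (\ref{defnofdivden}) when $\mu$ is nowhere vanishing.
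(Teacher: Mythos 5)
Your proposal is correct, and it differs from the paper's proof only in how much it chooses to prove. The paper disposes of the first (substantive) equality in (\ref{divthmdensitiesgeneraleq}) by citing Theorem 6.1, p.\ 421 of \cite{loomis1990advanced}, and devotes its entire written argument to the pointwise identity $\epsilon_\Upsilon \, \iota_\Upsilon \mu = \sgn_D(n)\, n(\Upsilon)\, \iota_n \mu$; your treatment of that identity---splitting off the tangential case, decomposing $\Upsilon$ against a transverse vector $Y_0$ with $n(Y_0)=1$, and invoking the positive homogeneity of $Y \mapsto \iota_Y\mu$ together with the tangential invariance recorded before Equation (\ref{intproductconormal})---is essentially the paper's computation, differing only in normalisation ($n(Y_0)=1$ versus $n(Y)=\sgn_D(n)(x)$). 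Where you go further is in supplying a self-contained proof of the divergence identity itself via a partition of unity and the fundamental theorem of calculus in adapted charts; this is sound (the linearity in $\Upsilon$ needed for the partition-of-unity reduction is available precisely because you establish the pointwise identity first, making the boundary integrand $\sgn_D(n)\,n(\Upsilon)\,\iota_n\mu$ manifestly linear in $\Upsilon$), the chart-local determinant expansion correctly yields $\iota_\Upsilon\mu = a|\Upsilon^d|\,|\d y^0\wedge\cdots\wedge\d y^{d-1}|$, and the sign $\epsilon_\Upsilon = -\sgn(\Upsilon^d)$ matches the boundary term $-\int a\Upsilon^d$ produced by integrating $\partial_d(a\Upsilon^d)$ over $\{y^d>0\}$. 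What your route buys is independence from the reference and an argument that transparently works on non-orientable $M$, since it never invokes a global top-degree form; what the paper's route buys is brevity. One small point worth making explicit if you write this up: the bracketed equality requires $\mu$ nowhere vanishing only so that $\div_\mu\Upsilon$ is defined by Equation (\ref{defnofdivden}); your chart computation of $\int_D \Lie_\Upsilon\mu$ does not need it.
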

\begin{proof}
	The first (unbracketed) equality in (\ref{divthmdensitiesgeneraleq})  is Theorem 6.1, p.\ 421 in \cite{loomis1990advanced}.  It will now be shown that $\epsilon_\Upsilon \, \iota_\Upsilon \mu = \sgn_D(n) \, n(\Upsilon) \, \iota_n \mu$ pointwise under the given assumption on $n$.  Let $x \in \del{D}$ and $Y \in T_x M$ be a vector pointing out of $D$ and such that $n(Y) = \sgn_D(n)(x)$.  Let $\Upsilon_Y$ be uniquely defined by
	\begin{equation*} \Upsilon(x) - \Upsilon_Y Y \in T_x\del{D};\end{equation*}
	then $0 = n(\Upsilon(x)) - \Upsilon_Y n(Y) = n(\Upsilon(x)) - \Upsilon_Y \sgn_D(n)(x)$, i.e.\ $\Upsilon_Y = [\sgn_D(n) \, n(\Upsilon)](x)$.  On the other hand, it obviously holds that $\epsilon_\Upsilon(x) = \sgn(\Upsilon_Y)$.  Hence, everywhere on $\del{D}$,
\begin{equation}\label{epsilonupsilon}
	\epsilon_\Upsilon = \sgn [\sgn_D(n) n(\Upsilon)] = \sgn_D(n) \, \sgn[n(\Upsilon)].
\end{equation}
The pointwise behaviour of $\iota_\Upsilon \mu$ will now be examined. By the considerations preceding the statement of the theorem, $\iota_{\Upsilon(x)} \mu = \iota_{(\Upsilon_Y Y)} \mu$, and the latter equals $|\Upsilon_Y| \, \iota_Y \mu$ by the defining property of densities.  Therefore,
\begin{equation}\label{iotaupsilon}
	\iota_{\Upsilon(x)} \mu = |\Upsilon_Y| \, \iota_Y \mu = |n(\Upsilon(x))| \, \iota_{n(x)} \mu.
\end{equation}
Since $x \in \del{D}$ was arbitrary, combining Equations (\ref{epsilonupsilon}) and (\ref{iotaupsilon}) gives $\epsilon_\Upsilon \, \iota_\Upsilon \mu = \sgn_D(n) \, n(\Upsilon) \, \iota_n \mu$ everywhere on $\del{D}$, as claimed.
\end{proof}

The restriction to domains with regular boundary in the statement of Theorem \ref{divthmdensities} is unnecessary.  Various density/continuity arguments may be used to extend the result to rougher domains and/or to vector fields of regularity lower than $C^1$.  For the purposes of this paper, the following strengthening will suffice: as shown in App.\ I in \cite{taylor2006measure} (see also \cite[Thm.\ 5.16]{evans2015measure}), Theorem \ref{divthmdensities} holds almost \emph{verbatim} if $\del D$ is a locally Lipschitz topological hypersurface in $M$.  In that case, the integrand $\epsilon_\Upsilon \, \iota_\Upsilon \mu$ is not defined everywhere on $\del D$, but is still defined \emph{almost everywhere} on $\del D$ since, by Rademacher's theorem on Lipschitz functions, $\del D$ has a tangent space at almost all of its points.

\subsection{Formal adjoints of vector fields}\label{adjointvecfieldappendix}

As a corollary of Theorem \ref{divthmdensities}, one may now find a general expression for the formal adjoint of a smooth vector field (seen as a first-order scalar differential operator) with respect to a given nowhere-vanishing smooth density.  The proof is a standard exercise in integration by parts and will be omitted.
\begin{corollary}\label{coradjointvecfield}
	Let $\mu$ be a nowhere-vanishing smooth density on a manifold $M$ (with or without boundary), and let $X$ be a $C^1$ vector field.  Then the formal adjoint differential operator to $X$, with respect to $\mu$, is
	\begin{equation*} X^\dagger = -X - [{\div}_\mu X]. \end{equation*}
In particular, the formal adjoints of $X$ relative to two distinct (nowhere-vanishing) densities differ by a multiplication operator.  If $\mu = \mu_g$ is the volume density of a semi-Riemannian metric $g$ then $X^\dagger = -X - {\div}X = -X - [\nabla_a X^a]$. \qed
\end{corollary}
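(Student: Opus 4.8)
The plan is to bypass integration entirely and to verify the formula at the infinitesimal level, through the Green--Vinogradov characterisation in Equation~(\ref{greenvinogradov}). By that characterisation it suffices to produce the candidate operator together with a Green vector field $\calG[\beta,\alpha]$ for which
\begin{equation*}
\beta(X\alpha) - [X^\dagger\beta](\alpha) = {\div}_\mu \calG[\beta,\alpha]
\end{equation*}
holds pointwise. Since this is a local algebraic identity, no boundary contribution ever arises, which is precisely what licenses the clause ``with or without boundary'' in the statement. The natural guess for the Green vector field is the expression $\calG[\beta,\alpha] \doteq \alpha\beta\, X$, which is bilinear and of order zero in each of its two arguments.

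Next I would compute ${\div}_\mu(\alpha\beta X)$ using the Leibniz rule for the $\mu$-divergence, which is itself immediate from property~(\ref{liederivdensitiesproperty}) together with the defining relation~(\ref{defnofdivden}): writing $f = \alpha\beta$ one gets $\Lie_{fX}\mu = f\Lie_X\mu + (Xf)\mu$, whence ${\div}_\mu(fX) = f\,{\div}_\mu X + Xf$. Expanding $Xf = \beta(X\alpha) + \alpha(X\beta)$ yields
\begin{equation*}
{\div}_\mu(\alpha\beta X) = \alpha\beta\,{\div}_\mu X + \beta(X\alpha) + \alpha(X\beta).
\end{equation*}
Rearranging this identity gives $\beta(X\alpha) - \alpha\big[-X\beta - ({\div}_\mu X)\beta\big] = {\div}_\mu(\alpha\beta X)$, which is exactly Equation~(\ref{greenvinogradov}) with Green vector field $\calG[\beta,\alpha] = \alpha\beta X$ and with $X^\dagger\beta = -X\beta - ({\div}_\mu X)\beta$. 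By the uniqueness of the formal adjoint I would then conclude $X^\dagger = -X - [{\div}_\mu X]$. Equivalently, one may argue directly from Definition~\ref{formaladjointdefn}: integrate the displayed identity against $\mu$ over $M$ and apply the divergence theorem for densities (Theorem~\ref{divthmdensities}) with $D = M$, whose boundary term vanishes because $\alpha\beta X$ has compact support; this recovers the same $X^\dagger$.

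For the two subsidiary claims, I would note that any two nowhere-vanishing smooth densities are related by $\mu' = [h]\mu$ for some nowhere-vanishing $h \in C^\infty(M)$; comparing $\Lie_X\mu' = ({\div}_{\mu'} X)\mu'$ with $\Lie_X(h\mu) = (Xh)\mu + h({\div}_\mu X)\mu$ gives ${\div}_{\mu'} X = {\div}_\mu X + h^{-1}(Xh)$, so that the corresponding adjoints share the first-order part $-X$ and differ only by the multiplication operator $-[h^{-1}Xh]$. The semi-Riemannian case is then immediate upon substituting the already-recalled identity ${\div}_{\mu_g} X = \Tr(\nabla X) = \nabla_a X^a$. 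There is no genuine obstacle here; the only point demanding any care is the boundary clause, and the Green--Vinogradov route dispatches it cleanly by keeping the entire argument pointwise, so that the divergence theorem (and any attendant boundary integral) need never actually be invoked.
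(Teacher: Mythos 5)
Your argument is correct and is essentially the proof the paper has in mind: the paper omits it as ``a standard exercise in integration by parts'', and your computation ${\div}_\mu(\alpha\beta X)=\alpha\beta\,{\div}_\mu X+\beta(X\alpha)+\alpha(X\beta)$ is precisely that integration by parts carried out at the level of the integrand, packaged as the Green--Vinogradov identity (\ref{greenvinogradov}) with $\calG[\beta,\alpha]=\alpha\beta X$ (and the two subsidiary claims are handled correctly via ${\div}_{\mu'}X={\div}_\mu X+h^{-1}Xh$). The one overstatement is the claim that the divergence theorem ``need never actually be invoked'': to recover the defining integral identity of Definition~\ref{formaladjointdefn} one still needs $\int_M\Lie_{\alpha\beta X}\,\mu=0$, i.e.\ Theorem~\ref{divthmdensities} with $D=M$ (and, on a manifold with boundary, test sections supported away from $\del M$, since otherwise a genuine boundary term survives); your ``equivalently'' paragraph supplies exactly this, so the pointwise route should be understood as characterising $X^\dagger$ rather than as eliminating that final integration step.
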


\subsection{Jump formulae for differential operators and single-layer distributions}\label{gencommidentityapp}

The main ingredient in the argument for uniqueness used in the main body of text is a certain commutator identity involving the partial differential operator of interest and the indicator function of a relevant domain.  Analogous identities may be obtained for general operators and domains, as simple corollaries of the divergence theorem for densities.  In the theory of distributions and of partial differential equations, these and related commutator identities (particularly when expressed in distributional language as will be done below) are often referred to as \emph{jump formulae}.

In this subsection, $M$ will denote a manifold on which is defined a preferred smooth, nowhere-vanishing density $\mu$.  The following is a simple consequence of the divergence theorem together with the definition of a Green vector field, Equation (\ref{greenvinogradov}).

\begin{corollary}\label{corcommidentitygen} Let $P$ be a scalar differential operator on $M$ with smooth coefficients, for which a Green vector field $\calG$ exists in the sense of Equation (\ref{greenvinogradov}).  Let $D$ be a domain with locally Lipschitz boundary in $M$; $\bm{1}_D$ will denote the indicator function of $D$, whose corresponding multiplication operator is $[\bm{1}_D] : C^\infty(M) \to \scrD'(M)$.  Then, for any (possibly rough and/or almost everywhere defined) field of outward-pointing conormals to $\del D$,
\begin{equation}
  \pair{[[\bm{1}_D], P]\phi}{\chi \mu} = \int_{D} ({\div}_\mu \calG[\chi, \phi]) \, \mu = \int_{\del{D}} n(\calG[\chi, \phi]) \, \iota_n \mu \label{commidentity1}
\end{equation}
whenever $\phi, \chi \in C^\infty(M)$ and $\supp{\chi} \cap \supp{\phi} \cap \overline{D}$ is compact. \qed
\end{corollary}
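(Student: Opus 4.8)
The plan is to unwind the distributional pairing on the left-hand side into an integral over $D$ of the pointwise expression $\chi\,P\phi - \phi\,P^\dagger\chi$, to identify this integrand with ${\div}_\mu \calG[\chi,\phi]$ by means of the defining property (\ref{greenvinogradov}) of the Green vector field, and finally to convert the resulting domain integral into the claimed boundary integral via the divergence theorem for densities.

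First I would write $[[\bm{1}_D],P]\phi = \bm{1}_D(P\phi) - P(\bm{1}_D\phi)$ and pair with $\chi\mu$. The first term contributes $\int_D \chi\,(P\phi)\,\mu$ directly. For the second term I would use the distributional extension of $P$ recorded in the Remark following Definition \ref{formaladjointdefn}, namely $[P(\bm{1}_D\phi)](\chi\mu) = (\bm{1}_D\phi)([P^\dagger\chi]\mu) = \int_D \phi\,(P^\dagger\chi)\,\mu$; the support hypothesis that $\supp\chi\cap\supp\phi\cap\overline{D}$ is compact guarantees all these integrals are over a compact region and hence finite. Subtracting yields
\begin{equation*}
\pair{[[\bm{1}_D],P]\phi}{\chi\mu} = \int_D \big(\chi\,(P\phi) - \phi\,(P^\dagger\chi)\big)\,\mu.
\end{equation*}
Applying (\ref{greenvinogradov}) with $\beta = \chi$ and $\alpha = \phi$ shows the integrand equals ${\div}_\mu \calG[\chi,\phi]$, which is precisely the first asserted equality; since $\Lie_{\calG[\chi,\phi]}\mu = ({\div}_\mu\calG[\chi,\phi])\,\mu$ by (\ref{defnofdivden}), this also rewrites as $\int_D \Lie_{\calG[\chi,\phi]}\mu$.

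It remains to pass to the boundary. Because $\calG$ is bilinear and bidifferential, $\supp\calG[\chi,\phi]\subseteq\supp\chi\cap\supp\phi$, so by hypothesis $\calG[\chi,\phi]$ has compact support on $\overline{D}$; multiplying it by a cutoff equal to $1$ on a neighbourhood of $\supp\chi\cap\supp\phi\cap\overline{D}$ produces a compactly supported $C^1$ vector field agreeing with $\calG[\chi,\phi]$ near $\overline{D}$ and leaving both integrals unchanged. I would then invoke the divergence theorem for densities, in the locally-Lipschitz-boundary form recalled after Theorem \ref{divthmdensities}, to obtain
\begin{equation*}
\int_D \Lie_{\calG[\chi,\phi]}\mu = \int_{\del D}\sgn_D(n)\,n(\calG[\chi,\phi])\,\iota_n\mu = \int_{\del D} n(\calG[\chi,\phi])\,\iota_n\mu,
\end{equation*}
the last step using that $n$ is outward-pointing, so $\sgn_D(n)\equiv 1$. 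The only point requiring care is the bookkeeping of supports that lets the divergence theorem apply even though $\chi,\phi$ are merely smooth rather than compactly supported; this is exactly what the cutoff argument above settles, together with the fact that $\iota_n\mu$ is defined almost everywhere on the Lipschitz hypersurface $\del D$ by Rademacher's theorem.
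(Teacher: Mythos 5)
Your argument is correct and is precisely the route the paper intends: the corollary carries no written proof beyond the remark that it is ``a simple consequence of the divergence theorem together with the definition of a Green vector field,'' and your write-up supplies exactly those steps (unwinding the commutator via the distributional adjoint, identifying the integrand through Equation (\ref{greenvinogradov}), and applying Theorem \ref{divthmdensities} in its locally-Lipschitz-boundary form with $\sgn_D(n)\equiv 1$ for outward-pointing $n$). The only microscopic imprecision is that your cutoff vector field agrees with $\calG[\chi,\phi]$ \emph{on} $\overline{D}$ rather than on a neighbourhood of it; since $D$ is open and $\del D\subseteq\overline{D}$, that is all the divergence theorem requires, so nothing is lost.
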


Recall that pointwise multiplication by $\mu$ yields an isomorphism $[\mu] : C_{(\c)}^\infty(M) \to \Gamma_{(\c)}^\infty(\calD M)$.

\begin{definition}[Single-layer distributions]\label{singlayerdistr}
	Let $X$ be a smoothly embedded closed submanifold of $M$ and $i : X \to M$ be the inclusion map.  Given a $\rho \in \big[C_{\c}^\infty(X)\big]'$, the assignment
	\begin{equation*}
	\Gamma_{\c}^\infty(\calD M) \ni \nu \mapsto \rmS_{X,\rho}(\nu) \doteq \big\{\rho \circ i^* \circ [\mu]^{-1}\big\}(\nu)
	\end{equation*}
	defines a distribution $\rmS_{X, \rho}$ on $M$ which will be referred to as the \emph{single-layer distribution} associated with the pair $(X, \rho)$. $\rmS_{X, \rho}$ has compact support whenever $\rho$ does.  The linear and continuous mapping $[C_\c^\infty(X)]' \ni \rho \mapsto \rmS_{X, \rho} \in \scrD'(M)$, uniquely determined by $X$ and the background density $\mu$, will be denoted by $\calS_{X,\mu}$.
\end{definition}

Note that, by the divergence theorem, $\pair{[\calS_{X, \mu} \circ \Lie_{\mathfrak{y}}](\rho)}{\mu} = 0$ for any $\rho \in [C_\c^\infty(X)]'$ and $\mathfrak{y} \in \mathfrak{X}(X)$ such that $\supp{\mathfrak{y}} \cap \supp{\rho}$ is compact.  A basic example of a single-layer distribution is what might be referred to as the \emph{delta distribution} supported on a hypersurface $Y$ relative to a (sufficiently regular) field of non-zero conormals along $Y$, namely the distribution $\delta_{Y,n}$ given by $\pair{\delta_{Y,n}}{\nu} \doteq \int_Y \iota_n \nu$.  Indeed, $\delta_{Y,n} = \calS_{Y,\mu}(\iota_n \mu)$ for any smooth nowhere-vanishing density $\mu$.  The following property will be relevant: if $F \in C^\infty(Y; \R {\setminus } \{ 0 \})$ and $n' \doteq Fn$, then $\delta_{Y, n'}(\nu) = \delta_{Y, n}(|F|^{-1} \nu)$.  In particular, $\delta_{Y,n} = \delta_{Y,-n}$.

Single-layer distributions (as well as their generalisations, the \emph{multi-layer} distributions) appear frequently in the literature on PDEs \cite{seeley1962distributions, wagner2010distributions, stampfer2011pullback, ortner2015fundamental}, for they provide a convenient way to reformulate jump formulae in a distributional language.  To wit, the following is a restatement of Corollary \ref{corcommidentitygen}.  

\begin{corollary}\label{cordivthmdelta}
Let $P$ and $\calG$ be as in Corollary \ref{corcommidentitygen}, and let $D$ be a domain with regular boundary in $M$.  Assume there exists a smooth, everywhere outward-pointing conormal field $n$ along $\del D$.  Then, whenever $\phi, \chi \in C^\infty(M)$ and $\supp{\chi} \cap \supp{\phi} \cap \overline{D}$ is compact,
	\begin{equation*}
	\pair{[[\bm{1}_D], P]\phi}{\chi \mu} = \pair{\calS_{\del D, \mu}(M[\chi, \phi])}{\mu},
	\end{equation*}
	where $M[\chi, \phi] \doteq n(\calG[\chi, \phi]\restriction_{\del{D}}) \, \iota_n \mu$. \qed
\end{corollary}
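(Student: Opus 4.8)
The plan is to show that the right-hand side of the asserted identity is merely a repackaging of the boundary integral appearing in Corollary \ref{corcommidentitygen}, so that the result follows immediately from the latter. First I would observe that $M[\chi,\phi] = n(\calG[\chi,\phi]\restriction_{\del D})\,\iota_n\mu$ is a smooth density on $\del D$ and that, since the Green vector field $\calG$ is bidifferential, $\supp{\calG[\chi,\phi]} \subseteq \supp{\chi} \cap \supp{\phi}$; restricting to $\del{D} \subseteq \overline{D}$ shows $\supp{(M[\chi,\phi])} \subseteq \supp{\chi} \cap \supp{\phi} \cap \overline{D}$, which is compact by hypothesis. It therefore defines, via integration against test functions, a compactly supported element of $[C_\c^\infty(\del D)]'$ to which the single-layer map $\calS_{\del D, \mu}$ of Definition \ref{singlayerdistr} may be applied. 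Consequently $\calS_{\del D, \mu}(M[\chi,\phi])$ is a compactly supported distribution on $M$, and—because $\mu$ is nowhere vanishing so that the relevant intersection of supports is compact—its pairing with the density $\mu$ is well-defined.

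Next I would simply unwind Definition \ref{singlayerdistr}. By construction, for any $\nu \in \Gamma_\c^\infty(\calD M)$ one has $\calS_{\del D, \mu}(\rho)(\nu) = \rho\big(i^*([\mu]^{-1}\nu)\big)$, where $i : \del D \to M$ is the inclusion and $\rho$ is regarded as above. The crucial elementary fact is that $[\mu]^{-1}\mu$ is the constant function $1$ on $M$, so that $i^*([\mu]^{-1}\mu) = 1\restriction_{\del D}$. Taking $\nu = \mu$ and $\rho = M[\chi,\phi]$, and recalling that the latter acts by integration, yields
\begin{equation*}
\pair{\calS_{\del D, \mu}(M[\chi,\phi])}{\mu} = M[\chi,\phi](1) = \int_{\del D} n(\calG[\chi,\phi]\restriction_{\del D})\,\iota_n\mu.
\end{equation*}

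Finally I would invoke Corollary \ref{corcommidentitygen}, whose second equality in Equation (\ref{commidentity1}) identifies this boundary integral with $\pair{[[\bm{1}_D], P]\phi}{\chi\mu}$, completing the proof. There is no genuine analytic obstacle here: the statement is a purely formal restatement of Corollary \ref{corcommidentitygen} in distributional language, and the only point requiring a modicum of care is the bookkeeping that renders the pairing $\pair{\calS_{\del D, \mu}(M[\chi,\phi])}{\mu}$ meaningful despite $\mu$ failing to have compact support—which is precisely the role played by the compactness hypothesis on $\supp{\chi} \cap \supp{\phi} \cap \overline{D}$.
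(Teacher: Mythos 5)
Your argument is correct and is exactly the verification the paper intends: the corollary is stated with no proof precisely because it is, as you say, a distributional repackaging of the boundary integral in Corollary \ref{corcommidentitygen}, obtained by unwinding Definition \ref{singlayerdistr} with $\nu=\mu$ and $\rho = M[\chi,\phi]$. Your bookkeeping of supports (compactness of $\supp{\chi}\cap\supp{\phi}\cap\overline{D}$ making both the single-layer distribution compactly supported and the pairing with $\mu$ meaningful) is the only point requiring care, and you handle it correctly.
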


\bibliographystyle{amsalpha}
\bibliography{Bibliography}

\providecommand{\bysame}{\leavevmode\hbox to3em{\hrulefill}\thinspace}
\providecommand{\MR}{\relax\ifhmode\unskip\space\fi MR }
% \MRhref is called by the amsart/book/proc definition of \MR.
\providecommand{\MRhref}[2]{%
  \href{http://www.ams.org/mathscinet-getitem?mr=#1}{#2}
}
\providecommand{\href}[2]{#2}
\begin{thebibliography}{CBCMG11}

\bibitem[AB02]{alonso-blanco2002green}
R.~J. Alonso-Blanco, \emph{On the {G}reen--{V}inogradov formula}, Acta Appl.
  Math. \textbf{72} (2002), no.~1, 19--32.

\bibitem[ABV04]{alonso-blanco2004green}
R.~J. Alonso-Blanco and A.~M. Vinogradov, \emph{Green formula and {L}egendre
  transformation}, Acta Appl. Math. \textbf{83} (2004), no.~1, 149--166.

\bibitem[Agr15]{agranovich2015sobolev}
M.~S. Agranovich, \emph{{Sobolev Spaces, Their Generalizations and Elliptic
  Problems in Smooth and Lipschitz Domains}}, Springer Monographs in
  Mathematics, Springer, 2015.

\bibitem[Amb80]{ambrose1980products}
W.~Ambrose, \emph{Products of distributions with values in distributions}, J.
  Reine Angew. Math. \textbf{315} (1980), 73--91.

\bibitem[B{\"a}r15]{baer2015green}
C.~B{\"a}r, \emph{Green-hyperbolic operators on globally hyperbolic
  spacetimes}, Commun. Math. Phys. \textbf{333} (2015), no.~3, 1585--1615,
  \href{https://arxiv.org/abs/1310.0738}{arXiv:1310.0738}.

\bibitem[B{\"a}r17]{baer2017lectures}
\bysame, \emph{{Geometric Wave Equations}}, Preprint lecture notes, University of Potsdam, 2017, \url{https://www.math.uni-potsdam.de/fileadmin/user_upload/Prof-Geometrie/Dokumente/Lehre/Lehrmaterialien/Waves.pdf}.

\bibitem[BG12]{baer2012classical}
C.~B{\"a}r and N.~Ginoux, \emph{Classical and quantum fields on {L}orentzian
  manifolds}, Global Differential Geometry (C.\ B{\"a}r, J.\ Lohkamp, and M.\
  Schwarz, eds.), Springer Proceedings in Mathematics, vol.~17, Springer, 2012,
  \href{https://arxiv.org/abs/1104.1158}{arXiv:1104.1158}, pp.~359--400.

\bibitem[BGP07]{baer2007wave}
C.~B{\"a}r, N.~Ginoux, and F.~Pf{\"a}ffle, \emph{{Wave Equations on Lorentzian
  Manifolds and Quantization}}, ESI {L}ectures in {M}athematics and {P}hysics,
  European Mathematical Society, 2007,
  \href{https://arxiv.org/abs/0806.1036}{arXiv:0806.1036}.

\bibitem[BS07]{bernal2007globally}
A.~N. Bernal and M.~S\'anchez, \emph{Globally hyperbolic spacetimes can be
  defined as `causal' instead of `strongly causal'}, Class. Quantum Grav.
  \textbf{24} (2007), no.~3, 745,
  \href{https://arxiv.org/abs/gr-qc/0611138}{gr-qc/0611138}.

\bibitem[BW15]{baer2015initial}
C.~B\"ar and R.~T. Wafo, \emph{Initial value problems for wave equations on
  manifolds}, Math. Phys. Anal. Geom. \textbf{18} (2015), no.~1, {A}rticle 7,
  1--29, \href{https://arxiv.org/abs/1408.4995}{arXiv:1408.4995}.

\bibitem[Cag81]{cagnac1981probleme}
F.~Cagnac, \emph{{Probl{\`e}me de Cauchy sur un cono{\"\i}de
  caract{\'e}ristique pour des {\'e}quations quasi-lin{\'e}aires}}, Ann. Mat.
  Pura Appl. \textbf{129} (1981), no.~1, 13--41.

\bibitem[CBCMG11]{choquet2011existence}
Y.~Choquet-Bruhat, P.~T. Chru\'{s}ciel, and J.~M. Mart{\'i}n-Garc{\'i}a,
  \emph{An existence theorem for the {C}auchy problem on a characteristic cone
  for the {E}instein equations}, Complex Analysis and Dynamical Systems IV:
  Part 2. General Relativity, Geometry, and PDE, Contemporary Mathematics, vol.
  554, 2011, \href{https://arxiv.org/abs/1006.5558}{arXiv:1006.5558},
  pp.~73--81.

\bibitem[CCW14]{cabet2014characteristic}
A.~Cabet, P.~T. Chru\'{s}ciel, and R.~T. Wafo, \emph{{On the characteristic
  initial value problem for nonlinear symmetric hyperbolic systems, including
  {E}instein equations}},
  \href{https://arxiv.org/abs/1406.3009}{arXiv:1406.3009}, 2014.

\bibitem[CL55]{coddington1955theory}
A.~Coddington and N.~Levinson, \emph{{Theory of Ordinary Differential
  Equations}}, International Series in Pure and Applied Mathematics,
  McGraw-Hill, 1955.

\bibitem[CP12]{chrusciel2012many}
P.~T. Chru\'{s}ciel and T.-T. Paetz, \emph{{The many ways of the characteristic
  {C}auchy problem}}, Class.\ Quantum Grav. \textbf{29} (2012), 145006,
  \href{https://arxiv.org/abs/1203.4534}{arXiv:1203.4534}.

\bibitem[DH72]{duistermaat1972fourier}
J.~J. Duistermaat and L.~H{\"o}rmander, \emph{Fourier integral operators.
  {I}{I}}, Acta Math. \textbf{128} (1972), 183--269.

\bibitem[DMP17]{dappiaggi2017hadamard}
C.~Dappiaggi, V.~Moretti, and N.~Pinamonti, \emph{{Hadamard States From
  Light-like Hypersurfaces}}, SpringerBriefs in Mathematical Physics, Springer,
  2017, \href{https://arxiv.org/abs/1706.09666}{arXiv:1706.09666}.

\bibitem[Dos02]{dossa2002solutions}
M.~Dossa, \emph{Solutions {$C^\infty$} d'une classe de probl{\`e}mes de
  {C}auchy quasi-lin{\'e}aires hyperboliques du second ordre sur un
  cono{\"\i}de caract{\'e}ristique}, Ann. Fac. Sci. Toulouse Math., S{\'e}rie 6
  \textbf{11} (2002), no.~3, 351--376.

\bibitem[EG15]{evans2015measure}
L.~C. Evans and R.~F. Gariepy, \emph{{Measure Theory and Fine Properties of
  Functions}}, 2nd ed., Textbooks in Mathematics, CRC Press, 2015.

\bibitem[Fri75]{friedlander1975wave}
F.~G. Friedlander, \emph{{The Wave Equation on a Curved Space-Time}}, Cambridge
  Monographs in Mathematical Physics, Cambridge University Press, 1975.

\bibitem[Gal86]{galloway1986curvature}
G.~J. Galloway, \emph{Curvature, causality and completeness in space-times with
  causally complete spacelike slices}, Math. Proc. Camb. Philos. Soc.
  \textbf{99} (1986), no.~2, 367--375.

\bibitem[Gin09]{ginoux2009linear}
N.~Ginoux, \emph{Linear wave equations}, Quantum {F}ield {T}heory on {C}urved
  {S}pacetimes: {C}oncepts and {M}athematical {F}oundations (C.\ B{\"a}r and
  K.\ Fredenhagen, eds.), Lecture Notes in Physics, vol. 786, Springer, 2009,
  pp.~59--84.

\bibitem[GV14]{galloway2014achronal}
G.~J. Galloway and C.~Vega, \emph{Achronal limits, {L}orentzian spheres, and
  splitting}, Ann. H. Poincar{\'e} \textbf{15} (2014), no.~11, 2241--2279,
  \href{https://arxiv.org/abs/1211.2460}{arXiv:1211.2460}.

\bibitem[GW16]{gerard2016construction}
C.~G{\'e}rard and M.~Wrochna, \emph{Construction of {H}adamard states by
  characteristic {C}auchy problem}, Anal. PDE \textbf{9} (2016), no.~1,
  111--149, \href{https://arxiv.org/abs/1409.6691}{arXiv:1409.6691}.

\bibitem[HE73]{hawking1973large}
S.~W. Hawking and G.~F.~R. Ellis, \emph{{The Large Scale Structure of
  Space-Time}}, Cambridge University Press, 1973.

\bibitem[H{\"o}r90]{hormander1990remark}
L.~H{\"o}rmander, \emph{A remark on the characteristic {C}auchy problem}, J.
  Funct. Anal. \textbf{93} (1990), no.~2, 270--277.

\bibitem[H{\"o}r97]{hormander1997lectures}
\bysame, \emph{{Lectures on Nonlinear Hyperbolic Differential Equations}},
  Math{\'e}matiques et Applications, Springer, 1997.

\bibitem[JS02]{junker2002adiabatic}
W.~Junker and E.~Schrohe, \emph{Adiabatic vacuum states on general spacetime
  manifolds: Definition, construction, and physical properties}, Ann. H.
  Poincar{\'e} \textbf{3} (2002), no.~6, 1113--1181,
  \href{https://arxiv.org/abs/math-ph/0109010}{math-ph/0109010}.

\bibitem[Kha14]{khavkine2014covariant}
I.~Khavkine, \emph{Covariant phase space, constraints, gauge and the {P}eierls
  formula}, Int. J. Mod. Phys. \textbf{29} (2014), no.~5, 1430009,
  \href{https://arxiv.org/abs/1402.1282}{arXiv:1402.1282}.

\bibitem[KL16]{kay2016non-existence}
B.~S. Kay and U.~Lupo, \emph{Non-existence of isometry-invariant {H}adamard
  states for a {K}ruskal black hole in a box and for massless fields on 1+1
  {M}inkowski spacetime with a uniformly accelerating mirror}, Class.\ Quantum
  Grav. \textbf{33} (2016), no.~21, 215001,
  \href{https://arxiv.org/abs/1502.06582}{arXiv:1502.06582}.

\bibitem[Kup87]{kupeli1987null}
D.~N. Kupeli, \emph{On null submanifolds in spacetimes}, Geom. Dedic.
  \textbf{23} (1987), no.~1, 33--51.

\bibitem[KW91]{kay1991theorems}
B.~S. Kay and R.~M. Wald, \emph{Theorems on the uniqueness and thermal
  properties of stationary, nonsingular, quasifree states on spacetimes with a
  bifurcate {K}illing horizon}, Phys. Rep. \textbf{207} (1991), no.~2, 49--136.

\bibitem[Lee13]{lee2013introduction}
J.~M. Lee, \emph{{Introduction to Smooth Manifolds}}, 2nd ed., Graduate Texts
  in Mathematics, Springer, 2013.

\bibitem[Ler53]{leray1953hyperbolic}
J.~Leray, \emph{{Hyperbolic Differential Equations}}, Princeton Lecture Notes,
  Institute for Advanced Study, Princeton University (mimeographed), 1953.

\bibitem[{Ler}17]{lerner2017unique}
N.~{Lerner}, \emph{Unique continuation through transversal characteristic
  hypersurfaces}, \href{https://arxiv.org/abs/1601.07814}{arXiv:1601.07814},
  2017.

\bibitem[LS90]{loomis1990advanced}
L.~H. Loomis and S.~Sternberg, \emph{{Advanced Calculus}}, 2nd ed., Jones and
  Bartlett Publishers, Burlington, 1990.

\bibitem[Lup15]{lupo2015aspects}
U.~Lupo, \emph{Aspects of (quantum) field theory on curved spacetimes,
  particularly in the presence of boundaries}, Ph.D. thesis, University of
  York, 2015.

\bibitem[MzH90]{mullerzumhagen1990characteristic}
H.~M{\"u}ller~zum Hagen, \emph{Characteristic initial value problem for
  hyperbolic systems of second order differential equations}, Ann. Inst. H.
  Poincar{\'e} -- Phys. Th{\'e}or. \textbf{53} (1990), no.~2, 159--216.

\bibitem[MzHS77]{mullerzumhagen1977characteristic}
H.\ M{\"u}ller~zum Hagen and H.-J. Seifert, \emph{On characteristic
  initial-value and mixed problems}, Gen. Rel. Gravit. \textbf{8} (1977),
  no.~4, 259--301.

\bibitem[Nic07]{nicolaescu2007lectures}
L.~I. Nicolaescu, \emph{{Lectures on the Geometry of Manifolds}}, 2nd ed.,
  World Scientific, 2007, \url{https://www3.nd.edu/~lnicolae/Lectures.pdf}.

\bibitem[Nic16]{nicolas2016conformal}
J.-P. Nicolas, \emph{Conformal scattering on the {S}chwarzschild metric}, Ann.
  Inst. Fourier \textbf{66} (2016), no.~3, 1175--1216,
  \href{https://arxiv.org/abs/1312.1386}{arXiv:1312.1386}.

\bibitem[Obe86]{oberguggenberger1986products}
M.~Oberguggenberger, \emph{Products of distributions}, J. Reine Angew. Math.
  \textbf{365} (1986), 1--11.

\bibitem[O'N83]{oneill1983semi-riemannian}
B.~O'Neill, \emph{{Semi-Riemannian Geometry With Applications to Relativity}},
  Pure and Applied Mathematics, Academic Press, 1983.

\bibitem[OW15]{ortner2015fundamental}
N.~Ortner and P.~Wagner, \emph{Fundamental {S}olutions of {L}inear {P}artial
  {D}ifferential {O}perators: {T}heory and {P}ractice}, Springer, 2015.

\bibitem[Pee59]{peetre1959characterisation}
J.~Peetre, \emph{Une caract{\'e}risation abstraite des op{\'e}rateurs
  diff{\'e}rentiels}, Math. Scand. \textbf{7} (1959), no.~1, 211--218,
  \textit{R{\'e}ctification} \textbf{8} (1960), no.\ 1, 116--120.

\bibitem[Ren90]{rendall1990reduction}
A.~D. Rendall, \emph{Reduction of the characteristic initial value problem to
  the {C}auchy problem and its applications to the {E}instein equations}, Proc.
  R. Soc. Lond. A \textbf{427} (1990), no.~1872, 221--239.

\bibitem[Ren92]{rendall1992stability}
\bysame, \emph{Stability in the characteristic initial value problem},
  Relativity {T}oday (Z.~Perj\'{e}s, ed.), Commack, New York: Nova Science,
  1992, pp.~57--64.

\bibitem[Rin09]{ringstrom2009cauchy}
H.~Ringstr{\"o}m, \emph{{The Cauchy Problem in General Relativity}}, ESI
  lectures in mathematics and physics, European Mathematical Society, 2009.

\bibitem[See62]{seeley1962distributions}
R.~T. Seeley, \emph{Distributions on {S}urfaces}, Toegepaste Wiskunde, no. TW
  78, Stichting Mathematisch Centrum Amsterdam, 1962.

\bibitem[Sta11]{stampfer2011pullback}
F.~Stampfer, \emph{Pullback of measures on {R}iemannian manifolds}, J. Math.
  Anal. Appl. \textbf{381} (2011), no.~2, 812--820.

\bibitem[Tay06]{taylor2006measure}
M.~E. Taylor, \emph{{Measure Theory and Integration}}, Graduate Studies in
  Mathematics, American Mathematical Society, Providence, 2006.

\bibitem[Tre11]{treude2011master}
J.-H. Treude, \emph{Ricci curvature comparison in {R}iemannian and {L}orentzian
  geometry}, Diploma thesis, Albert-Ludwigs-Universit\"at Freiburg, 2011.

\bibitem[Wag10]{wagner2010distributions}
P.~Wagner, \emph{Distributions supported by hypersurfaces}, Appl. Anal.
  \textbf{89} (2010), no.~8, 1183--1199.

\bibitem[Wal84]{wald1984general}
R.~M. Wald, \emph{{General Relativity}}, University of Chicago Press, 1984.

\bibitem[Wal94]{wald1994quantum}
\bysame, \emph{{Quantum Field Theory in Curved Spacetime and Black Hole
  Thermodynamics}}, Chicago Lectures in Physics, University of Chicago Press,
  1994.

\bibitem[Whi34]{whitney1934analytic}
H.~Whitney, \emph{Analytic extensions of differentiable functions defined in
  closed sets}, Trans. Am. Math. Soc. \textbf{36} (1934), no.~1, 63--89.

\end{thebibliography}

\end{document}